\makeatletter \@addtoreset{equation}{section} \makeatother
\newtheorem{theorem}{Theorem}[section]
\newtheorem{lemma}{Lemma}[section]
\newtheorem{remark}{Remark}[section]
\newtheorem{proposition}{Proposition}[section]
\newcommand{\mdet}{\mathrm{det}}
\newcommand{\intd}{\displaystyle\int}
\newcommand{\Tr}{\mathrm{Tr}\,}
\newcommand{\Str}{\mathrm{Str}\,}
\begin{document}

\title{Universality for 1d random band matrices: sigma-model approximation}
\author{ Mariya Shcherbina
\thanks{Institute for Low Temperature Physics, Kharkiv, Ukraine\& Karazin Kharkiv National University, Kharkiv, Ukraine,  e-mail: shcherbi@ilt.kharkov.ua} \and
 Tatyana Shcherbina
\thanks{ Department of Mathematics, Princeton University, Princeton, USA, e-mail: tshcherbyna@princeton.edu. Supported in part by NSF grant DMS-1700009.}}

\date{}
\maketitle
\centerline{\textit{This paper is dedicated to Tom Spencer on the occasion of his 70th birthday.}}
\begin{abstract}
The paper continues the development of  the rigorous supersymmetric transfer matrix approach to the random band matrices started in \cite{SS:den}, \cite{SS:ChP}. We consider random Hermitian block band matrices consisting of $W\times W$ random Gaussian blocks (parametrized by  $j,k \in\Lambda=[1,n]^d\cap \mathbb{Z}^d$) with a
fixed entry's variance $J_{jk}=\delta_{j,k}W^{-1}+\beta\Delta_{j,k}W^{-2}$, $\beta>0$ in each block.  Taking the limit $W\to\infty$ with fixed $n$ and $\beta$, we
derive the sigma-model approximation of the second correlation function  similar to Efetov's one. 
Then, considering the limit $\beta, n\to\infty$, we prove that in the dimension $d=1$ the behaviour  of the sigma-model approximation in the bulk of the spectrum, as $\beta\gg n$,  is determined by the classical Wigner -- Dyson statistics. 
\end{abstract}

\section{Introduction}\label{s:1}
Random band matrices (RBM) represent quantum systems on a large box in $\mathbb{Z}^d$ with random
quantum transition amplitudes effective up to distances of order $W$, which is called a bandwidth.
They are natural intermediate models to study eigenvalue statistics
and quantum propagation in disordered systems as they interpolate between Wigner
matrices and random Schr$\ddot{\hbox{o}}$dinger operators: Wigner matrix ensembles represent mean-field
models without spatial structure, where the quantum transition rates between any
two sites are i.i.d. random variables; in contrast, random Schr$\ddot{\hbox{o}}$dinger operator has only  a random diagonal potential 
 in addition to the deterministic Laplacian on a box in $\mathbb{Z}^d$. 
 
 The density of states $\rho$ of a general class of RBM with $W\gg 1$ is given by the well-known Wigner semicircle law (see
\cite{BMP:91, MPK:92}):
\begin{equation}\label{rho_sc}
\rho(E)=(2\pi)^{-1}\sqrt{4-E^2},\quad E\in[-2,2].
\end{equation}
 The main feature of RBM is that they can be used
to model the celebrated Anderson metal-insulator phase transition in $d\ge 3$. Moreover,
the crossover for RBM can be investigated even in $d = 1$ by varying the bandwidth $W$.

More precisely, the key physical parameter of RBM  is the localization length $\ell_\psi$,
which describes the length scale of the eigenvector $\psi(E)$ corresponding to the energy $E\in (-2,2)$.
The system is called delocalized if for all $E$ in the bulk of spectrum $\ell_\psi$  is
comparable with the system size, $\ell_\psi\sim n$, and it is called localized otherwise.
Delocalized systems correspond to electric
conductors, and localized systems are insulators.
 
 In the case of 1d RBM there is a fundamental conjecture stating that for every  eigenfunction $\psi(E)$ in the bulk of the
 spectrum  $\ell_\psi$ is
of order $W^2$ (see \cite{Ca-Co:90, FM:91}). In $d=2$, the localization length is expected to be exponentially large in $W$, 
in $d\ge 3$ it is expected to be macroscopic, $\ell_\psi\sim n$, i.e. system is delocalized (for more details on these conjectures see \cite{Sp:12}).

The questions of the localization length  are closely related to the universality conjecture
of the bulk local regime of the random matrix theory. The bulk local regime deals with the behaviour of eigenvalues of $N\times N$
random matrices on the intervals whose length is of the order $O(N^{-1})$.
According to the Wigner -- Dyson universality conjecture, this local behaviour does not depend on the matrix probability
law (ensemble) and is determined only by the symmetry type of matrices (real
symmetric, Hermitian, or quaternion real in the case of real eigenvalues and orthogonal,
unitary or symplectic in the case of eigenvalues on the unit circle).
In terms of eigenvalue statistics
the conjecture about the localization length of RBM in $d=1$ means that  1d RBM in the bulk of the spectrum changes the spectral local behaviour of random operator type with
Poisson local eigenvalue statistics (for $W\ll \sqrt{N}$) to the local
spectral behaviour of the GUE/GOE type (for $W\gg \sqrt{N}$).

The conjecture supported by physical derivation due to Fyodorov and Mirlin (see \cite{FM:91}) based on supersymmetric formalism, 
and also by the so-called Thouless scaling. However, there are only a few partial results on the mathematical level of rigour.
At the present time only some upper and lower bounds for $\ell_\psi$ for the general class of 1d RBM are proved rigorously. 
It is known from the paper \cite{S:09} that $\ell_\psi\le W^8$. Recently this bound was improved in \cite{wegb:16} to $W^7$.
On the other side, for the general Wigner matrices (i.e. $W=n$) the
bulk universality has been proved in \cite{EYY:10, TV:11}, which gives $\ell_\psi \ge W$. 
By a development of the Erd\H{o}s-Yau approach,  there were also obtained some other results, where the localization length is controlled in a rather weak sense, i.e.
the estimates hold for \textit{``most''} eigenfunctions $\psi$ only: $\ell_\psi \ge W^{7/6}$ in \cite{EK:11} and $\ell_\psi\ge W^{5/4}$ in \cite{Yau:12}.
GUE/GOE gap distributions for $W\sim n$ was proved recently in \cite{BEYY:16}.

The study of the decay of eigenfunctions is closely related to properties of the Green function $(H-E-i\varepsilon)^{-1}$
with a small $\varepsilon$. For instance, if $(H-E-i\varepsilon)^{-1}_{ii}$ (without expectation) is bounded for all $i$
and some $E\in (-2,2)$,  then the normalized eigenvector $\psi(E)$ of $H$ is delocalized on scale $\varepsilon^{-1}$ in a sense that
\[
\max_i |\psi_i(E)|^2\lesssim \varepsilon,
\]
and so $\psi$ is supported on at least $\varepsilon^{-1}$ sites. In particular, if $(H-E-i\varepsilon)^{-1}_{ii}$ can be controlled down to
the scale $\varepsilon\sim 1/N$, then the system is in the complete delocalized regime.
Moreover, in view of the bound
\[
\mathbb{E}\{|(H-E-i\varepsilon)^{-1}_{jk}|^2\}\sim C\varepsilon^{-1}\, e^{-\|j-k\|/\ell}
\]
which is supposed to be valid for localized regime, the problem of localization/delocalization reduces to controlling
\[
\mathbb{E}\{|(H-E-i\varepsilon)^{-1}_{jk}|^2\}
\]
for $\varepsilon\sim 1/N$. As will be shown below, similar estimates  of $\mathbb{E}\{|\Tr (H-E-i\varepsilon)^{-1}|^2\}$ for $\varepsilon\sim N^{-1}$ 
are required to work with the correlation functions of RBM.

Despite many attempts, such control  has not been achieved so far. The standard approaches of \cite{EYY:10} and \cite{Yau:12} do 
not seem to work for $\varepsilon \le W^{-1}$, and so cannot give an information about the strong form of delocalization (i.e. for \textit{all} eigenfunctions). 
Classical moment methods, even with a delicate
renormalization approach \cite{S:11}, could not break the barrier $\varepsilon\sim W^{-1}$ either.

Another method, which allows to work with random operators with non-trivial spatial structures, is supersymmetry techniques (SUSY) based
on the representation of the determinant as an integral over the Grassmann variables.  
Combining this representation with the representation of the inverse determinant as an integral over the Gaussian complex field,
SUSY allows to obtain an integral representation for the main spectral characteristics  (such as density of states, second correlation functions, or
the average of an elements of the resolvent) as the averages of certain observables in some SUSY statistical mechanics models containing both
complex and Grassmann variables
(so-called \textit{dual} representation in terms of SUSY).
For instance, according to the properties of the Stieljes transform, the second correlation function $R_{2}$ defined by the equality
\begin{equation}\label{R_2} 
\mathbf{E}\Big\{ \sum_{j_{1}\neq  j_{2}}\varphi
(\lambda_{j_{1}},\lambda_{j_{2}})\Big\}=\int_{\mathbb{R}^{2}} \varphi
(\lambda_{1},\lambda_{2})R_{2}(\lambda_{1},\lambda_{2})
d\lambda_{1} d\lambda_{2},
\end{equation}
where $\{\lambda_j\}$ are eigenvalues of a random matrix, the function $\varphi: \mathbb{R}^{2}\rightarrow \mathbb{C}$ is
bounded, continuous and symmetric in its arguments, and the
summation is over all pairs of distinct integers $
j_{1},j_{2}\in\{1,\ldots,N\}$, can be rewritten as follows
\begin{align}\label{cor=det}
R_{2}(\lambda_1,\lambda_2)=&(\pi N)^{-2}\lim_{\varepsilon\to 0}
\mathbb{E}\{\Im\,\Tr(H-\lambda_1-i\varepsilon)^{-1}\Im\, \Tr (H-\lambda_2-i\varepsilon)^{-1}\}\\ \notag
=&(2i\pi N)^{-2}\lim_{\varepsilon\to 0}
\mathbb{E}\Big\{\Big(\Tr(H-\lambda_1-i\varepsilon)^{-1}-\Tr(H-\lambda_1+i\varepsilon)^{-1}\Big)\\ \notag
&\qquad\qquad\qquad\qquad\quad\quad\times \Big(\Tr(H-\lambda_2-i\varepsilon)^{-1}-\Tr(H-\lambda_2+i\varepsilon)^{-1}
\Big)\Big\},
\end{align}
and since
\begin{align}\label{r_det}
&\mathbb{E}\{\Tr (H-z_1)^{-1}\Tr (H-z_2)^{-1}\}= \frac{d^2}{dz_1'dz_2'}\mathbb{E}\Big\{\frac{ \det(H-z_1)\det(H-z_2))}{\det(H-z_1')\det(H-z_2'))}\Big\}\Big|_{z'=z},
\end{align}
$R_{2}$ can be represented as a sum of derivatives of the expectation of the ratio of four determinants, which we will call the
generalized correlation function. 

The derivation of SUSY integral representation is basically an algebraic step, and usually can be done by the standard algebraic manipulations.
SUSY is widely  used in
the physics literature, but the rigorous analysis of the obtained integral representation is a real mathematical challenge. Usually it is quite difficult, and it requires a powerful 
analytic and statistical mechanics techniques, such as a saddle point analysis, transfer operators, cluster expansions, renormalization
group methods, etc.
However, it can be done rigorously for some special class of RBM.  For instance, by using SUSY
the detailed information about the averaged density of states of a special case of Gaussian RBM in dimension 3 including local semicircle low
at arbitrary short scales  and smoothness in 
energy (in the limit of infinite volume and fixed large band width $W$) was obtained in \cite{DPS:02}. The techniques of that paper were used
in \cite{DL:16} to obtain the same result in 2d. A similar result in 1d was obtained by the SUSY transfer matrix approach in \cite{SS:den}. Moreover,
by applying the SUSY approach in  \cite{TSh:14}, \cite{SS:ChP} the crossover in this model (in 1d) was proved for the correlation functions
of characteristic polynomials. In addition, the rigorous application of SUSY to the Gaussian RBM which has the special block-band structure
was developed in \cite{TSh:14_1}, where the universality of the bulk 
local regime for $W\sim n$ was proved. The block band matrices are the special class of Wegner's orbital models (see \cite{We:79}), i.e.
Hermitian matrices $H_N$ with complex zero-mean random Gaussian entries $H_{jk,\alpha\beta}$,
where $j,k \in\Lambda=[1,n]^d\cap \mathbb{Z}^d$ (they parameterize the lattice sites) and $\alpha, \gamma= 1,\ldots, W$ (they
parameterize the orbitals on each site), such that
\begin{equation}\label{H}
\langle H_{j_1k_1,\alpha_1\gamma_1}H_{j_2k_2,\alpha_2\gamma_2}\rangle=\delta_{j_1k_2}\delta_{j_2k_1}
\delta_{\alpha_1\gamma_2}\delta_{\gamma_1\alpha_2} J_{j_1k_1}
\end{equation}
with
\begin{equation}\label{J_old}
J=1/W+\beta\Delta/W,
\end{equation}
where $W\gg 1$ and $\Delta$ is the discrete Laplacian on $\Lambda$. 
The probability law of $H_N$ can be written in the form
\begin{equation}\label{pr_l}
P_N(d H_N)=\exp\Big\{-\dfrac{1}{2}\sum\limits_{j,k\in\Lambda}\sum\limits_{\alpha,\gamma=1}^W
\dfrac{|H_{jk,\alpha\gamma}|^2}{J_{jk}}\Big\}dH_N.
\end{equation}
Combining the approach of \cite{TSh:14_1} with Green's function comparison strategy the delocalization (in a strong sense) for $W\gg n^{6/7}$  has been proved  in \cite{EB:15} for the block band matrices (\ref{H}) with rather general non-Gaussian element's distribution.

As it was mentioned above, the main advantage of SUSY techniques is that the main spectral
 characteristics of the model (\ref{H}) -- (\ref{J_old}) such as a density of states, $R_{2}$, $\mathbb{E}\{|G_{jk}(E+i\varepsilon)|^2\}$, etc. can
 be expressed via SUSY as the averages of certain observables in nearest-neighbour statistical mechanics models on $\Lambda$. This in particular
 in 1d case allows to combine the SUSY techniques with a transfer matrix approach. The supersymmetric transfer matrix formalism in this context 
 was first suggested by Efetov (see \cite{Ef}) and on a heuristic level it was adapted specifically for RBM
in \cite{FM:94} (see also references therein), although its rigorous application to the main spectral characteristics is quite difficult due to
the complicated structure and non self-adjointness of the corresponding transfer operator. The rigorous application of this method
to the density of states and correlation function of characteristic polynomials was done in \cite{SS:den}, \cite{SS:ChP}.
In this paper we make the next step in the developing of this approach and apply the technique to the so-called sigma-model approximation, 
which is often used by physicists to study complicated statistical mechanics systems.  In such approximation spins take values in
some symmetric space ($\pm 1$ for Ising model, $S^1$ for the rotator, $S^2$ for the classical
Heisenberg model, etc.). It is expected that sigma models have all the qualitative physics
of more complicated models with the same symmetry (for more detailes see, e.g., \cite{Sp:12}).  The sigma-model approximation for RBM was introduced by Efetov (see \cite{Ef}),
and the spins there are $4\times 4$ matrices with both complex and Grassmann entries (this approximation was studied in \cite{FM:91}, \cite{FM:94}).
 Let us mention also that the average conductance for 1d Efetov's sigma-model for RBM was
computed in \cite{SpDZ}. The aim of this paper is to derive the sigma-model approximation for the second correlation function for RBM and 
then analyse it rigorously in the dimension one by the transfer matrix formalism.

The mechanism of the crossover for the sigma-model is essentially the same as for the correlation functions of characteristic polynomials (see \cite{SS:ChP}). 
It is based on the fact that the spectral gap between two largest eigenvalues of the transfer operator  is $\beta^{-1}$ (it corresponds to $W^{-2}$ in \cite{SS:ChP}).
This implies that for $n/\beta\gg 1$ the $n$-th degree of the transfer operator converges to the rank one projection on the eigenvector corresponding to the largest 
eigenvalue, while for $n/\beta\ll 1$ the $n$-th degree of the transfer operator behaves like the multiplication operator. But  the structure of the transfer
operator for the sigma-model is more complicated: now it is a $6\times 6$ matrix kernel whose entries are kernels depending on two unitary $2\times 2$ matrices $U,U'$ and two hyperbolic  $2\times 2$ 
matrices $S, S'$. Hence the spectral analysis in the case of sigma-model is much more involved (see Section \ref{s:5}). We would like to mentioned that
in the case of the second generalized correlation function of the  1d block band matrices ((\ref{H})-(\ref{J_old}) with $\beta=\alpha W$), the transfer operator becomes
$70\times 70$ matrix, whose spectral analysis provides serious structural problems. Thus the analysis of the sigma-model approximation is an important intermediate step.

Set 
 \begin{align}\label{z}
 z_1&=E+i\varepsilon/N+\xi_1/N\rho(E),\quad z_2=
E+i\varepsilon/N+\xi_2/N\rho(E),\\ \notag
z_1^\prime&=E+i\varepsilon/N+\xi_1^\prime/N\rho(E),\quad z_2^\prime=
E+i\varepsilon/N+\xi_2^\prime/N\rho(E),
\end{align}
where $E\in (-2,2)$, $\varepsilon>0$, $\rho(E)$ is defined in (\ref{rho_sc}), and $\xi_1,\xi_2, \xi_1^\prime,
\xi_2^\prime\in [-C,C]\subset\mathbb{R}$ and define
\begin{align}\label{G_2}
\mathcal{R}_{Wn\beta}^{+-}(E,\varepsilon,\xi)&=\mathbf{E}\bigg\{\dfrac{\mdet(H_N-z_1)\mdet(H_N-\overline{z}_2)}
{\mdet(H_N-z_1')\mdet(H_N-\overline{z}_2^\prime)}\bigg\},\\ \notag
\mathcal{R}_{Wn\beta}^{++}(E,\varepsilon,\xi)&=\mathbf{E}\bigg\{\dfrac{\mdet(H_N-z_1)\mdet(H_N-z_2)}
{\mdet(H_N-z_1')\mdet(H_N-z_2^\prime)}\bigg\}
\end{align}
for $\xi=(\xi_1,\xi_2,\xi_1^\prime,\xi_2^\prime)$.

To derive the sigma-model approximation for the model (\ref{H}) -- (\ref{J_old}), we take $\beta$ in (\ref{J_old}) of order $1/W$, i.e. put
\begin{equation}\label{J}
J=1/W+\beta\Delta/W^2, \quad \beta>0.
\end{equation}
 The main result states that in the model (\ref{J}) with fixed $\beta$ and $|\Lambda|$, and with $W\to\infty$, the correlators
$\mathcal{R}_{Wn\beta}^{+-}$ and $\mathcal{R}_{Wn\beta}^{++}$ of (\ref{G_2}) converge to the values given by the sigma-model approximation. More precisely, we get
\begin{theorem}\label{thm:sigma_mod} 
Given $\mathcal{R}_{Wn\beta}^{+-}$ of (\ref{G_2}) ,(\ref{H}) and  (\ref{J}),
with any dimension $d$,  any fixed $\beta$, $|\Lambda|$, $\varepsilon>0$, and 
 $\xi=(\xi_1,\bar\xi_2, \xi_1',\bar\xi_2')\in\mathbb{C}^4$ 
 ($|\Im\xi_j|<\varepsilon\cdot \rho(E)/2$) we have,   as $W\to\infty$:
\begin{align}\label{sigma-mod}
&\mathcal{R}_{Wn\beta}^{+-}(E,\varepsilon, \xi)\to\mathcal{R}_{n\beta}^{+-}(E,\varepsilon, \xi),\quad
\frac{\partial^2\mathcal{R}_{Wn\beta}^{+-}}{\partial\xi_1'\partial\xi_2'}(E,\varepsilon,\xi)\to
\frac{\partial^2\mathcal{R}_{n\beta}^{+-}}{\partial\xi_1'\partial\xi_2'}(E,\varepsilon, \xi),
\\ \hbox{where}\quad&\mathcal{R}_{n\beta}^{+-}(E,\varepsilon, \xi)=
C_{E,\xi}\int \exp\Big\{\dfrac{\tilde\beta}{4}\sum\Str Q_jQ_{j-1}-\dfrac{c_0}{2|\Lambda|}\sum \Str Q_j\Lambda_{\xi, \varepsilon}\Big\} d Q,
\notag\end{align}
  $\tilde\beta=(2\pi\rho(E))^2\beta$, $U_j\in \mathring{U}(2)$, $S_j\in \mathring{U}(1,1)$ (see notation (\ref{U}) below),
\begin{equation*}%\label{c_xi}
C_{E,\xi}=e^{E(\xi_1+\xi_2-\xi_1'-\xi_2')/2\rho(E)},
\end{equation*}
and $Q_j$ are $4\times 4$ supermatrices with commuting
diagonal and anticommution off-diagonal $2\times 2$ blocks
\begin{align}\label{Q}
Q_j=\left(\begin{array}{cc}
U_j^*&0\\
0&S_j^{-1}
\end{array}\right)\left(\begin{array}{cc}
(I+2\hat\rho_j\hat\tau_j)L&2\hat\tau_j\\
2\hat\rho_j&-(I-2\hat\rho_j\hat\tau_j)L
\end{array}\right)\left(\begin{array}{cc}
U_j&0\\
0&S_j
\end{array}\right),
\end{align}
\begin{align*}%\label{dQ}
d Q=\prod d Q_j,\quad d Q_j=(1-2n_{j,1}n_{j,2})\, d \rho_{j,1}d \tau_{j,1}\,d \rho_{j,2}d \tau_{j,2}\, d U_j\,d S_j
\end{align*}
with
\begin{align*}%\label{sig}
&n_{j,1}=\rho_{j,1}\tau_{j,1},
\quad n_{j,2}=\rho_{j,2}\tau_{j,2},\\ \notag 
&\hat\rho_j=\mathrm{diag}\{\rho_{j1},\rho_{j2}\},\quad \hat\tau_j=\mathrm{diag}\{\tau_{j1},\rho_{j2}\},\quad L=\mathrm{diag}\{1,-1\}
%&\hat\rho_j=\left(\begin{array}{cc}
%\rho_{j,1}&0\\
%0&\rho_{j,2}
%\end{array}\right),\quad \hat\tau_j=\left(\begin{array}{cc}
%\tau_{j,1}&0\\
%0&\tau_{j,2}
%\end{array}\right),\quad L=\left(\begin{array}{cc}
%1&0\\
%0&-1
%\end{array}\right).
\end{align*} 
Here $\rho_{j,l}$, $\tau_{j,l}$, $l=1,2$ are anticommuting Grassmann variables,
\[
\Str \left(\begin{array}{cc}
A&\sigma\\
\eta&B
\end{array}\right)=\Tr A-\Tr B,
\]
and 
\begin{align*}
\Lambda_{\xi,\varepsilon}=\mathrm{diag}\,\{\varepsilon-i\xi_1/\rho(E),-\varepsilon-i\xi_2/\rho(E),\varepsilon-i\xi_1'/\rho(E),-\varepsilon-i\xi_2'/\rho(E)\}.
\end{align*}\end{theorem}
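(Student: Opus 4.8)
The plan is to pass from the finite-$W$ superintegral representation of $\mathcal{R}_{Wn\beta}^{+-}$ to the sigma-model integral in (\ref{sigma-mod}) by a steepest-descent analysis in the large parameter $W$. First I would rewrite the ratio of four determinants in (\ref{G_2}) as a Gaussian superintegral: each determinant in the numerator becomes an integral over Grassmann variables and each determinant in the denominator an integral over complex commuting variables, with a four-component supervector (two bosonic, two fermionic) attached to every site $j\in\Lambda$ and every orbital $\alpha=1,\dots,W$. The causality pattern of the $+-$ correlator (one retarded and one advanced resolvent in each of the boson and fermion sectors) fixes the signature of the quadratic form and singles out the non-compact bosonic directions; the hypothesis $|\Im\xi_j|<\varepsilon\rho(E)/2$ guarantees that the relevant Gaussian integrals converge. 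Averaging over the ensemble (\ref{pr_l}) is then immediate and, by (\ref{H}), produces a quartic self-interaction of the superfields weighted by the covariance $J_{jk}$ of (\ref{J}).

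Next I would decouple this quartic interaction by a supersymmetric Hubbard--Stratonovich transformation (equivalently, superbosonization), introducing on each site $j$ a $4\times4$ supermatrix $X_j$ dual to the local orbital bilinear. Because the $W$ orbitals enter symmetrically, the Gaussian integration over the superfields can be carried out explicitly and contributes a factor equal to the exponential of $W$ times a local $\Str\log$, so that the full integrand takes the form $\exp\{W\,\mathcal{L}_0(X)+\mathcal{L}_1(X)\}$. Here $\mathcal{L}_0$ is site-decoupled and comes from the diagonal part $W^{-1}$ of $J$, while $\mathcal{L}_1=O(1)$ collects the neighbour coupling and the source. The crucial point is the inversion $J^{-1}=W\,I-\beta\Delta+O(W^{-1})$ that follows from the scaling (\ref{J}): the off-diagonal part of the quadratic form in the $X_j$ is exactly $\tfrac{\beta}{2}\sum_{jk}\Delta_{jk}\Str(X_jX_k)$, which is $O(1)$ and therefore survives the limit as a finite neighbour coupling, whereas the energy shifts $z_\bullet-E=O((|\Lambda|W)^{-1})$ enter through $\Str\log$ and, after multiplication by $W$, produce an $O(|\Lambda|^{-1})$ source. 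The overall factor $W$ in front of $\mathcal{L}_0$ is what makes the $W\to\infty$ limit a saddle-point problem.

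I would then analyse the saddles of $\mathcal{L}_0$. The self-consistency (saddle) equation is solved by the Stieltjes transform of the semicircle law (\ref{rho_sc}), which is why $\rho(E)$, the combination $\tilde\beta=(2\pi\rho(E))^2\beta$, and the prefactor $C_{E,\xi}$ (coming from the $Q$-independent, linear-in-$(z-E)$ part of $\Str\log$, with $\Re m(E)=-E/2$) appear. For the $+-$ correlator the retarded and advanced blocks sit at complex-conjugate saddle values, so the dominant saddles are not isolated: the residual symmetry acts transitively and the saddle manifold is the orbit of a reference point, compact $\mathring{U}(2)$ in the fermion sector and non-compact $\mathring{U}(1,1)$ in the boson sector, together with its odd (Grassmann) companions. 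This is precisely the manifold parametrized by (\ref{Q}) with $U_j\in\mathring{U}(2)$, $S_j\in\mathring{U}(1,1)$ and nilpotents $\rho_{j,l},\tau_{j,l}$. Restricting $\mathcal{L}_1$ to this manifold, the term $\tfrac{\beta}{2}\sum\Delta_{jk}\Str(X_jX_k)$ reduces, after using $Q_j^2=I$ to discard the $\Delta$-diagonal, to the nearest-neighbour term $\tfrac{\tilde\beta}{4}\sum\Str Q_jQ_{j-1}$, while the source reduces to $-\tfrac{c_0}{2|\Lambda|}\sum\Str Q_j\Lambda_{\xi,\varepsilon}$; the transverse massive fluctuations decouple to leading order and, together with the Grassmann integration, reproduce the normalized measure $dQ_j$ of (\ref{Q}), including the Jacobian $(1-2n_{j,1}n_{j,2})$.

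The hard part is to make this last step rigorous rather than formal. The non-compact $\mathring{U}(1,1)$ directions force a deformation of the bosonic contour through the relevant saddle, and one must verify that the deformed contour remains in the domain of convergence and that all subleading saddles and the far tails contribute only $o(1)$ as $W\to\infty$; this is the analytic core, and it is the analogue of the difficulty already overcome for the characteristic-polynomial correlators in \cite{SS:ChP}. Controlling the transverse Gaussian integral requires that the Hessian of $\mathcal{L}_0$ be non-degenerate in the directions orthogonal to the saddle manifold, so that the fluctuation determinant is finite and $Q$-independent. Finally, since $R_2$ is recovered from $\mathcal{R}_{Wn\beta}^{+-}$ by the differentiations in (\ref{r_det}), the mere pointwise convergence in the first part of (\ref{sigma-mod}) does not suffice; I would deduce the convergence of $\partial^2/\partial\xi_1'\partial\xi_2'$ from the analyticity of $\mathcal{R}_{Wn\beta}^{+-}$ in $\xi$ on the strip $|\Im\xi_j|<\varepsilon\rho(E)/2$ together with locally uniform bounds, so that a Cauchy estimate upgrades the convergence of the function to that of its derivatives.
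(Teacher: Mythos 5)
Your overall strategy coincides with the paper's: dualize the four-determinant ratio into a superintegral, decouple the quartic interaction, and run a saddle-point analysis in $W$ whose dominant critical set is the orbit of $\mathring{U}(2)\times\mathring{U}(1,1)$ together with its Grassmann companions, with the convergence of the $\xi'$-derivatives deduced from analyticity and locally uniform convergence. One structural difference: the paper does not perform a supersymmetric Hubbard--Stratonovich with a single $4\times4$ supermatrix. It applies an ordinary Hermitian $2\times2$ Hubbard--Stratonovich only to the Grassmann--Grassmann bilinears (producing $X_j$), integrates the Grassmann fields out into a determinant, and treats the bosonic bilinears by the bosonization formula of Proposition \ref{p:supboz} (producing $Y_j$); the fermion--boson coupling is carried by the determinant $\det\{J^{-1}_{jk}\mathbf{1}_4-\delta_{jk}(iZ_1+X_j)^{-1}\otimes(Y_jL)\}$, which is only later re-expanded in fresh Grassmann variables. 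This matters for your third paragraph: the $\Str Q_jQ_{j-1}$ coupling is not obtained by restricting a quadratic form in $X$ to the saddle manifold, because its odd--odd part originates from the $O(W^{-1})$ off-diagonal entries of that determinant (the $U_jU_k^*\otimes S_jS_k^{-1}$ terms in (\ref{D})), not from the Hubbard--Stratonovich quadratic form.

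The more substantive issue is that the content of the paper's proof is precisely the part you label ``the hard part'' and leave unexecuted. Concretely, the paper needs Lemma \ref{l:s_point}, which, after deforming the $x$-contours to $iE/2+\mathbb{R}$ and the $b$-contours to $\mathcal{L}_\pm(E)$, classifies the saddle configurations of $(x_{j,1},x_{j,2},b_{j,1},b_{j,2})$ and shows the configurations with $x_{j,1}=x_{j,2}$ contribute $o(1)$ because the corresponding transverse Gaussian integrals vanish identically; and Lemma \ref{l:det_exp}, a power-counting argument in the expansion of $\det\mathcal{D}$ identifying exactly which terms reach order $W^{-2}$ per site, verifying that the cubic terms of $f$ cancel, and producing the factor $(1-2n_{j,1}n_{j,2})$ in $dQ_j$ as well as the constants $c_0$, $\tilde\beta$, $C_{E,\xi}$. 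In particular, ``the transverse massive fluctuations decouple to leading order'' is not automatic here: the massive-mode Gaussian integral couples at the relevant order to the Grassmann expansion of the determinant (the terms $I_1+I_2-I_3$ in the paper) and must be computed rather than merely bounded. Without a substitute for these two lemmas your proposal is a correct roadmap but not a proof.
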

\begin{theorem}\label{t:2}
Given $\mathcal{R}_{Wn\beta}^{++}$ of (\ref{G_2}) ,(\ref{H}) and  (\ref{J}),
with any dimension $d$,  any fixed $\beta$, $|\Lambda|$, $\varepsilon>0$, and 
 $\xi=(\xi_1,\xi_2, \xi_1',\xi_2')\in\mathbb{C}^4$ 
 ($|\Im\xi_j|<\varepsilon\cdot\rho(E)/2$) we have,   as $W\to\infty$:
\begin{align}\label{G++_lim}
&\mathcal{R}_{Wn\beta}^{++}(E,\varepsilon,\xi)\to e^{{ia_+}(\xi_1'+\xi_2'-\xi_1-\xi_2)/{\rho(E)}},\\
&\frac{\partial^2\mathcal{R}_{Wn\beta}^{++}}{\partial\xi_1'\partial\xi_2'}(E,\varepsilon,\xi)\to-a_+^2/\rho^2(E)\cdot e^{{ia_+}(\xi_1'+\xi_2'-\xi_1-\xi_2)/{\rho(E)}},
\qquad a_{+}=(iE+\sqrt{4-E^2})/{2}.
\notag\end{align}
\end{theorem}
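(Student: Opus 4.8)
The plan is to follow the same supersymmetric integral representation that underlies Theorem \ref{thm:sigma_mod}, and then to exploit the essential simplification that occurs when all four spectral parameters $z_1,z_2,z_1',z_2'$ lie on the \emph{same} side of the real axis. First I would write $\mathcal{R}^{++}_{Wn\beta}$ using the Gaussian (complex-field) representation of the two inverse determinants together with the Grassmann representation of the two determinants, average over $H_N$ by means of (\ref{pr_l}), and apply the Hubbard--Stratonovich transformation to decouple the resulting quartic term. This produces an integral over supermatrix fields $Q_j$ (or their bosonic/fermionic components) whose action has a large $O(W)$ part that governs the saddle, plus $\xi$- and $\beta$-dependent corrections of lower order. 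The derivation is structurally identical to the $+-$ case up to the replacement of the mixed signature (encoded by $L$ and by the alternating signs in $\Lambda_{\xi,\varepsilon}$) by the definite one, $+\varepsilon$ in all four entries, reflecting that all $\Im z_j$ carry the same sign.

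Next I would carry out the saddle-point analysis as $W\to\infty$. The self-consistency (semicircle) equation has the two solutions $a_\pm=(iE\pm\sqrt{4-E^2})/2$. The decisive point is that, unlike in the $+-$ case where the bosonic and fermionic blocks select \emph{different} saddles and their relative rotation sweeps out the Goldstone manifold of $Q$-matrices responsible for the nontrivial sigma-model, here the common sign of $\Im z_j$ forces all blocks to sit at the \emph{same} saddle $a_+$. Consequently the saddle is \emph{isolated}: there is no zero mode, and no symmetric-space integration survives in the limit. In particular the nearest-neighbour coupling $\Str Q_jQ_{j-1}$ is constant at the common saddle and, after normalization, drops out, which is exactly why the limit is independent of $\beta$ and $|\Lambda|$.

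Having localized to the single saddle, the remaining integral is Gaussian in the fluctuations, and by supersymmetry the bosonic and fermionic fluctuation superdeterminants cancel, $\Sdet=1$; thus the limit equals the integrand evaluated at the saddle. The only surviving $\xi$-dependence is the term linear in the $z_j$: in the limit each ratio $\det(H_N-z_j)/\det(H_N-z_j')$ contributes $\exp\{-(z_j-z_j')\,\Tr(H_N-E)^{-1}\}$, and $N^{-1}\Tr(H_N-E-i0)^{-1}$ converges to the Stieltjes transform of the semicircle law (\ref{rho_sc}) at $E+i0$, which equals $ia_+$. Inserting the scaling (\ref{z}), $z_j-z_j'=(\xi_j-\xi_j')/(N\rho(E))$, the two factors combine into $e^{ia_+(\xi_1'+\xi_2'-\xi_1-\xi_2)/\rho(E)}$, as claimed. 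The second-derivative statement then follows by differentiating this explicit limit and justifying that $\partial^2/\partial\xi_1'\partial\xi_2'$ commutes with the $W\to\infty$ limit, exactly as for the $+-$ case.

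The main obstacle is the rigorous localization at the isolated saddle in the presence of the non-compact hyperbolic (bosonic) directions: one must deform the contours so that they pass through $a_+$ along the directions of steepest descent while preserving convergence of the bosonic integral, and then show that the contribution from outside a shrinking neighbourhood of the saddle is exponentially small in $W$. A secondary technical point is the uniform control needed to interchange differentiation in $\xi$ with the limit; this is handled, as in Theorem \ref{thm:sigma_mod}, by Cauchy estimates using the analyticity of $\mathcal{R}^{++}_{Wn\beta}$ in the strip $|\Im\xi_j|<\varepsilon\cdot\rho(E)/2$.
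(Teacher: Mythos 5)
Your overall framework (SUSY integral representation, Hubbard--Stratonovich, saddle-point analysis as $W\to\infty$, and Cauchy estimates to interchange the $\xi$-derivatives with the limit) is the same as the paper's, which treats $\mathcal{R}^{++}$ as a direct variant of the $+-$ computation via the analogue (\ref{G+_last}) of (\ref{G_last}). However, there is a genuine gap in your central claim that ``the common sign of $\Im z_j$ forces all blocks to sit at the same saddle $a_+$,'' so that the saddle is isolated, there is no zero mode, and the fluctuation superdeterminant is trivially $1$. The sign of $\Im z_j$ constrains only the \emph{bosonic} contours (both $b_{j,1},b_{j,2}$ are indeed pushed to $\mathcal{L}_+(E)$ and sit at $a_+$); the Grassmann sector has no convergence constraint, its contour $iE/2+\mathbb{R}$ passes through \emph{both} $a_\pm$, and one is left with three competing saddle configurations (Lemma \ref{l:s_point+}): the mixed fermionic saddle $x_{j,1}=a_+,x_{j,2}=a_-$ (which still carries a continuous unitary orbit, i.e.\ a would-be Goldstone manifold), the all-$a_+$ point, and the point $x_{j,1}=x_{j,2}=a_-$, $b_{j,1}=b_{j,2}=a_+$. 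Deciding among them is not a matter of contour deformation: the paper must show by explicit Gaussian fluctuation integrals (the analogue of (\ref{++}), namely (\ref{b++})) that the mixed orbit and the all-$a_+$ point contribute $o(1)$ because the relevant Gaussian moments vanish or carry too many powers of $W^{-1/2}$ from $(x_{j,1}-x_{j,2})^2(b_{j,1}-b_{j,2})^2$ and from $\det\mathcal{D}$.

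Moreover, the saddle that actually survives is \emph{not} the one you describe: it has the fermionic eigenvalues at $a_-$ and the bosonic ones at $a_+$, so the bosonic and fermionic quadratic forms ($a_+^2c_+$ versus $c_-$) do not cancel by a naive supersymmetry argument; the finite constant emerges only after combining the site-wise Gaussian integral $\bigl((2\pi)^2\cdot 4(c_+c_-a_+)^{-2}\bigr)^{|\Lambda|}$ with the normalization $Q^{(2)}_{W,|\Lambda|}$, whose factor $e^{E(\xi_1+\xi_2)/\rho(E)}$ is also what converts the $ia_-(\xi_1+\xi_2)$ term produced at this saddle into the $-ia_+(\xi_1+\xi_2)$ appearing in (\ref{G++_lim}). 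Your Stieltjes-transform heuristic for the $\xi$-dependence does land on the correct exponent, but as written the argument both misidentifies the dominant saddle and omits the step that actually carries the burden of the proof, namely ruling out the competing (partly continuous) saddle families.
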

Note that $Q_j^2=I$ for $Q_j$ of (\ref{Q}) and so the integral in the r.h.s of (\ref{sigma-mod}) is a sigma-model approximation similar to Efetov's one (see \cite{Ef}).

The next theorem describes the behaviour of $\mathcal{R}_{n\beta}^{+-}(E,\varepsilon,\xi)$ of the sigma-model (\ref{sigma-mod})
in the regime $n\to\infty $, $\beta> Cn \log^2 n$:
\begin{theorem}\label{t:1} If $n,\beta\to\infty$ in such a way that $\beta>Cn\log^2n$, then for any fixed $\varepsilon>0$ and 
 $\xi=(\xi_1,\xi_2, \xi_1',\xi_2')\in\mathbb{C}^4$ ($|\Im\xi_j|<\varepsilon\cdot\rho(E)/2$) we have
\begin{align}\label{t1.1}
\mathcal{R}^{+-}_{n\beta}\to &C_{E,\xi}\cdot e^{-c_0(\alpha_1+\alpha_2)}\Big(\delta_1\delta_2(e^{2c_0\alpha_1}-1)/\alpha_1\alpha_2
-(\delta_1+\delta_2)e^{2c_0\alpha_1}/\alpha_2+e^{2c_0\alpha_1}\alpha_1/\alpha_2\Big),\\
\label{alp}
\hbox{where}\quad&\alpha_1=\varepsilon-{i(\xi_1-\xi_2)}/{2\rho(E)},\quad \alpha_2=\varepsilon-{i(\xi_1'-\xi_2')}/{2\rho(E)},\\
&\delta_1={i(\xi_1'-\xi_1)}/{2\rho(E)},\quad\delta_2={i(\xi_2-\xi_2')}/{2\rho(E)}.
\notag\end{align}
\end{theorem}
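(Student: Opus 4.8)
The plan is to analyse the one-dimensional chain integral (\ref{sigma-mod}) by the transfer-operator method and to exploit the spectral gap of order $\beta^{-1}$ announced in the introduction. First I would integrate out the Grassmann variables $\rho_{j,l},\tau_{j,l}$ in the measure $dQ_j$, which turns the $n$-fold integral into (a trace of) the $n$-th power of a transfer operator $\widehat{\mathcal T}$; as recorded in Section~\ref{s:5}, $\widehat{\mathcal T}$ is a $6\times6$ matrix kernel acting on functions of a single pair $(U,S)\in\mathring U(2)\times\mathring U(1,1)$. Schematically
\begin{equation*}
\mathcal R^{+-}_{n\beta}=C_{E,\xi}\,\Tr\widehat{\mathcal T}^{\,n}
\end{equation*}
(with the obvious modification if the chain carries non-periodic boundary terms, which does not affect the limit). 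The undeformed operator $\widehat{\mathcal T}_0$, obtained by setting $\Lambda_{\xi,\varepsilon}=0$, is a heat-kernel type operator of ``time'' $\sim\beta^{-1}$ on the symmetric superspace: its leading eigenvalue is simple with constant eigenfunction, and the gap separating it from the rest of the spectrum is of order $\beta^{-1}$.

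In the regime $\beta>Cn\log^2n$ one has $n\beta^{-1}\to0$, so that $\widehat{\mathcal T}^{\,n}$ does \emph{not} collapse onto the rank-one projection onto the leading eigenfunction; instead the strong coupling locks all spins to a common value and $\widehat{\mathcal T}^{\,n}$ acts, to leading order, as multiplication by the accumulated field weight. Quantitatively I would show that each transfer step perturbs the $\delta$-like coupling by $O(\beta^{-1})$, so that over the $n$ sites the error accumulates to $O(n\beta^{-1})$ in the compact $\mathring U(2)$ sector, while in the non-compact $\mathring U(1,1)$ sector, where the eigenfunctions grow and the $\varepsilon$-regularized resolvent bounds diverge logarithmically, the accumulation is amplified by a factor $\log^2 n$. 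Requiring $n\beta^{-1}\log^2 n\to0$ is exactly what produces the threshold $\beta>Cn\log^2n$, and yields
\begin{equation*}
\mathcal R^{+-}_{n\beta}\;\longrightarrow\;C_{E,\xi}\int dQ\,\exp\Big\{-\frac{c_0}{2}\,\Str Q\,\Lambda_{\xi,\varepsilon}\Big\},
\end{equation*}
the zero-dimensional (single-spin) sigma-model integral.

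It then remains to evaluate this single-spin integral in closed form and to match it with the right-hand side of (\ref{t1.1}). Using the parametrization (\ref{Q}) and the block-diagonal form of $\Lambda_{\xi,\varepsilon}$ one finds
\begin{equation*}
\Str Q\,\Lambda_{\xi,\varepsilon}=\Tr\big[(I+2\hat\rho\hat\tau)L\,U\Lambda_1U^*\big]+\Tr\big[(I-2\hat\rho\hat\tau)L\,S\Lambda_2S^{-1}\big],
\end{equation*}
where $\Lambda_1,\Lambda_2$ are the boson--boson and fermion--fermion blocks of $\Lambda_{\xi,\varepsilon}$. The difference of the two diagonal entries of $\Lambda_1$ equals $2\alpha_1$ and that of $\Lambda_2$ equals $2\alpha_2$, so after integrating out the Grassmann variables (the measure $dQ_j$ supplying the supersymmetric Jacobian that couples the two sectors) the integral factorizes into a compact $\mathring U(2)$ angular integral governed by $\alpha_1$ and a non-compact $\mathring U(1,1)$ angular integral governed by $\alpha_2$, while the Grassmann pairing of the $2\hat\rho\hat\tau$ terms supplies the polynomial prefactors built from the cross-differences $\delta_1,\delta_2$. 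Carrying out the two resulting one-dimensional radial integrals and collecting terms gives precisely (\ref{t1.1}).

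The main obstacle is the second step: the transfer operator $\widehat{\mathcal T}$ is non-self-adjoint and lives over a non-compact superspace, so establishing the gap $\beta^{-1}$ and, above all, controlling $\widehat{\mathcal T}^{\,n}$ \emph{uniformly} in the unbounded $\mathring U(1,1)$ directions is delicate. It is precisely the competition between the $O(n\beta^{-1})$ accumulation and these logarithmically growing non-compact contributions that fixes the sharp threshold $\beta>Cn\log^2n$, and turning this heuristic into quantitative resolvent estimates is the heart of the argument.
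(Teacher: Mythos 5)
Your overall strategy is the same as the paper's: rewrite the chain integral (\ref{sigma-mod}) as the $(n-1)$-st power of a $6\times6$ matrix transfer operator on $L_2(\mathring{U}(2))\otimes L_2(\mathring{U}(1,1))$, observe that for $\beta\gg n\log^2 n$ this power acts to leading order as a multiplication operator rather than collapsing onto a rank-one projection, and evaluate the resulting single-spin integral. However, as written the proposal has several genuine gaps. First, the representation is not a trace: the paper's formula (\ref{trans1}) is a bilinear form $(\mathcal{M}^{n-1}\tilde f,\tilde g)_6$ with boundary vectors built from $\mathcal{F}\cdot(1-n_1n_2)$, and these are not a harmless modification --- in the limit they produce $\int(4n^2F_1F_2-2)F^{2n}\,dU\,dS$, whose $4n^2F_1F_2$ part is exactly what generates the $\delta_1\delta_2$ and $\delta_1+\delta_2$ structure in (\ref{t1.1}); a trace ansatz would lose this. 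Second, before any operator bounds can be proved one must reduce to real $\alpha_1,\alpha_2>\varepsilon/2$ (condition (\ref{cond_xi})) by an analyticity/Schwarz-inequality argument; without this the multipliers $F,F_{1,2}$ of (\ref{F}) need not satisfy $0\le F\le 1$ and the spectrum of $FK_{US}F$ need not lie in $[0,1]$, which is the starting point of every resolvent estimate.

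Third, and most importantly, the quantitative core is missing and is not routine. The paper does not argue via a spectral gap of an undeformed operator but via the contour-integral representation (\ref{repr1}) on the circle $|z|=1+A/n$ and a comparison of $\widehat G=(\widehat M-z)^{-1}$ with $\widehat G_0=(\widehat F^2-z)^{-1}$. This requires (i) Proposition \ref{p:repr}: after a Schur complement eliminating the $\rho_1\tau_2,\rho_2\tau_1$ sector, $\widehat M=\widehat F\widehat K\widehat F$ with $\widehat K$ upper triangular up to $O(\beta^{-1})$, and the off-diagonal entries satisfying $|\widetilde K_p|\le C(1-K_{US})\le -C(\widetilde\Delta_U+\widetilde\Delta_S)/\beta$, proved by Fourier analysis on the groups via the matrix elements of the irreducible representations of $U(2)$ and $U(1,1)$; and (ii) Lemma \ref{l:bG}: $\|\widehat G\|\le C\log^2n/|z-1|$ together with the bounds (\ref{b_1}), whose logarithmic factors arise from conjugating the unbounded multipliers $F_{1,2}$ (growing like $|S_{12}|^2/n$) by $G_*^{1/2}$ after a cutoff at $|S_{12}|^2\le Bn\log n$, plus one more $\log n$ from $\oint_{\omega_A}|dz|/|z-1|$. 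Your heuristic correctly locates the source of the $\log^2 n$ in the non-compact sector, but none of these estimates is supplied, and you yourself flag them as the heart of the argument; without them the proposal is a correct plan rather than a proof.
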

Now Theorems \ref{thm:sigma_mod} -- \ref{t:1} and (\ref{cor=det}) -- (\ref{r_det}) imply the main result of the paper:
\begin{theorem}\label{thm:cor}
In the dimension $d=1$ the behavior of the sigma-model approximation of the second order correlation function (\ref{R_2}) of (\ref{H}), as $\beta\gg n$,
in the bulk of the spectrum coincides with those for the GUE. More precisely, if $\Lambda=[1,n]\cap \mathbb{Z}$ and $H_N$, $N=Wn$ are  matrices (\ref{H}) with $J$ of (\ref{J}), then 
for any $|E|<\sqrt{2}$
\begin{equation}\label{Un}
(N\rho(E))^{-2}
R_2\left(E+\displaystyle\frac{\xi_1}{\rho(E)\,N},
E+\displaystyle\frac{\xi_2}{\rho(E)\,N}\right)\longrightarrow
1-\dfrac{\sin^2 (\pi(\xi_1-\xi_2))}
{\pi^2(\xi_1-\xi_2)^2},
\end{equation}
in the limit first $W\to\infty$, and then $\beta, n\to\infty$, $\beta\ge Cn \log^2 n$.
\end{theorem}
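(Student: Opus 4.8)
The plan is to obtain (\ref{Un}) by feeding the limits of Theorems \ref{thm:sigma_mod}--\ref{t:1} into the exact algebraic identities (\ref{cor=det})--(\ref{r_det}); no new analytic input is needed, and the work is the bookkeeping of the four sign sectors plus a final trigonometric simplification. First I would expand the product in the second form of (\ref{cor=det}): writing $G_j^{\pm}$ for the resolvent of $H_N$ at $z_j$ and at $\bar z_j$, the bracket becomes $\sum_{(s_1,s_2)\in\{+,-\}^2}\mathbf{E}\{\Tr G_1^{s_1}\Tr G_2^{s_2}\}$, and by (\ref{r_det}) each such second moment is $\partial^2_{z_1'z_2'}$ of the associated ratio of four determinants at $z'=z$. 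Under (\ref{z}) the $z'$-derivatives become $\xi'$-derivatives up to the Jacobian $(N\rho(E))^2$, which cancels the normalisation $(N\rho(E))^{-2}$ of (\ref{Un}). The sector $(+,+)$ is $\mathcal{R}_{Wn\beta}^{++}$ and $(+,-)$ is $\mathcal{R}_{Wn\beta}^{+-}$; since $H_N$ is Hermitian, $G^-=\overline{G^+}$, so $(-,-)$ and $(-,+)$ are the conjugates of the first two. Thus the left side of (\ref{Un}) reduces to a fixed multiple of $\Re\big[\partial^2_{\xi_1'\xi_2'}\mathcal{R}_{Wn\beta}^{++}-\partial^2_{\xi_1'\xi_2'}\mathcal{R}_{Wn\beta}^{+-}\big]$ at $\xi'=\xi$.

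Next I would pass to the limits in this fixed combination. Letting $W\to\infty$, Theorem \ref{thm:sigma_mod} supplies the limit of the $+-$ derivative and Theorem \ref{t:2} that of the $++$ derivative; the latter is already in closed form and, crucially, independent of $n,\beta$, so no second limit acts on it. Evaluating $\partial^2_{\xi_1'\xi_2'}$ of $e^{ia_+(\xi_1'+\xi_2'-\xi_1-\xi_2)/\rho(E)}$ at $\xi'=\xi$ gives $-a_+^2/\rho^2(E)$, a smooth $r$-independent number that will feed the constant background. Letting then $n,\beta\to\infty$ with $\beta\ge Cn\log^2 n$, Theorem \ref{t:1} gives the explicit limit (\ref{t1.1}) of $\mathcal{R}_{n\beta}^{+-}$. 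Because all correlators are analytic in $(\xi,\xi')$ on the strip $|\Im\xi_j|<\varepsilon\rho(E)/2$ and the convergence is locally uniform there, Vitali's theorem lets me interchange the $n,\beta$-limit with $\partial^2_{\xi_1'\xi_2'}$, so the limiting $+-$ derivative is just the second derivative of (\ref{t1.1}).

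It then remains to differentiate (\ref{t1.1}) in $\xi_1',\xi_2'$, set $\xi'=\xi$ (where $\delta_1=\delta_2=0$ and $\alpha_2=\alpha_1=\varepsilon-i(\xi_1-\xi_2)/2\rho(E)$), take the real part of the paired $+-,-+$ sectors, and let $\varepsilon\to0$. The key point is that all the $r$-dependence ($r:=\xi_1-\xi_2$) of the sine kernel comes from the single term $\delta_1\delta_2(e^{2c_0\alpha_1}-1)/(\alpha_1\alpha_2)$: since $\delta_1$ depends only on $\xi_1'$ and $\delta_2$ only on $\xi_2'$, one has $\partial^2_{\xi_1'\xi_2'}(\delta_1\delta_2)=1/(4\rho^2(E))$, and multiplying by the surviving prefactor $e^{-2c_0\alpha_1}$, which tends to $e^{2\pi ir}$ as $\varepsilon\to0$ (this is where $c_0=2\pi\rho(E)$ enters), this term produces $(e^{2\pi ir}-1)/r^2$ up to the universal constant. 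Its $2\Re$ is $-4\sin^2(\pi r)/r^2$, i.e. the kernel $-\sin^2(\pi r)/(\pi^2r^2)$ after the prefactors. The other terms of (\ref{t1.1}), the $C_{E,\xi}$-derivatives, and the $++$-value $-a_+^2/\rho^2(E)$ are non-oscillatory and must collapse to the constant $1$.

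I expect this last consolidation to be the main obstacle: it is elementary but the constants and cancellations must close exactly. Three points need care: (i) the denominators $\alpha_1\alpha_2,\alpha_2$ in (\ref{t1.1}) create apparent $r^{-2}$ poles as $\varepsilon\to0$, which must cancel between the oscillatory $e^{2\pi ir}/r^2$ and the smooth $-1/r^2$ so that the limit is the entire function vanishing at $r=0$; (ii) the $E$-dependent contributions from $C_{E,\xi}$ in the $+-$ sector must combine with $\Re(-a_+^2/\rho^2(E))$ from $++$ to leave weight exactly $1$ on the background (using $a_+^2=(2-E^2)/2+iE\sqrt{4-E^2}/2$ and $\rho^2(E)=(4-E^2)/(4\pi^2)$); and (iii) the accumulated factors $(2i\pi N)^{-2}$, the Jacobian $(N\rho(E))^2$ and the $2$ from pairing conjugate sectors must reduce to the single normalisation $1/(4\pi^2)$ that turns $-4\sin^2(\pi r)/r^2$ into $-\sin^2(\pi r)/(\pi^2r^2)$. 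None of this is deep, the heavy analysis residing in Theorems \ref{thm:sigma_mod}--\ref{t:1}, but it is where the identity (\ref{Un}) actually closes.
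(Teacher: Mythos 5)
Your proposal is correct and follows essentially the same route as the paper's Section~\ref{s:4}: the four-sector expansion of (\ref{cor=det}) reduced via conjugation symmetry to $2\Re[\partial^2_{\xi_1'\xi_2'}(\mathcal{R}^{+-}-\mathcal{R}^{++})]$, the interchange of $\partial^2_{\xi_1'\xi_2'}$ with the $\beta,n\to\infty$ limit justified by analyticity and uniform boundedness, the sine kernel emerging from the $\delta_1\delta_2(e^{2c_0\alpha_1}-1)/\alpha_1\alpha_2$ term (the paper writes this as $-(1-e^{2\pi i\theta_\varepsilon})/\theta_\varepsilon^2$ with $\theta_\varepsilon=2i\alpha_1\rho(E)$), and the constant background closing via $a_+^2+a_-^2+2=(a_+-a_-)^2=4\pi^2\rho(E)^2$. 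The consolidation you flag as the remaining obstacle is exactly the computation the paper carries out, and your identified identities are the ones that make it close.
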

\begin{remark} Notice that to prove universality of bulk local regime from the delocalization side of random block band matrices (\ref{H}) -- (\ref{pr_l}) without
a sigma-model approximation one have to take $J$ of (\ref{J_old}), fix $\beta$, and prove (\ref{Un}) in the limit $W,n\to\infty$, $W\gg n$, which is
different from the asymptotic regime considered in the current paper (first $W\to\infty$ with fixed $\beta$, then $\beta\gg n$, $\beta, n\to \infty$). 
\end{remark}
The paper is organized as follows. In Section \ref{s:2} we obtain a convenient  SUSY integral
representation for $\mathcal{R}_{Wn\beta}^{+-}$ and $\mathcal{R}_{Wn\beta}^{++}$ of (\ref{G_2}). In Section \ref{s:3} we prove 
Theorems \ref{thm:sigma_mod} and \ref{t:2}, in Section \ref{s:4} we  derive  Theorem \ref{thm:cor} from Theorems  \ref{t:2} and \ref{t:1}, 
in Section \ref{s:5} we prove Theorem \ref{t:1} modulo some auxiliary result proven in Section \ref{s:6}.

\bigskip 

\textbf{Acknowledgement.}
We are grateful to Yan Fyodorov for his suggestion of this particular
model for the derivation of  sigma-model approximation for RBM.
TS would like to thank Tom Spencer for his explanation of the nature of sigma-model approximation and for many
fruitful discussions without that this paper would never have been written. 

\subsection{Notation}
We denote by $C$, $C_1$, etc. various $|\Lambda|$, $\beta$, $W$-independent quantities below, which
can be different in different formulas. Integrals
without limits denote the integration (or the multiple integration) over the whole
real axis, or over the Grassmann variables.

Moreover,
\begin{itemize}

    \item[$\bullet$] $N=W|\Lambda|;$

%    \item[$\bullet$] $\mathbf{E}\{\ldots\}$ denotes the the expectation with respect to the
%measure (\ref{pr_l});

    \item[$\bullet$] indices $i,j,k$ vary in $\Lambda$ and correspond to the number of the site
    (or the number of the block), index $l$ is always $1$ or $2$ (this is the field index),
    and Greek indices $\alpha, \gamma$ vary from $1$ to $W$ and correspond to the position of
    the element in the block;

%    \item[$\bullet$] big Latin letters (except $C$, which denotes different constants, and $W$) always denote
%    $2\times 2$ matrices;

    \item[$\bullet$] variables $\phi$ and $\Phi$ with different indices are complex variables or vectors
    correspondingly; if $x_j$ means some variable (vector or matrix) which corresponds to the site $j\in \Lambda$, then $x$ means
    vector $\{x_j\}_{j\in\Lambda}$,  $dx=\prod dx_j$, and $dx_j$ means the product of the differentials which correspond to functionally
    independent coefficients of $x_j$;

    \item[$\bullet$] variables $\psi$, $\Psi$, $\rho$, and $\tau$ with different indices are Grassmann variables or vectors or matreces
    correspondingly; if $\rho_j$ corresponds to the site $j\in \Lambda$, then $\rho$ means
    vector $\{\rho_j\}_{j\in\Lambda}$,  $d\rho=\prod d\rho_j$, and $d\rho_j$ means the product of the differentials which correspond the components
    (for vectors) or entries (for matrices)  taken into the lexicographic order; 
   
  \end{itemize}

\begin{fleqn}[18pt]
    \begin{align}\label{a_pm}
    \bullet \,\,\, & a_{\pm}=\dfrac{iE\pm\sqrt{4-E^2}}{2},\quad c_{\pm}=1+a_{\pm}^{-2},\quad  c_0=
\sqrt{4-E^2}=2\pi\rho(E);\\
    \label{L_pm}
    &L=\hbox{diag}\,\{1,-1\},\quad L_{\pm}=\hbox{diag}\,\{a_+,a_-\};
    \end{align}
  
  \begin{align}\label{U}
   \bullet \,\,\, &
  \mathring{U}(2)=U(2)/U(1)\times U(1),\quad \mathring{U}(1,1)=U(1,1)/U(1)\times U(1),
  \end{align}
   \hskip 1cm where $U(p)$ is a group of $p\times p$ unitary matrices, and $U(1,1)$ is a group of $2\times 2$ hyperbolic 
   
   \noindent \hskip 1cm matrices $S$ such that $S^*LS=L$;
 \end{fleqn}

\begin{fleqn}[18pt]

\begin{align}\label{L_cal}
\bullet \,\,\,\mathcal{L}_\pm(E)&=\Big\{r\Big(iE/2\pm\sqrt{4-E^2}/2\Big)|r\in [0,+\infty)\Big\};
\end{align}

\begin{align}
\bullet \,\,\,\tilde{\beta}=c_0^2\,\beta \label{beta_til};
\end{align}

\begin{align}\label{Z}
\bullet \,\,\,\ 
&Z_1=E\cdot I+i\varepsilon\cdot L/N+\hat{\xi}/N\rho(E), \quad\, Z_2=E\cdot I+i\varepsilon \cdot L/N+\hat{\xi}'/N\rho(E),\\
&Z_1^+=E\cdot I+i\varepsilon \cdot I/N+\hat{\xi}/N\rho(E), \quad Z_2^+=E\cdot I+i\varepsilon\cdot I/N+\hat{\xi}'/N\rho(E),\label{Z+}\\
&\hat\xi=\hbox{diag}\{\xi_1,\xi_2\},\quad  \hat\xi'=\hbox{diag}\{\xi_1',\xi_2'\}.\label{xi_hat}
%\quad\, \,\hat{\xi}=\left(\begin{array}{cc}
%\xi_1&0\\
%0& \xi_2
%\end{array}\right),\quad \hat{\xi}_1=\left(\begin{array}{cc}
%\xi_1&0\\
%0& \xi_2
%\end{array}\right)
%, \quad\hat{\xi}_2=\left(\begin{array}{cc}
%\xi_1'&0\\
%0& \xi_2^\prime
%\end{array}\right).
\end{align}
\end{fleqn}

\section{Integral representations}\label{s:2}
%According to the property of the Stieltjes transform, to prove Theorem \ref{thm:cor}, it suffices to show that
%\begin{multline}\label{main}
%\dfrac{1}{(\pi i N\rho(E))^2}\lim\limits_{\varepsilon\to 0}\lim\limits_{\beta,n\to\infty} \lim\limits_{W\to\infty}\mathbf{E}\Big\{\Im\, \Tr
%(H_N-z_1)^{-1}\cdot \Im\, \Tr (H_N-z_2)^{-1}\Big\}\\
%=1-\dfrac{\sin^2 (\pi(\xi_1-\xi_2))}
%{\pi^2(\xi_1-\xi_2)^2},
%\end{multline}
%where $\beta,n\to\infty$ in such a way that $\beta\ge Cn\log^2 n$.
%
%Since
%\begin{align}\notag
%&(\pi i N\rho(E))^2 F_2(z_1,z_2):=\mathbf{E}\Big\{\Im\, \Tr
%(H_N-z_1)^{-1}\cdot \Im\, \Tr (H_N-z_2)^{-1}\Big\}\\ \label{F_2}
%&=\mathbf{E}\Big\{\Tr
%(H_N-z_1)^{-1}\cdot \Tr
%(H_N-z_2)^{-1}\Big\}+\mathbf{E}\Big\{\overline{\Tr
%(H_N-z_1)^{-1}\cdot \Tr
%(H_N-z_2)^{-1}}\Big\}\\ \notag
%&-\mathbf{E}\Big\{\Tr
%(H_N-z_1)^{-1}\cdot \Tr
%(H_N-\bar z_2)^{-1}\Big\}-\mathbf{E}\Big\{\overline{\Tr
%(H_N-z_1)^{-1}\cdot \Tr
%(H_N-\bar z_2)^{-1}}\Big\},
%\end{align}
%we get
%\begin{multline}\label{F_expr}
%F_2(z_1,z_2)=(2\pi)^{-2}\dfrac{\partial^2}{\partial \xi_1^\prime \partial \xi_2^\prime}\Big(
%\mathcal{R}_{Wn\beta}^{++}(E,\varepsilon,\xi)+\overline{\mathcal{R}_{Wn\beta}^{++}}(E,\varepsilon,\xi)\\-\mathcal{R}_{Wn\beta}^{+-}(E,\varepsilon,\xi)-\overline{\mathcal{R}_{Wn\beta}^{+-}}(E,\varepsilon,\xi)\Big)
%\Big|_{\xi^\prime=\xi},
%\end{multline}
%where $\xi^\prime=\xi$ means $\xi_1^\prime=\xi_1$, $\xi_2^\prime=\xi_2$, and $\mathcal{R}_{Wn\beta}^{++}$, $\mathcal{R}_{Wn\beta}^{+-}$ are
%defined in (\ref{G_2}).
In this section we perform the standard algebraic manipulations to obtain an integral representation for the determinant ratio $\mathcal{R}_{Wn\beta}^{+-}(E,\varepsilon,\xi)$ of (\ref{G_2}).
\begin{proposition}\label{p:int_repr}
For any dimension $d$, the determinant ratio $\mathcal{R}_{Wn\beta}^{+-}(E,\varepsilon,\xi)$ of (\ref{G_2}) can be written as follows:
\begin{align}\label{sup}
\mathcal{R}_{Wn\beta}^{+-}(E,\varepsilon,\xi)&=\dfrac{\mdet^2 J\cdot (-1)^{|\Lambda|W}}{(2\pi^3)^{|\Lambda|}\big((W-1)!(W-2)!\big)^{|\Lambda|}}\displaystyle\int %\prod\limits_{j\in \Lambda} 
dXdY \cdot\exp\big\{
i\sum\limits_{j\in \Lambda} \Tr Y_jLZ_2\big\}\\ \notag
 &\times\exp\Big\{-
\dfrac{1}{2}\sum\limits_{j,k\in\Lambda}J_{jk}\Tr (Y_jL)(Y_kL)-\dfrac{1}{2}\sum\limits_{j,k\in\Lambda}(J^{-1})_{jk}\Tr X_jX_k\Big\}\\ \notag
&\times \mdet
\big\{J^{-1}_{jk}\mathbf{1}_4-\delta_{jk}(iZ_1+X_j)^{-1}\otimes (Y_jL)\big\}_{j,k\in\Lambda}\prod\limits_{j\in\Lambda}\dfrac{\mdet^W (iZ_1+X_j)\mdet^W Y_j }{\mdet^2 Y_j},
\end{align}
where $\{X_j\}_{j\in\Lambda}$ are Hermitian  $2\times 2$ matrices with standard $dX_j$ , $\{Y_j\}_{j\in\Lambda}$ are $2\times 2$ 
positive Hermitian matrices with $dY_j$ of  Proposition \ref{p:supboz}, and $Z_{1,2}$ are defined in (\ref{Z}).

A similar formula is valid for $\mathcal{R}_{Wn\beta}^{++}(E,\varepsilon,\xi)$ with $Y_j$ instead of $Y_jL$ and $Z_l^+$ instead of $Z_l$, $l=1,2$
(see (\ref{Z+})).

\end{proposition}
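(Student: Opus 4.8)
The plan is to carry out the standard supersymmetric algebra: represent the two determinants in the numerator as Grassmann Gaussian integrals and the two inverse determinants in the denominator as complex Gaussian integrals, average over $H_N$, and then decouple the resulting quartic self-interaction. First I would attach to each site $j\in\Lambda$ and orbital $\alpha=1,\dots,W$ a pair of Grassmann variables $\psi^{(1)}_{j\alpha},\psi^{(2)}_{j\alpha}$ and a pair of complex variables $\phi^{(1)}_{j\alpha},\phi^{(2)}_{j\alpha}$, using $\det A=\int e^{-\bar\psi A\psi}\,d\bar\psi\,d\psi$ and $\det A^{-1}=c\int e^{-\bar\phi A\phi}\,d\bar\phi\,d\phi$. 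The two flavours carry the masses $z_1,\bar z_2$ in the numerator and $z_1',\bar z_2'$ in the denominator, which is how the numerator masses enter through $Z_1$ and the denominator masses through $Z_2$. The factors of $i$ and the signs in the bosonic part are fixed by $\Im z_1'>0>\Im\bar z_2'$ so that every complex Gaussian integral converges; this is the origin of the matrix $L$, which records the opposite signs of the two bosonic flavours.

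Since the combined exponent is linear in the entries $H_{jk,\alpha\gamma}$, I would next average over $H_N$ with the covariance (\ref{H}). The Gaussian average produces $\exp\{\tfrac12\sum_{jk}J_{jk}(\cdots)\}$, in which the fields enter only through the $W$-fold orbital sums, i.e.\ through the $2\times2$ overlap matrices with entries $\sum_\alpha\psi^{(a)}_{j\alpha}\bar\psi^{(b)}_{j\alpha}$ (fermionic) and $\sum_\alpha\phi^{(a)}_{j\alpha}\bar\phi^{(b)}_{j\alpha}$ (bosonic), $a,b\in\{1,2\}$; this is forced by the orbital structure $\delta_{\alpha_1\gamma_2}\delta_{\gamma_1\alpha_2}$ in (\ref{H}). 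The post-average action is quadratic in these overlaps with spatial weight $J_{jk}$, plus the linear mass terms.

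Third, I would decouple the quartic terms asymmetrically. For the bosonic overlap I would use superbosonization (the measure $dY_j$ of Proposition \ref{p:supboz}): the $W$ complex variables at each site are traded exactly for the positive Hermitian $2\times2$ matrix $Y_j$, converting the bosonic mass term into $\exp\{i\sum_j\Tr Y_jLZ_2\}$, the bosonic quartic term into $\exp\{-\tfrac12\sum_{jk}J_{jk}\Tr(Y_jL)(Y_kL)\}$, and generating the Jacobian constants $((W-1)!(W-2)!)^{-1}$ and the weight $\det^WY_j/\det^2Y_j$. For the fermionic overlap I would instead use a Hubbard--Stratonovich transformation, introducing Hermitian $2\times2$ matrices $X_j$ with Gaussian weight $\exp\{-\tfrac12\sum_{jk}(J^{-1})_{jk}\Tr X_jX_k\}$, whose normalization yields the prefactor $\det^2J$. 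Integrating out the $W$ copies of the Grassmann fields produces $\prod_j\det^W(iZ_1+X_j)$ and the sign $(-1)^{|\Lambda|W}$, while the residual coupling between the fermionic and bosonic sectors (the Grassmann-valued off-diagonal overlaps) yields the $4\times4$ determinant $\det\{J^{-1}_{jk}\mathbf{1}_4-\delta_{jk}(iZ_1+X_j)^{-1}\otimes(Y_jL)\}$. Collecting the remaining numerical constants gives (\ref{sup}); the $++$ case is identical except that no conjugation is taken, so $L$ is dropped in the bosonic sector ($Y_j$ replacing $Y_jL$) and $Z_l^+$ replaces $Z_l$.

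The main obstacle is the bosonic step. The Gaussian representation of an inverse determinant and the $H_N$-average are routine, and the fermionic Hubbard--Stratonovich is an elementary Gaussian identity; but the superbosonization identity -- replacing the $W$-dimensional $\phi$-integral by the finite-dimensional integral over the positive matrix $Y_j$ with the exact measure of Proposition \ref{p:supboz} -- must be justified rigorously, including its validity for $W\ge2$, the positivity and contour choices, and the precise Jacobian. Equally delicate is the bookkeeping of every factor of $i$ and of the matrix $L$, so that all bosonic integrals are genuinely convergent and the constants $\det^2J$, $(2\pi^3)^{|\Lambda|}$, $((W-1)!(W-2)!)^{|\Lambda|}$ and $(-1)^{|\Lambda|W}$ emerge exactly as stated.
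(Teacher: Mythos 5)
Your plan is correct and follows essentially the same route as the paper: dualize via Grassmann/complex Gaussian integrals (numerator masses $z_1,\bar z_2$ on the fermions giving $Z_1$, denominator masses on the bosons giving $Z_2$, with $L$ enforcing convergence), average over $H_N$ to get an action in the $2\times2$ overlap matrices, then decouple asymmetrically --- Hubbard--Stratonovich with $X_j$ for the fermionic quartic term and the superbosonization identity of Proposition \ref{p:supboz} for the bosonic sector --- and integrate out the Grassmann fields, the mixed fermion--boson coupling producing the $4\times4$ determinant via $\det(1-AB)=\det(1-BA)$. The only remaining work is exactly the constant and sign bookkeeping you flag (e.g.\ the prefactor $\det^2J$ arises as $\det^{-2}J$ from the Hubbard--Stratonovich normalization times $\det^4J$ from factoring $J$ out of the $4\times4$ determinant), which the paper carries out explicitly.
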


\begin{proof}
Introduce complex and Grassmann fields:
\begin{align*}
\Phi_l=\{\phi_{jl}\}^t_{j\in\Lambda},&\quad \phi_{j l}=(\phi_{j l 1}, \phi_{j l 2},\ldots,
\phi_{j l W}),\quad l=1,2,
\quad -\quad \hbox{complex},\\
\Psi_l=\{\psi_{jl}\}^t_{j\in\Lambda}, &\quad \psi_{j l}=(\psi_{j l 1}, \psi_{j l 2},\ldots,
\psi_{j l W}),\quad l=1,2, \quad -\quad \hbox{Grassmann}.
\end{align*}
Using (\ref{G_C}) -- (\ref{G_Gr}) (see Appendix) we can write
\begin{equation*}
\begin{array}{c}
\mathcal{R}_{Wn\beta}^{+-}(E,\varepsilon,\xi)
=\pi^{-2W|\Lambda|}\mathbf{E}\Big\{\displaystyle\int \exp\{i\Psi_1^+(z_1^\prime-H_N)\Psi_1
-i\Psi_2^+(\overline{z}_2'-H_N)\Psi_2\}\\
\times\exp\{i\Phi_1^+(z_1-H_N)\Phi_1-i\Phi_2^+
(\overline{z}_2-H_N)\Phi_2\}d\Phi d\Psi\Big\}\\
=\displaystyle\int d\Phi d\Psi\,\, \exp\Big\{i(z_1^\prime\Psi_1^+\Psi_1
+z_1\Phi_1^+\Phi_1)-i(\overline{z}_2'\Psi_2^+\Psi_2
+\overline{z}_2\Phi_2^+\Phi_2)\Big\}\\
\times\mathbf{E}\Big\{\exp\Big\{-\sum\limits_{j\le k}\sum\limits_{\alpha, \gamma}
\Big(i\Re H_{jk,\alpha\gamma}\chi^+_{jk,\alpha\gamma}
-\Im H_{jk,\alpha\gamma}\chi^-_{jk,\alpha\gamma}\Big)\Big\}\Big\},
\end{array}
\end{equation*}
where $z_l, z_l^\prime$ are defined in (\ref{z}),
\begin{align*}
%&d\Phi=\prod\limits_{j\in\Lambda}\prod\limits_{\alpha=1}^W\prod\limits_{l=1}^2\dfrac{d\Re\phi_{jl\alpha}
%d\Im\phi_{j l\alpha}}{\pi},\quad
%d\Psi=\prod\limits_{j\in\Lambda}\prod\limits_{\alpha=1}^W\prod\limits_{l=1}^2d\overline{\psi}_{jl\alpha}
%d\psi_{j l\alpha},\\
&\chi^{\pm}_{jk,\alpha\gamma}=\eta_{jk,\alpha\gamma}\pm \eta_{kj,\gamma\alpha},\\
&\eta_{jk,\alpha\gamma}=\overline{\psi}_{j 1\alpha}\psi_{k 1\gamma}-
\overline{\psi}_{j 2\alpha}\psi_{k 2\gamma}+\overline{\phi}_{j 1\alpha}\phi_{k 1\gamma}-
\overline{\phi}_{j 2\alpha}\phi_{k 2\gamma},\\
&\eta_{jj,\alpha\alpha}=(\overline{\psi}_{j 1\alpha}\psi_{j 1\alpha}-
\overline{\psi}_{j 2\alpha}\psi_{j 2\alpha}+\overline{\phi}_{j 1\alpha}\phi_{j 1\alpha}-
\overline{\phi}_{j 2\alpha}\phi_{j 2\alpha})/2.
\end{align*}
Averaging over (\ref{pr_l}), we get
\begin{align*}
\mathcal{R}_{Wn\beta}^{+-}(E,\varepsilon,\xi)=&\pi^{-2W|\Lambda|}\int d\Phi d\Psi\,\, \exp\Big\{i(z_1^\prime\Psi_1^+\Psi_1
+z_1\Phi_1^+\Phi_1)-i(\overline{z}_2'\Psi_2^+\Psi_2
+\overline{z}_2\Phi_2^+\Phi_2)\Big\}\\
&\times\exp\Big\{-\sum\limits_{j<k}\sum\limits_{\alpha,\gamma} J_{jk}\,\,
\eta_{jk,\alpha\gamma}\eta_{kj,\gamma\alpha}-\frac{1}{2}\sum\limits_{j, \alpha} J_{jj}\,\,
\eta_{jj,\alpha\alpha}^2\Big\}.
\end{align*}
Thus, 
\begin{equation}\label{G_av}
\begin{array}{c}
\mathcal{R}_{Wn\beta}^{+-}(E,\varepsilon,\xi)=\pi^{-2W|\Lambda|}\intd d\Phi d\Psi\,\, \exp\Big\{i\sum\limits_{j\in \Lambda} \Tr \tilde{X}_jLZ_1+
i\sum\limits_{j\in \Lambda} \Tr \tilde{Y}_jLZ_2\Big\}\\
\times \exp\Big\{\dfrac{1}{2}\sum\limits_{j,k\in\Lambda}J_{jk}\Tr (\tilde{X}_jL)(\tilde{X}_kL)-
\dfrac{1}{2}\sum\limits_{j,k\in\Lambda}J_{jk}\Tr (\tilde{Y}_jL)(\tilde{Y}_kL)\Big\}\\
\times \exp\Big\{-\sum\limits_{j,k\in\Lambda}J_{jk}
\big(\overline{\psi}_{j1}\psi_{k1}(\overline{\phi}_{k1}\phi_{j1}-
\overline{\phi}_{k2}\phi_{j2})+\overline{\psi}_{j2}\psi_{k2}(\overline{\phi}_{k2}\phi_{j2}-
\overline{\phi}_{k1}\phi_{j1})\big)\Big\},
\end{array}
\end{equation}
where $L$, $Z_{1,2}$ are defined in (\ref{L_pm}), (\ref{Z}), and
\begin{align*}
\tilde{X}_j=\left(
\begin{array}{ll}
\psi_{j1}^+\psi_{j1}& \psi_{j1}^+\psi_{j2}\\
\psi_{j2}^+\psi_{j1}& \psi_{j2}^+\psi_{j2}
\end{array}
\right),& \quad \tilde{Y}_j=\left(
\begin{array}{ll}
\phi_{j1}^+\phi_{j1}& \phi_{j1}^+\phi_{j2}\\
\phi_{j2}^+\phi_{j1}& \phi_{j2}^+\phi_{j2}
\end{array}
\right).
%\tilde{\rho}_j=\left(
%\begin{array}{ll}
%\psi_{j1}^+\phi_{j1}& \psi_{j1}^+\phi_{j2}\\
%\psi_{j2}^+\phi_{j1}& \psi_{j2}^+\phi_{j2}
%\end{array}
%\right), &\quad \tilde{\tau}_j=\left(
%\begin{array}{ll}
%\phi_{j1}^+\psi_{j1}& \phi_{j1}^+\psi_{j2}\\
%\phi_{j2}^+\psi_{j1}& \phi_{j2}^+\psi_{j2}
%\end{array}
%\right).
\end{align*}
Using the standard Hubbard-Stratonovich transformation, we obtain
\begin{multline}\label{Hub}
\big(2\pi^2\big)^{|\Lambda|}\mdet^{2}J\cdot\exp\Big\{\dfrac{1}{2}\sum\limits_{j,k\in\Lambda}J_{jk}\Tr (\tilde{X}_jL)(\tilde{X}_kL)\Big\}\\ 
=\int \exp\Big\{-\dfrac{1}{2}\sum\limits_{j,k\in\Lambda}(J^{-1})_{jk}\Tr X_jX_k+
\sum\limits_{j\in\Lambda}\Tr X_j\big(\tilde{X}_jL\big)\Big\}dX,
\end{multline}
where $X_j$ are $2\times 2$ Hermitian matrices with the standard measure $dX_j$. 

Substituting (\ref{Hub}) to (\ref{G_av}) and integrating over $d\Psi$ (see (\ref{G_Gr})), we get
\begin{eqnarray}\label{G_M}
\mathcal{R}_{Wn\beta}^{+-}(E,\varepsilon,\xi)&=\dfrac{\mdet^{-2}J}{\big(2\pi^{2(1+W)}\big)^{|\Lambda|}}
\int \exp\Big\{
i\sum_{j\in \Lambda} \Tr \tilde{Y}_jLZ_2-
\dfrac{1}{2}\sum_{j,k\in\Lambda}J_{jk}\Tr (\tilde{Y}_jL)(\tilde{Y}_kL)\Big\}\\ \notag
&\times \exp\Big\{-\dfrac{1}{2}\sum\limits_{j,k\in\Lambda}(J^{-1})_{jk}\Tr X_jX_k\Big\}\cdot \mdet\, M\cdot\,d\Phi\,
 \, dX
\end{eqnarray}
with $M=M^{(1)}-M^{(2)}$, where $M^{(1)}$ and $M^{(2)}$ are  $2W|\Lambda|\times 2W|\Lambda|$ matrices with entries
\begin{align}
\notag
M^{(1)}_{j\alpha l, k\gamma l'}&=\delta_{jk}\delta_{\alpha\gamma} (iZ_1+X_j)_{ll'}L_{ll},\quad
j,k\in \Lambda,\,\,\alpha,\gamma=1,\ldots, W,\,\,l,l'=1,2, \\ \label{M_1,2}
M^{(2)}_{j\alpha l, k\gamma l'}&=J_{jk}\delta_{ll'} L_{ll}\sum\limits_{\nu=1}^2\varphi_{j\alpha\nu}\overline{\varphi}_{k\gamma\nu} L_{\nu\nu}.
\end{align}
We can rewrite
\begin{equation}\label{mdet}
\mdet M=\mdet M^{(1)}\cdot \mdet \Big(1-\big(M^{(1)}\big)^{-1}M^{(2)}\Big)=:\mdet M^{(1)}\cdot \mdet \Big(1-\mathcal{M}\Big)
\end{equation}
with
\begin{equation}\label{Mcal}
\mathcal{M}_{j\alpha l, k\gamma l'}=J_{jk}(iZ_1+X_j)_{ll'}^{-1}\sum\limits_{\nu=1}^2\varphi_{j\alpha\nu}\overline{\varphi}_{k\gamma\nu} L_{\nu\nu}.
\end{equation}
Note that $\mathcal{M}=AB$, where
\begin{align}\notag
A_{j\alpha l, k\sigma l'}&=J_{jk}(iZ_1+X_j)^{-1}_{ll'}\,\varphi_{j\alpha\sigma}, \quad j,k\in \Lambda,\,\,\alpha,\gamma=1,\ldots, W,\,\,l,l',\sigma=1,2,\\ \label{AB}
B_{j\sigma l, k\alpha l'}&=\delta_{jk}\delta_{ll'}L_{\sigma\sigma}\,\overline{\varphi}_{k\alpha\sigma}. 
\end{align}
Therefore, using that $\mdet (1-AB)=\mdet (1-BA)$, (\ref{Mcal}), and (\ref{AB}), we get
\begin{equation}\label{ch_M}
\mdet \Big(1-\mathcal{M}\Big)=\mdet \Big(1-BA\Big)=:\mdet \Big(1-\tilde{\mathcal{M}}\Big),
\end{equation} 
where
\begin{align}\label{Mcal_til}
\tilde{\mathcal{M}}_{j\sigma l, k\sigma' l'}&=\sum\limits_{p,\alpha,\nu}B_{j\sigma l, p\alpha\nu} A_{p\alpha\nu, k\sigma' l'}=
J_{jk}(iZ_1+X_j)^{-1}_{ll'}\sum\limits_{\alpha=1}^W\overline{\varphi}_{j\alpha\sigma}\varphi_{j\alpha\sigma'}L_{\sigma\sigma}\\ \notag
&=J_{jk}(iZ_1+X_j)^{-1}_{ll'} (\tilde{Y}_jL)_{\sigma\sigma'}.
\end{align}
This yields
\begin{align}\label{det_fin}
\mdet \Big(1-\tilde{\mathcal{M}}\Big)&=\mdet \big\{\delta_{j,k}\mathbf{1}_4-J_{j,k}(iZ_1+X_j)^{-1}\otimes (\tilde{Y}_jL)\big\}_{j,k\in\Lambda}\\ \notag
&=\mdet^4 J\cdot \mdet\big\{J^{-1}_{jk}\mathbf{1}_4-\delta_{jk}(iZ_1+X_j)^{-1}\otimes (\tilde{Y}_jL)\big\}_{j,k\in\Lambda}.
\end{align}
Besides,
\begin{equation}\label{mdet1}
\mdet M^{(1)}=(-1)^{|\Lambda|W}\prod\limits_{j\in\Lambda}\mdet^W (iZ_1+X_j).
\end{equation}
Now substituting (\ref{M_1,2}) -- (\ref{Mcal}) and (\ref{ch_M}) -- (\ref{mdet1}) to (\ref{G_M}) and
applying the bosonization formula (see Proposition \ref{p:supboz}), we obtain (\ref{sup}).

The formula for $\mathcal{R}_{Wn\beta}^{++}(E,\varepsilon,\xi)$ can be obtained  by the same way.
\end{proof}

%\end{proof}

\section{Derivation of  the sigma-model approximation}\label{s:3}
\subsection{Proof of Theorem \ref{thm:sigma_mod}}

Let $\beta$ and $|\Lambda|$ be fixed, and  $W\to \infty$. 

Defining $|\Lambda|\times |\Lambda|$ matrix $R$ as
\begin{equation*}%\label{J_inv}
J^{-1}=W\big(1-\dfrac{\beta}{W}\triangle+\dfrac{\beta^2}{W^2}\triangle^2-\ldots\big)=:W\big(1-\dfrac{\beta}{W}\triangle+\dfrac{1}{W^2}R\big),
\end{equation*}
putting $B_j=W^{-1}Y_jL$, and  shifting $iZ_1+X_j\to X_j$, we can rewrite (\ref{sup}) of Proposition \ref{p:int_repr} as
\begin{eqnarray} \notag
&\mathcal{R}_{Wn\beta}^{+-}(E,\varepsilon,\xi)=Q^{(1)}_{W, |\Lambda|}\displaystyle\int  dXdB\cdot\mdet 
\Big\{\big(\mathbf{1}-\dfrac{\beta}{W}\triangle+\dfrac{1}{W^2}R \big)_{jk}\mathbf{1}_4-\delta_{jk} \cdot X_j^{-1}\otimes B_j\Big\}_{j,k\in\Lambda}\\ \label{sup1}
&\times\exp\Big\{-\dfrac{W}{2}\sum\limits_{j\in \Lambda} \Big(\Tr (B_j-iZ_2)^2+\Tr (X_j-iZ_1)^2\Big)\Big\}\cdot\prod\limits_{j\in\Lambda}\dfrac{\mdet^W X_j\mdet^W B_j }{\mdet^2 B_j}\\ \notag
&\times\exp\Big\{
\dfrac{\beta}{2}\sum\limits_{j\sim k}\Big(\Tr (B_j-B_k)^2-\Tr (X_j-X_k)^2\Big)+\dfrac{1}{2W}\sum\limits_{j, k}
R_{jk}\,\Tr (X_j-iZ_1)(X_k-iZ_1)\Big\},
\end{eqnarray}
where 
\begin{align*}%\label{Q_const}
Q^{(1)}_{W, |\Lambda|}&=\dfrac{\mdet^2 J\cdot W^{2(W+1)|\Lambda|}\cdot e^{-W|\Lambda|\Tr Z_2^2/2}}{(2\pi^3)^{|\Lambda|}\big((W-1)!(W-2)!\big)^{|\Lambda|}}\\ \notag
&=\dfrac{ W^{4|\Lambda|}\cdot e^{2W|\Lambda|-W|\Lambda|\Tr Z_2^2/2}}{(2\pi^2)^{2|\Lambda|}}\cdot \Big(1+O\big(W^{-1}\big)\Big).
\end{align*}
Change the variables to
\begin{align*}%\label{diag}
X_j&=U_j^*\hat{X}_jU_j,\,\,\,\,\,\, \hat{X}_j=\hbox{diag}\,\{x_{j,1},x_{j,2}\},\,\,\,\,  U_j\in \mathring{U}(2),
\quad\,\,\, x_{j,1}, x_{j,2}\in \mathbb{R},\\ \notag
B_j&=S_j^{-1}\hat{B}_jS_j,\,\,\,\, \hat{B}_j=\hbox{diag}\,\{b_{j,1},b_{j,2}\},\quad
S_j\in \mathring{U}(1,1),
\,\,\, b_{j,1}\in \mathbb{R}^+,\, b_{j,2}\in \mathbb{R}^-.
\end{align*}
The Jacobian of such a change is
\[
2^{|\Lambda|}(\pi/2)^{2|\Lambda|}\prod\limits_{j\in\Lambda}(x_{j,1}-x_{j,2})^2
\prod\limits_{j\in\Lambda}(b_{j,1}-b_{j,2})^2.
\]
This and (\ref{sup1}) yield
\begin{align}\label{G_last}
&\mathcal{R}_{Wn\beta}^{+-}(E,\varepsilon,\xi)=Q^{(2)}_{W, |\Lambda|}
\int  dSdU
\int d x\int_{\mathbb{R}_+^{|\Lambda|}\times \mathbb{R}_-^{|\Lambda|}} d b
\cdot \prod\limits_{j\in\Lambda}\dfrac{(x_{j,1}-x_{j,2})^2
(b_{j,1}-b_{j,2})^2}{b_{j,1}^2b_{j,2}^2}\\ \notag
&\times \mdet \,\mathcal{D}(\hat{X}, \hat{B}, U, S)\cdot
\exp\Big\{-W\sum\limits_{j\in\Lambda}\sum\limits_{l=1}^2\left(f(x_{j,l})+f(b_{j,l})\right)\Big\}\\ \notag
&\times \exp\Big\{
\dfrac{\beta}{2}\sum\limits_{j\sim k}\Big(\Tr (S_j^{-1}\hat{B}_jS_j-S_k^{-1}\hat{B}_kS_k)^2-\Tr (U_j^*\hat{X}_jU_j-U_k^*\hat{X}_kU_k)^2\Big)\Big\}\\ \notag
&\times\exp\Big\{\dfrac{1}{2W}\sum\limits_{j, k}
R_{jk}\,\Tr (U_j^*\hat{X}_jU_j-iZ_1)(U_k^*\hat{X}_kU_k-iZ_1)\Big\}\\ \notag
&\times\exp\Big\{\dfrac{i}{|\Lambda|}\sum\limits_{j\in\Lambda}\Big(\Tr U_j^*\hat{X}_jU_j\big(i\varepsilon L+\hat{\xi}/\rho(E)\big)+\Tr S_j^{-1}\hat{B}_jS_j\big(i\varepsilon L+\hat{\xi}'/\rho(E)\big)\Big)\Big\},
\end{align}
where 
\begin{eqnarray}\label{D}
\det \mathcal{D}(\hat{X}, \hat{B}, U, S)=\det\Big\{\Big(\mathbf{1}-\dfrac{\beta}{W}\triangle+\dfrac{1}{W^2}R \Big)_{jk}\mathbf{1}_4-\delta_{jk} \cdot X_j^{-1}\otimes B_j\Big\}_{j,k\in\Lambda}\\ \notag
=\det
\Big\{\delta_{jk}\big(\mathbf{1}-\hat{X}_j^{-1}\otimes \hat{B}_j\big)+\dfrac{1}{W}\Big(-\beta\triangle+\dfrac{1}{W}R \Big)_{jk}\cdot U_jU_k^*\otimes S_jS_k^{-1}\Big\}_{j,k\in\Lambda},
\end{eqnarray}
\begin{align*}%\label{f}
Q^{(2)}_{W, |\Lambda|}&=2^{|\Lambda|}(\pi/2)^{2|\Lambda|}\cdot e^{W|\Lambda|(\Tr Z_1^2+\Tr Z_2^2)/2-W|\Lambda| (2+E^2)} \cdot Q^{(1)}_{W, |\Lambda|}\\ \notag
&=\dfrac{ W^{4|\Lambda|}\cdot e^{E(\xi_1+\xi_2)/\rho(E)}}{2^{3|\Lambda|}\pi^{2|\Lambda|}}\cdot \Big(1+O\big(W^{-1}\big)\Big),\\ 
\notag
f(x)&=x^2/2-iE x-\log x -(2+E^2)/4.
\end{align*}
The constant in $f(x)$ is chosen in such a way  that $\Re f(a_\pm)=0$. Measures $dU_j$, $dS_j$ in (\ref{G_last}) are the Haar measures over $\mathring{U}(2)$ and $\mathring{U}(1,1)$ correspondingly.
%, which
%can be parametrized as follows (see (\ref{mu}) -- (\ref{nu}))
%\begin{align}\label{mu_j}
%&V_j=\left(\begin{array}{ll}
%w_j& v_j\,e^{i\theta_{j}}\\
%-v_j\,e^{-i\theta_{j}}& w_j\\
%\end{array}\right),\quad
%w_j=(1-v_j^2)^{1/2}\\ \notag
%& dV_j=\dfrac{d\theta_{j}}{2\pi}\cdot
%(2v_j dv_j),\quad v_j\in [0,1],\quad \theta_{j}\in [0,2\pi],
%\end{align}
%\begin{align}\label{nu_j}
%&S_j=\left(\begin{array}{ll}
%s_j& S_j\,e^{i\sigma_j}\\
%S_j\,e^{-i\sigma_{j}}& s_j\\
%\end{array}\right),\quad
%s_j=(1+S_j^2)^{1/2}\\ \notag
%& d\mu(S_j)=\dfrac{d\sigma_{j}}{2\pi}\cdot
%(2S_j dS_j),\quad S_j\in [0,\infty),\quad \sigma_{j}\in [0,2\pi].
%\end{align}

Also it is easy to see that for $|E|\le\sqrt{2}$ we can deform the contours of integration as 
\begin{itemize}
\item for $x_{j,1}$, $x_{j,2}$ to $iE/2+\mathbb{R}$;
\item for $b_{j,1}$ to $\mathcal{L}_+(E)$ of (\ref{L_cal});
\item for $b_{j,2}$ to $\mathcal{L}_-(E)$ of (\ref{L_cal}).
\end{itemize}

To prove Theorem \ref{thm:sigma_mod}, we are going to integrate (\ref{G_last}) over the ``fast'' variables: $\{x_{j,l}\}, \{b_{j,l}\}$, $l=1,2$, $j\in \Lambda$.
The first step is the following lemma:
\begin{lemma}\label{l:s_point}
The integral (\ref{G_last}) over $\{x_{j,l}\}, \{b_{j,l}\}$, $l=1,2$, $j\in \Lambda$ can be restricted to the integral over the $W^{-(1-\kappa)/2}$-neighbourhoods (with a  small $\kappa>0$) of the points
\begin{itemize}
\item[I.] $x_{j,1}=a_+$, $x_{j,2}=a_-$ or $x_{j,1}=a_-$, $x_{j,2}=a_+$, $b_{j,1}=a_+$, $b_{j,2}=a_-$ for any $j\in\Lambda$;
\item[II.] $x_{j,1}=x_{j,2}=a_+$, $b_{j,1}=a_+$, $b_{j,2}=a_-$ for any $j\in\Lambda$;
\item[III.] $x_{j,1}=x_{j,2}=a_-$, $b_{j,1}=a_+$, $b_{j,2}=a_-$ for any $j\in\Lambda$.
\end{itemize}
Moreover, the contributions of the points II and III are $o(1)$, as $W\to\infty$.
\end{lemma}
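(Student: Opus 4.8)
The plan is to treat (\ref{G_last}) as a Laplace-type integral in the ``fast'' variables $\{x_{j,l}\},\{b_{j,l}\}$ with large parameter $W$, the controlling factor being $\exp\{-W\sum_{j,l}(f(x_{j,l})+f(b_{j,l}))\}$. First I would locate the saddle points of $f$: since $f'(x)=x-iE-1/x$, the critical points solve $x^2-iEx-1=0$, i.e. $x=a_\pm$ of (\ref{a_pm}), with $f''(a_\pm)=1+a_\pm^{-2}\ne 0$; and by the normalizing constant in $f$ one has $\Re f(a_+)=\Re f(a_-)=0$. I would then verify that on the deformed contours specified just before the lemma the function $\Re f$ is nonnegative and behaves like a genuine steepest-descent contour: on $iE/2+\mathbb{R}$ it has the two minima $a_\pm$, so each $x_{j,l}$ may localize at either saddle, whereas on $\mathcal{L}_+(E)$ (resp.\ $\mathcal{L}_-(E)$ of (\ref{L_cal})) there is a single minimum at $a_+$ (resp.\ $a_-$), which forces $b_{j,1}\to a_+$, $b_{j,2}\to a_-$. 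This explains why every admissible saddle configuration has the $b$-part fixed as in I--III while the $x$-part is free at each site.

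Next I would prove the localization itself. Near a saddle $a$ one has $\Re f(x)\ge c\,|x-a|^2$, while along the tails $\Re f$ grows (the $-\log x$ term forces growth as $x\to 0$ and the quadratic term as $|x|\to\infty$), so outside the $W^{-(1-\kappa)/2}$-neighbourhoods of $\{a_+,a_-\}$ one gets $W\,\Re f\ge c\,W^{\kappa}$ and hence the exponential factor is at most $e^{-cW^{\kappa}}$. To turn this into a bound on the full integrand I must control the remaining factors uniformly: the polynomial prefactor $\prod_j(x_{j,1}-x_{j,2})^2(b_{j,1}-b_{j,2})^2/b_{j,1}^2b_{j,2}^2$, the coupling terms carrying $\beta$ and $R$, and above all the determinant $\mdet\,\mathcal{D}(\hat X,\hat B,U,S)$ of (\ref{D}). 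Since the off-diagonal part of $\mathcal{D}$ is $O(\beta/W)$, I would expand $\mdet\,\mathcal{D}=\prod_j\mdet(\mathbf{1}_4-\hat X_j^{-1}\otimes\hat B_j)\cdot(1+o(1))$ and bound each block; the only possible singularity is at $b_{j,2}\to 0$, but this is far from the saddles and is suppressed by $\Re f(b)\to+\infty$ there. The delicate point is that these bounds must survive the integration over the non-compact manifold $\mathring{U}(1,1)$ of the $S_j$, so I would exploit the nearest-neighbour term $\Tr(S_j^{-1}\hat B_jS_j-S_k^{-1}\hat B_kS_k)^2$ together with the determinant to secure decay in the non-compact directions of $S$; this uniform control is the main obstacle.

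Finally I would compare the surviving saddle configurations to show that types II and III contribute $o(1)$. Since all the points I--III have $\Re f=0$, the exponential weight is identical and the distinction lies entirely in the fluctuation prefactor. At a type-I site the two eigenvalues are distinct, so $(x_{j,1}-x_{j,2})^2\to(a_+-a_-)^2=c_0^2\ne 0$ and the conjugation orbit $U_j^*\hat X_jU_j$ is non-degenerate, giving an $O(1)$ fluctuation contribution after the two Gaussian $x$-integrations. At a type-II or type-III site the eigenvalues coincide at the saddle, so $\hat X_j$ is scalar, the $\mathring{U}(2)$-orbit degenerates, and the Vandermonde $(x_{j,1}-x_{j,2})^2$ vanishes; integrating the $O(W^{-1/2})$ fluctuations then yields an extra factor $O(W^{-1})$ relative to a type-I site. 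Because $|\Lambda|$ is fixed there are only finitely many configurations, and any one containing at least one type-II/III site is smaller than the all-type-I configuration by a power of $W$; summing gives the claimed $o(1)$.
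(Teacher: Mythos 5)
Your first part (localization of the $x$- and $b$-integrals to the neighbourhoods of $a_\pm$ on the deformed contours via the growth of $\Re f$) matches what the paper treats as straightforward, and your remark that the uniformity over the non-compact $\mathring{U}(1,1)$-directions needs care is legitimate. The genuine gap is in the last paragraph, where you claim that a type-II/III configuration is suppressed by a power of $W$ per site because the Vandermonde $(x_{j,1}-x_{j,2})^2$ degenerates. That power counting is wrong: the $W^{-1}$ lost in the Vandermonde at a degenerate site is exactly recovered from the determinant $\mdet\,\mathcal{D}$. At a type-I site the two $O(W^{-1/2})$ rows of $\mathcal{D}$ contribute the odd monomial $(\tilde x_{j,1}/a_+-\tilde b_{j,1})(\tilde x_{j,2}/a_--\tilde b_{j,2})W^{-1}$, which must be completed to an even monomial by two further $O(W^{-1/2})$ terms from the exponent before the Gaussian integration — net $W^{-2}$ against the $W^{2}$ prefactor. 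At a type-II site the small rows are instead $(j,11)$ and $(j,21)$ (see (\ref{d_exp1_1})), producing $(\tilde x_{j,1}/a_+-\tilde b_{j,1})(\tilde x_{j,2}/a_+-\tilde b_{j,1})W^{-1}$, which together with $(\tilde x_{j,1}-\tilde x_{j,2})^2/W$ is already even in every variable and needs no completion — net $W^{-2}$ again, against the same $W^{2}$. So by size alone the type-II/III contribution is $O(1)$, the same order as type I, not smaller.

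What actually kills the type-II/III terms is a cancellation, not an estimate: the leading $O(1)$ contribution reduces to the Gaussian moment
\begin{equation*}
\Big\langle \big(\tilde x_{j,1}/a_+-\tilde b_{j,1}\big)\big(\tilde x_{j,2}/a_+-\tilde b_{j,1}\big)\big(\tilde x_{j,1}-\tilde x_{j,2}\big)^2\Big\rangle_{++},
\end{equation*}
which vanishes identically (the $\tilde b_{j,1}^2$ term and the $\tilde x_{j,1}\tilde x_{j,2}/a_+^2$ term cancel after using $\mathbb{E}[\tilde b_{j,1}^2]=1/(a_+^2c_+)$ and $\mathbb{E}[\tilde x_{j,l}^2]=1/c_+$). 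This identity, displayed as (\ref{++}) in the paper, is the essential step your argument is missing; without it (or an equivalent symmetry/cancellation argument) the claim that II and III contribute $o(1)$ does not follow. Also note that invoking both the vanishing Vandermonde and the "degeneration of the $\mathring U(2)$-orbit" is double counting: the Vandermonde \emph{is} the orbit-volume Jacobian of the diagonalization.
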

\begin{proof}
The proof of the first part of the lemma is straightforward and  based on the fact that $\Re f(z)$ for $z=x+iE/2$, $x\in \mathbb{R}$ has 
two global minimums at $z=a_\pm$, and for $z\in\mathcal{L}_\pm(E)$ has one global minimum at $z=a_\pm$.

To prove the second part of the lemma, consider the neighbourhood of the point II (the point III can be treated in a similar way). 
Change the variables as
\begin{align}\label{change1}
\begin{array}{ll}
x_{j,1}=a_++{\tilde{x}_{j,1}}/{\sqrt{W}}, &x_{j,2}=a_++\tilde{x}_{j,2}/{\sqrt{W}},\\ 
b_{j,1}=a_+\big(1+{\tilde{b}_{j,1}}/{\sqrt{W}}\big), &b_{j,2}=a_-\big(1+{\tilde{b}_{j,2}}/{\sqrt{W}}\big).
\end{array}
\end{align}
This gives the Jacobian $(-1)^{|\Lambda|} W^{-2|\Lambda|}$ and also the additional $W^{-|\Lambda|}$ since
\begin{align*}
x_{j,1} -x_{j,2}=(\tilde{x}_{j,1}-\tilde{x}_{j,2})/\sqrt{W}.
\end{align*}
Together with $Q^{(2)}_{W, |\Lambda|}$ this gives $W^{|\Lambda|}$ in front of the integral (\ref{G_last}).
In addition, expanding $f$ into the series, we get
\begin{align}\label{f_exp}
&f(x_{j,l})=f(a_+)+\dfrac{c_+}{2}\frac{\tilde{x}_{j,l}^2}{W}-\dfrac{1}{2a_+^3}\frac{\tilde{x}_{j,l}^3}{W^{3/2}}+O\Big(\dfrac{\tilde{x}_{j,l}^4}{W^{2}}\Big), \quad l=1,2\\ \notag
&f(b_{j,1})=f(a_+)+\dfrac{a_+^2c_+}{2}\cdot \dfrac{\tilde{b}_{j,1}^2}{W}-\dfrac{1}{2}\cdot \dfrac{\tilde{b}_{j,1}^3}{W^{3/2}}+O\Big(\dfrac{\tilde{b}_{j,1}^4}{W^{2}}\Big),\\ \notag
& f(b_{j,2})=f(a_-)+\dfrac{a_-^2c_-}{2}
\cdot \dfrac{\tilde{b}_{j,2}^2}{W}-\dfrac{1}{2}\cdot \dfrac{\tilde{b}_{j,2}^3}{W^{3/2}}+O\Big(\dfrac{\tilde{b}_{j,2}^4}{W^{2}}\Big),
\end{align}
where
\begin{equation}\label{c_pm}
c_\pm=1+a_\pm^{-2}, \quad f(a_+)=-f(a_-)\in i\mathbb{R}.
\end{equation}

We are going to compute the leading order of the integral over $\{\tilde{x}_{j,l}\}, \{\tilde{b}_{j,l}\}$, $l=1,2$, $j\in \Lambda$. To this end,  we leave 
the quadratic part of $f$ (see (\ref{f_exp})) in the exponent,  expand everything else into the series of $\tilde{x}_{j,l}/\sqrt{W}, \tilde{b}_{j,l}/\sqrt{W}$ 
around the saddle-point $\tilde{x}_{j,l}=\tilde{b}_{j,l}=0$, and compute the Gaussian integral of each term of this expansion.
We are going to prove that all this terms are $o(1)$.

Indeed, consider the expansion of the diagonal elements of $\mathcal{D}(\hat{X}, \hat{B}, U, S)$ of (\ref{D}):
\begin{align}\notag
&d_{j,l1}=1-x_{j,l}^{-1}b_{j,1}=(\tilde{x}_{j,l}/a_+-\tilde{b}_{j,1})/\sqrt{W}+({\tilde{x}_{j,l}\tilde{b}_{j,1}/a_+-\tilde{x}_{j,l}^2/a_+^2}){W}+O\big(W^{-3(1-\kappa)/2}\big),\\ 
&d_{j,l2}=1-x_{j,l}^{-1}b_{j,2}=c_--({\tilde{x}_{j,l}/a_+-\tilde{b}_{j,2}})/a_-^2\sqrt{W}+O\big(W^{-1+\kappa}\big),\quad l=1,2.\label{d_exp1_1}
\end{align}
If we rewrite the determinant of $\mathcal{D}(\hat{X}, \hat{B}, U, S)$  in a standard way, then each summand has strictly one element from each row and column.
Because of (\ref{d_exp1_1}), each element in the rows $(j,11)$ and $(j,21)$ has at least $W^{-1/2}$, and so the expansion of $\mdet \,\mathcal{D}(\hat{X}, \hat{B}, U, S)$ starts from $W^{-|\Lambda|}$. Moreover, to obtain $W^{-|\Lambda|}$ (i.e. non-zero contribution) we must consider the summands of the determinant expansion that have only diagonal elements $d_{j,ls}$ (since non-diagonal elements of $\mathcal{D}(\hat{X}, \hat{B}, U, S)$ are $O(W^{-1})$ or less), and furthermore only the first terms in the expansions (\ref{d_exp1_1}) and all other function
in (\ref{G_last}). Thus we get
\begin{align}\label{++}
C\cdot \Big\langle\prod\limits_{j\in\Lambda} \dfrac{\tilde{x}_{j,1}/a_+-\tilde{b}_{j,1}}{\sqrt{W}}\cdot \dfrac{\tilde{x}_{j,2}/a_+-\tilde{b}_{j,1}}{\sqrt{W}} \cdot (\tilde{x}_{j,1}
-\tilde{x}_{j,2})^2\Big\rangle_{++}+o(1),
\end{align}
where
\begin{align*}
\Big\langle \cdot \Big\rangle_{++}=\int \Big(\cdot\Big) \exp\Big\{-\frac{1}{2}\sum_{j\in\Lambda}\Big({c_+(\tilde x_{j,1}^2+\tilde x_{j,2}^2)}+
{a_+^2c_+ \tilde b_{j,1}^2}
+{a_-^2c_- \tilde b_{j,2}^2}\Big)\Big\}  d\tilde{x}\, d\tilde{b}.
\end{align*}
But it is easy to see that  the Gaussian integral in (\ref{++}) is zero, which completes the proof of the lemma.
\end{proof}

According to Lemma \ref{l:s_point} the main contribution to (\ref{G_last}) is given by the neighbourhoods of the saddle points $x_{j,1}=a_+$, $x_{j,2}=a_-$ or $x_{j,1}=a_-$, $x_{j,2}=a_+$. All such points can be obtained from each other by rotations of $U_j$, 
so we can consider only $x_{j,1}=a_+$, $x_{j,2}=a_-$ for all $j\in\Lambda$. Similarly to the proof of Lemma \ref{l:s_point}, 
change variables as
\begin{align}\label{change}
\begin{array}{ll}
x_{j,1}=a_++{\tilde{x}_{j,1}}/{\sqrt{W}}, &x_{j,2}=a_-+{\tilde{x}_{j,2}}/{\sqrt{W}},\\ 
b_{j,1}=a_+\big(1+{\tilde{b}_{j,1}}/{\sqrt{W}}\big), &b_{j,2}=a_-\big(1+{\tilde{b}_{j,2}}/{\sqrt{W}}\big).
\end{array}
\end{align}
That slightly change the expansions (\ref{f_exp})  and (\ref{d_exp1_1}). We get
\begin{align}\label{f_exp2}
&f(x_{j,2})=f(a_-)+\dfrac{c_-}{2}\cdot \dfrac{\tilde{x}_{j,2}^2}{W}-\dfrac{1}{2a_-^3}\cdot \dfrac{\tilde{x}_{j,2}^3}{W^{3/2}}+O\Big(\dfrac{\tilde{x}_{j,2}^4}{W^{2}}\Big),
\end{align}
\begin{align}\label{d_exp1}
&d_{j,11}=1-x_{j,1}^{-1}b_{j,1}=\dfrac{\tilde{x}_{j,1}/a_+-\tilde{b}_{j,1}}{\sqrt{W}}+
\dfrac{a_+\tilde{x}_{j,1}\tilde{b}_{j,1}-\tilde{x}_{j,1}^2}{a_+^2W}+O\big(W^{-3(1-\kappa)/2}\big),\\ \notag
&d_{j,22}=1-x_{j,2}^{-1}b_{j,2}=\dfrac{\tilde{x}_{j,2}/a_--\tilde{b}_{j,2}}{\sqrt{W}}+\dfrac{a_-\tilde{x}_{j,2}\tilde{b}_{j,2}-\tilde{x}_{j,2}^2}
{a_-^2W}+O\big(W^{-3(1-\kappa)/2}\big),\\
\notag
&d_{j,12}=1-x_{j,1}^{-1}b_{j,2}=c_+-\dfrac{\tilde{x}_{j,1}/a_+-\tilde{b}_{j,2}}{a_+^{2}\sqrt{W}}-\dfrac{a_+\tilde{x}_{j,1}\tilde{b}_{j,2}-\tilde{x}_{j,1}^2}{a_+^4W}
+O\big(W^{-3(1-\kappa)/2}\big), \\ \notag
&d_{j,21}=1-x_{j,2}^{-1}b_{j,1}=c_- -\dfrac{{\tilde{x}_{j,2}/a_--\tilde{b}_{j,1}}}{a_-^{2}\sqrt{W}}-\dfrac{a_-\tilde{x}_{j,2}\tilde{b}_{j,1}-\tilde{x}_{j,2}^2}{a_-^4W}
+O\big(W^{-3(1-\kappa)/2}\big).
\end{align}
The change (\ref{change}) gives the Jacobian $W^{-2|\Lambda|}$, which together with $Q^{(2)}_{W, |\Lambda|}$ gives $W^{2|\Lambda|}$ in front of the integral (\ref{G_last}).
Similarly to the proof of Lemma \ref{l:s_point}, we are going to compute the leading order of the integral (\ref{G_last}) over $\{\tilde{x}_{j,l}\}, \{\tilde{b}_{j,l}\}$, $l=1,2$, $j\in \Lambda$, and so  we leave 
the quadratic part of $f$ (see (\ref{f_exp}) and (\ref{f_exp2})) in the exponent,  expand everything else into the series of $\tilde{x}_{j,l}/\sqrt{W}, \tilde{b}_{j,l}/\sqrt{W}$ 
around the saddle-point $\tilde{x}_{j,l}=\tilde{b}_{j,l}=0$, and compute the Gaussian integral of each term of this expansion. We are going to prove, that the non-zero
contribution is given by the terms having at least $W^{-2|\Lambda|}$.

\begin{lemma}\label{l:det_exp} Formula (\ref{G_last}) can be rewritten as
\begin{align}\label{G_main}
\mathcal{R}_{Wn\beta}^{+-}&(E,\varepsilon,\xi)=(c_0/2\pi)^{2|\Lambda|}C_{E,\varepsilon}\int  dz\, d\tilde\rho\, d\tilde\tau \,d U\, d S\,\\ &\times\exp\Big\{-\dfrac{1}{2}(Mz, z)+W^{1/2}(z,h^0)+
W^{-1/2}(z,h+\zeta/|\Lambda)|)\Big\} \notag \\ \notag
&\times \exp\Big\{{\beta}\sum\Tr\Big(U_j^*\tilde\rho_jS_j-U_{j-1}^*\tilde\rho_{j-1}S_{j-1}\Big) \Big(S_j^{-1}\tilde\tau_jU_j-
S_{j-1}^{-1}\tilde\tau_{j-1}U_{j-1}\Big)\Big\}\\ \notag
&\times \exp\Big\{\sum\Big(c_+ n_{j,12}+c_- n_{j,21}- n_{j,1}/{c_0a_+}+n_{j,2}/{c_0a_-}\Big)-\beta c_0^2\sum (v_j^2+t_j^2)\Big\}\\ \notag
&\times\exp\Big\{\dfrac{ic_0}{2|\Lambda|}\sum\limits_{j\in\Lambda}\Big(\Tr U_j^*L U_j\big(i\varepsilon L+\hat{\xi}/\rho(E)\big)+\Tr S_j^{-1} L S_j\big(i\varepsilon L+
\hat{\xi}'/\rho(E)\big)\Big)\Big\}
+o(1),
\end{align}
where
\begin{align}
\label{rt_tilde}
&\tilde\rho_j=\left(\begin{array}{cc}
\rho_{j,11}&\rho_{j,12}/\sqrt{W}\\
\rho_{j,21}/\sqrt{W}&\rho_{j,22}
\end{array}\right), \quad \tilde\tau_j=\left(\begin{array}{cc}
\tau_{j,11}&\tau_{j,12}/\sqrt{W}\\
\tau_{j,21}/\sqrt{W}&\tau_{j,22}
\end{array}\right)
\\
%&\label{Q_3}Q_{|\Lambda|}^{(3)}=( c_0^2/2\pi)^{2|\Lambda|}\cdot e^{E(\xi_1'+\xi_2'-\xi_1-\xi_2)/2\rho(E)},\\ 
\notag
&n_{j,12}=\rho_{j,12}\tau_{j,12},\quad n_{j,21}=\rho_{j,21}\tau_{j,21},\\ \notag
&n_{j,1}=\rho_{j,11}\tau_{j,11},\quad\,\, n_{j,2}=\rho_{j,22}\tau_{j,22},\\ \notag
&z=(z_{j,11},z_{j,22},z_{j,12},z_{j,21})=(\tilde x_{j,1},\tilde x_{j,2},\tilde b_{j,1},\tilde b_{j,1}),
\end{align}
and 
\begin{align}\label{M}
&M=M_0+W^{-1}\tilde{M}\\
\label{M_0}
&(M_0 z,z)=\sum\limits_{j\in \Lambda}\Big(c_+\tilde x_{j,1}^2+c_-\tilde x_{j,2}^2+a_+^2c_+\tilde b_{j,1}^2+a_-^2c_-\tilde b_{j,2}^2\Big) \\
\label{tilde_M}
&(\tilde{M}z,z)=-2\beta\sum \Big(\tilde{x}_{j,1}\tilde{x}_{j-1,1}+\tilde{x}_{j,2}\tilde{x}_{j-1,2}-a_+^2\tilde{b}_{j,1}\tilde{b}_{j-1,1}-a_-^2\tilde{b}_{j,2}\tilde{b}_{j-1,2}\Big)\\ \notag
&+2 \beta \sum \Big( v_j^2\,(\tilde x_{j,1}-\tilde x_{j,2})(\tilde x_{j-1,1}-\tilde x_{j-1,2})+t_j^2\,(a_+\tilde b_{j,1}-a_-\tilde b_{j,2})(a_+\tilde b_{j-1,1}-a_-\tilde b_{j-1,2})\Big)\\ \notag
&-\sum\Big(\dfrac{4}{c_0^2} (\tilde x_{j,1}\tilde x_{j,2}-\tilde b_{j,1}\tilde b_{j,2})
-2(a_+^{-3}n_{j,12}\tilde x_{j,1}\tilde b_{j,2}+a_-^{-3}n_{j,21}\tilde x_{j,2}\tilde b_{j,1})\Big).
\end{align}
Here $\zeta=\{\zeta_j\}_{j\in\Lambda}$, $\zeta_j=(\zeta_{j,11},\zeta_{j,22}, a_+\zeta_{j,12},a_-\zeta_{j,21})$ with
\begin{align*}%\label{zeta}
&\zeta_{j,11}={-\varepsilon+i\xi_1/\rho(E)}
+2\alpha_1u_j^2,\quad
\zeta_{j,22}={\varepsilon+i\xi_2/\rho(E)}
-2\alpha_1u_j^2,\\ \notag
&\zeta_{j,12}={-\varepsilon+i\xi_1'/\rho(E)}
-2\alpha_2s_j^2,\quad
\zeta_{j,21}={\varepsilon+i\xi_2'/\rho(E)}
+2\alpha_2s_j^2,
%\\ \label{alp}
%&\alpha_1=\varepsilon-{i(\xi_1-\xi_2)}/{2\rho(E)},\quad \alpha_2=\varepsilon-{i(\xi_1'-\xi_2')}/{2\rho(E)}
\end{align*}
where $\alpha_{1,2}$ are defined in (\ref{alp}).
We also denoted
\begin{equation}\label{h}
\begin{array}{ll}
h=\{h_{j,ls}\}_{j\in\Lambda,l,s=1,2},& h^0=\{h^0_{j,ls}\}_{j\in\Lambda,l,s=1,2},\\ 
h_{j,11}={2}/{c_0}-\beta c_0 v_j^2-\beta c_0 v_{j+1}^2+{a_-n_{j,12}}/{a_+^2}, \quad &h^0_{j,11}={n_{j,1}}/{a_+},\\ 
h_{j,22}=-{2}/{c_0}+\beta c_0 v_j^2+\beta c_0 v_{j+1}^2+{a_+n_{j,21}}/{a_-^2}, \quad &h^0_{j,22}={n_{j,2}}/{a_-},\\ 
h_{j,12}={2a_+}/{c_0}-2-\beta c_0 a_+ t_j^2-\beta c_0 a_+ t_{j+1}^2-n_{j,21}{a_+}/{a_-},\quad  &h^0_{j12}=-{n_{j,1}},\\ 
h_{j,21}=-{2a_-}/{c_0}-2+\beta c_0 a_- t_j^2+\beta c_0 a_- t_{j+1}^2-n_{j,12}{a_-}/{a_+},  \quad &h^0_{j,21}=-{n_{j,2}},\\
\end{array}
\end{equation}
and
\begin{align*}
%\label{v,t}
u_j=|(U_j)_{12}|,\quad  v_j=|(U_jU_{j-1}^*)_{12}|,\quad s_j=|(S_j)_{12}|,\quad t_j=|(S_jS_{j-1}^{-1})_{12}|.
\end{align*}
\end{lemma}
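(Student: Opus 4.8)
The plan is to turn the determinant in (\ref{G_last}) into a fermionic integral and then integrate out, order by order in $W^{-1/2}$, the bosonic radial fluctuations $\{\tilde x_{j,l},\tilde b_{j,l}\}$ while the angular variables $U_j,S_j$ stay frozen. First I would represent $\mdet\mathcal{D}(\hat X,\hat B,U,S)$ of (\ref{D}) as a Gaussian Grassmann integral
\[
\mdet\mathcal{D}=\int\exp\Big\{-\sum_{j,k}\tau_j\,\mathcal{D}_{jk}\,\rho_k\Big\}\,\prod_{j,l,s}d\rho_{j,ls}\,d\tau_{j,ls},
\]
one pair $(\rho_{j,ls},\tau_{j,ls})$ per matrix entry, so that (\ref{G_last}) becomes a single coupled boson--fermion integral. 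The site-diagonal block of $\mathcal{D}$ supplies the on-site weights $d_{j,ls}=1-x_{j,l}^{-1}b_{j,s}$ of (\ref{d_exp1}), while the off-diagonal block $\tfrac1W(-\beta\triangle+\tfrac1WR)_{jk}U_jU_k^*\otimes S_jS_k^{-1}$ supplies the inter-site coupling of the $\rho,\tau$ fields.

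Next I would insert the saddle change of variables (\ref{change}) in the bosonic sector and pass to the rescaled off-diagonal Grassmann entries of $\tilde\rho_j,\tilde\tau_j$ in (\ref{rt_tilde}); by Lemma \ref{l:s_point} only the neighbourhoods of the points of type I contribute. The Jacobian $W^{-2|\Lambda|}$ of (\ref{change}) together with $Q^{(2)}_{W,|\Lambda|}\sim W^{4|\Lambda|}$ leaves $W^{2|\Lambda|}$ in front, the Vandermonde factor $\prod(x_{j,1}-x_{j,2})^2(b_{j,1}-b_{j,2})^2/b_{j,1}^2b_{j,2}^2\to c_0^4$ per site, and, using $f(a_+)+f(a_-)=0$ with the $U,S$-independent part of the source exponential and the numerical constants, one reads off the factor $(c_0/2\pi)^{2|\Lambda|}C_{E,\varepsilon}$. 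Expanding each remaining factor in powers of $W^{-1/2}$ about $\tilde x_{j,l}=\tilde b_{j,l}=0$ and keeping the quadratic parts of $f$ from (\ref{f_exp}),(\ref{f_exp2}) produces the diagonal weight $M_0$ of (\ref{M_0}); the bosonic nearest-neighbour $\beta$-coupling of (\ref{G_last}), expanded through $U_j^*\hat X_jU_j$ and $S_j^{-1}\hat B_jS_j$, produces both the inter-site correction $W^{-1}\tilde M$ of (\ref{tilde_M}) and the angular pieces $-\beta c_0^2\sum(v_j^2+t_j^2)$ and the $v_j^2,t_j^2$ entries of $h$; the source exponential, expanded with $\hat X_j=\mathrm{diag}(a_+,a_-)+\mathrm{diag}(\tilde x)/\sqrt W$ (and $\hat B_j$ likewise) and the identity $\mathrm{diag}(a_+,a_-)=\tfrac{iE}2I+\tfrac{c_0}2L$, yields the leading $U,S$-rotated source of (\ref{G_main}) and the subleading $W^{-1/2}(z,\zeta/|\Lambda|)$ carrying the $u_j^2,s_j^2$ shifts of $\zeta$.

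The fermionic contributions are then read from (\ref{d_exp1}): the $O(1)$ weights $d_{j,12}\to c_+$, $d_{j,21}\to c_-$ give the on-site masses $c_+n_{j,12}+c_-n_{j,21}$, while the small weights $d_{j,11},d_{j,22}=O(W^{-1/2})$, which are linear in $z$, supply together with the bosonic expansion the linear-in-$z$ terms $W^{1/2}(z,h^0)$ and $W^{-1/2}(z,h)$ with the Grassmann coefficients $h^0,h$ of (\ref{h}), and the weights $-n_{j,1}/(c_0a_+)$, $n_{j,2}/(c_0a_-)$. Finally the off-diagonal block of $\mathcal{D}$, combined with the rescaling (\ref{rt_tilde}), reorganizes into the discrete-gradient coupling $\beta\sum\Tr(U_j^*\tilde\rho_jS_j-U_{j-1}^*\tilde\rho_{j-1}S_{j-1})(S_j^{-1}\tilde\tau_jU_j-S_{j-1}^{-1}\tilde\tau_{j-1}U_{j-1})$. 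Assembling all terms that are not manifestly $o(1)$ gives the exponent of (\ref{G_main}).

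The main obstacle is the power counting that justifies the truncation, i.e. the claim that the genuine contribution sits at order $W^{-2|\Lambda|}$ and everything else is $o(1)$. Since the change of variables has produced the prefactor $W^{2|\Lambda|}$, each site must ultimately contribute $O(W^{-2})$; the pairs $n_{j,12},n_{j,21}$ are saturated at $O(1)$ by $c_\pm$, so the remaining pairs $n_{j,1},n_{j,2}$ must each be saturated at exactly $O(W^{-1})$. This is the delicate point: the apparently divergent factor $W^{1/2}(z,h^0)$ is admissible only because $h^0$ is Grassmann-valued with entries $n_{j,1}/a_+,\,n_{j,2}/a_-,\,-n_{j,1},\,-n_{j,2}$ and $M_0$ of (\ref{M_0}) is diagonal in the field components, so the self-energy $\tfrac W2(h^0,M_0^{-1}h^0)$ vanishes by $n_{j,l}^2=0$; the surviving saturations of $n_{j,1},n_{j,2}$ arise only from contracting $W^{1/2}(z,h^0)$ against the subleading $W^{-1/2}(z,h+\zeta/|\Lambda|)$ and against the small weights $d_{j,11},d_{j,22}$, with $n_{j,l}^2=0$ forcing the cancellation of the naive $O(1)$ pieces and leaving $O(W^{-1})$ per pair. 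Controlling this interplay, and checking that every discarded monomial (higher Taylor orders of $f$, the $W^{-1}$ remainders of (\ref{d_exp1}), the $R$-term, and all partial saturations of the $4|\Lambda|$ Grassmann pairs) integrates to $o(1)$ while the retained terms assemble precisely into $M$, $h^0$, $h$, $\zeta$ and the two couplings, is the combinatorial bookkeeping that the lemma packages.
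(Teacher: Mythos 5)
Your proposal follows essentially the same route as the paper: a Grassmann representation of $\mdet\mathcal{D}$, the saddle change (\ref{change}) with the $W^{2|\Lambda|}$ prefactor forcing each site to contribute exactly $W^{-2}$, and the observation that the seemingly divergent $W^{1/2}(z,h^0)$ term is harmless because $h^0$ is built from the nilpotents $n_{j,l}$ and $M_0$ is diagonal, so $(M_0^{-1}h^0,h^0)=0$. The only caveat is that what you defer as ``combinatorial bookkeeping'' includes terms sitting exactly at the critical order $W^{-2}$ per site --- the cubic terms of $f$ in (\ref{f_exp}), (\ref{f_exp2}) and the $O(W^{-1})$ pieces $\tilde x_{j,l}\tilde b_{j,l}/W$, $\tilde x_{j,l}^2/W$ of $d_{j,11},d_{j,22}$ --- which are not eliminated by power counting alone but by explicit Gaussian moment identities (the cubic contributions cancel, while the $\tilde x_{j,l}^2$ pieces survive precisely as the averages $c_\pm^{-1}$ producing the weights $-n_{j,1}/c_0a_+ + n_{j,2}/c_0a_-$), exactly as the paper carries out.
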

\begin{proof} Rewriting the determinant in (\ref{D}) in a standard way, we obtain
\begin{equation}\label{exp_det}
\mdet\,\mathcal{D}(\hat{X}, \hat{B}, U, S)= \sum\limits_{\bar\sigma} (-1)^{|\sigma|} \prod\limits_{j\in |\Lambda|} P_{j,\bar\sigma_j}(\tilde{x}_{j,1},\tilde{x}_{j2}, \tilde{b}_{j,1},\tilde{b}_{j,1}),
\end{equation}
where $\bar\sigma$ is a permutation of $\{(j,ls)\}$, $l,s=1,2$, $j\in \Lambda$, $\bar\sigma_j$ is its restriction on   $\{(j,ls)\}_{l,s=1}^2$, $(-1)^{|\sigma|}$ is a sign of
$\sigma$ and $P_{j,\bar\sigma_j}$ is an expansion in $\tilde{x}_{j,1},\tilde{x}_{j2}$, $\tilde{b}_{j,1},\tilde{b}_{j,1}$ of the product of four elements from the rows 
$\{(j,ls)\}_{l,s=1}^2$ taken with respect to $\bar\sigma_j$.
 
 Let us prove that
for each $j\in\Lambda$ and any $\bar\sigma$ each term of $P_{j,\bar\sigma_j}(\tilde{x}_{j,1},\tilde{x}_{j2}, \tilde{b}_{j,1},\tilde{b}_{j,1})$ of (\ref{exp_det}) belongs to
one of the three following groups:
\begin{itemize}
\item[i.] has a coefficient $W^{-2}$ or lower;

\item[ii.] has a coefficient $W^{-3/2}$ and at least one of variables $\tilde{x}_{j,1},\tilde{x}_{j2}$, $\tilde{b}_{j,1},\tilde{b}_{j,1}$ of the odd degree;

\item[iii.] has a coefficient $W^{-1}$ and at least two variables of $\tilde{x}_{j,1},\tilde{x}_{j2}$, $\tilde{b}_{j,1},\tilde{b}_{j,1}$ of the odd degree;
\end{itemize}

Note that each element in the expansion of the coefficients of the rows $(j,11)$ and $(j,22)$ has a coefficient $W^{-1/2}$ or lower,
and so $P_{j,\bar\sigma_j}(\tilde{x}_{j,1},\tilde{x}_{j2}, \tilde{b}_{j,1},\tilde{b}_{j,1})$ has a coefficient $W^{-1}$ or lower. 
In addition, if $P_{j,\bar\sigma_j}(\tilde{x}_{j,1},\tilde{x}_{j,2}, \tilde{b}_{j,1},\tilde{b}_{j,1})$ contains any terms with $R_{jk}$ (see (\ref{D})),
or at least one off-diagonal elements in $(j,12)$ and $(j,21)$, we get a coefficient  $W^{-2}$ or lower (and so obtain the group (i)).

We are left to consider terms with $d_{j,12}d_{j,21}$. If   
$P_{j,\bar\sigma_j}(\tilde{x}_{j,1},\tilde{x}_{j,2}, \tilde{b}_{j,1},\tilde{b}_{j,1})$ contains two off-diagonal elements in rows $(j,11)$ and $(j,11)$,
we get group (i). One off-diagonal element and $d_{j,11}$ (or $d_{j,22}$) gives group (ii) or group (i) (since off-diagonal elements do not
depend on $\tilde{x}_{j,1},\tilde{x}_{j,2}$, $\tilde{b}_{j,1},\tilde{b}_{j,1}$), and it is easy to see from (\ref{d_exp1}) that all the terms in expansion of
$d_{j,11}d_{j,22}d_{j,12}d_{j,21}$ belongs to groups (i) -- (iii).

To get a non-zero contribution, we have to complete  the expression $P_{j,\bar\sigma_j}(\tilde{x}_{j,1},\tilde{x}_{j,2}, \tilde{b}_{j,1},\tilde{b}_{j,1})$ by some other terms of 
the expansion of the exponent of (\ref{G_last}) in order to get an even degree of each variable $\tilde{x}_{j,1},\tilde{x}_{j,2}$, $\tilde{b}_{j,1},\tilde{b}_{j,1}$. 
But all such a terms have the coefficient $W^{-1/2}$ or lower, and therefore Lemma \ref{l:det_exp} yields that the coefficient near each $j$ in terms that
gives a non-zero contribution must be $W^{-2}$ or lower. Since we have a coefficient $W^{2|\Lambda|}$ in (\ref{G_last}) after the change (\ref{change}), 
this means that to get a non-zero
contribution each coefficient must be exactly $W^{-2}$. Note that the terms of $P_{j,\bar\sigma_j}(\tilde{x}_{j,1},\tilde{x}_{j,2}, \tilde{b}_{j,1},\tilde{b}_{j,1})$ 
that can be completed to the monomial with all even degrees and with a coefficients $W^{-2}$ does not contain any terms with $R_{jk}$,
and any terms of the expansion $d_{j,ls}$, $l,s=1,2$ of order $W^{-3/2}$ or lower. They also cannot be completed to the monomial with all even degrees 
and with a coefficients $W^{-2}$ by any terms of the exponent of (\ref{G_last}) that has a coefficient lower then $W^{-1/2}$ for some $j$.
Thus we need to consider the terms up to the third order in the expansions (\ref{f_exp}) and (\ref{f_exp2}), the linear terms of the functions in the second and the forth
exponents of (\ref{G_last}), and the linear terms in $-2\log  b_{j,l}$, $l=1,2$ coming from
\[
 b_{j,l}^{-2}=e^{-2\log b_{j,l}},\quad l=1,2.
\] 
Note that the terms containing $\tilde x_{j,1}\tilde b_{j,1}/W$ in $d_{j,11}$ (see (\ref{d_exp1})) cannot contribute to the limit, since if we complete them
to the monomial with even degrees of $\tilde x_{j,1},\tilde b_{j,1}$, then it will contain $W^{-2}$ and an additional $W^{-1}$ should come from the line
containing $d_{j,22}$. Moreover, the terms containing $\tilde x_{j,1}^2$ in $d_{j,11}$ can give a non-zero contribution only if the resulting monomial
contains only $\tilde x_{j,1}^2$, since otherwise, taking into account the contribution of the line
containing $d_{j,22}$, we again obtain at least $W^{-3}$. Thus we can replace $\tilde x_{j,1}^2$ by its average via Gaussian measure 
$(2\pi/c_+)^{-1/2} e^{-c_+\tilde x_{j,1}^2/2}$, i.e. by $ c_+^{-1}$.
The same is true for $\tilde x_{j,2}\tilde b_{j,2}/W$ and for $\tilde x_{j,2}^2$ which could be replaced by $ c_-^{-1}$. Similar argument yields that
the contribution of the terms with $\tilde x_{j,1}^2$ in the line containing $d_{j,12}$ and  $\tilde x_{j,2}^2$ in the line containing $d_{j,21}$  disappear in the limit
$W\to\infty$. Thus the term corresponding to
$W^{2|\Lambda|}\det\mathcal{D}$ in (\ref{G_last})
can be replaced by the term
\begin{align}\label{det'}
\int d\rho\,d\tau \exp&\Big\{\beta\sum\Tr\Big(U_j^*\tilde\rho_jS_j-U_{j-1}^*\tilde\rho_{j-1}S_{j-1}\Big) \Big(S_j^{-1}\tilde\tau_jU_j-
S_{j-1}^{-1}\tilde\tau_{j-1}U_{j-1}\Big)\\ \notag
&+\sum\Big(c_+n_{j,12}+c_-n_{j,21}-n_{j,1}/c_0a_++ n_{j,2}/c_0a_- \Big)\\ \notag
&+{W}^{1/2}\sum\Big(\big(\tilde{x}_{j,1}/a_+-\tilde{b}_{j,1})n_{j,1}+(\tilde{x}_{j,2}/a_--\tilde{b}_{j,2})n_{j,2}\big)\\
&  -W^{-1/2}\sum\Big(a_+^{-2}\big(\tilde{x}_{j,1}/a_+-\tilde{b}_{j,2}\big)n_{j,12}+a_-^{-2}\big(\tilde{x}_{j,2}/a_--\tilde{b}_{j,1}\big)n_{j,21}\Big)\Big\}
+O(W^{-1/2}),
\notag\end{align}
where $\tilde \rho_j$, $\tilde \tau_j$, $n_{j,12}$, $n_{j,21}$, $n_{j,1}$, $n_{j,2}$ are defined in (\ref{rt_tilde}).
Here we have used Grassmann variables $\{\rho_{j,ls}\}$, $\{\tau_{j,ls}\}$, $j\in \Lambda$, $l,s=1,2$ to rewrite the determinant (\ref{D}) with respect to (\ref{G_Gr}),
have substituted (\ref{d_exp1}) and left only terms that give the contribution (according to arguments above), and then have changed $\rho_{j,11}\to \sqrt{W}\rho_{j,11}$,  
$\tau_{j,11}\to \sqrt{W}\rho_{j,11}$.
Note also 
\begin{equation}\label{capm}
c_+a_+^2=c_0a_+,\quad c_-a_-^2=-c_0a_-.
\end{equation}

Now let us prove that the contribution of the third order in the expansions (\ref{f_exp}) and (\ref{f_exp2}) is small. 
Indeed, the terms  $P_{j,\bar\sigma_j}(\tilde{x}_{j,1},\tilde{x}_{j,2}, \tilde{b}_{j,1},\tilde{b}_{j,1})$ 
that can be completed to the monomial with all even degrees and with a coefficients $W^{-2}$ by these cubic terms can be one of two types

1. terms $\big(\tilde x_{j,1}/a_+-\tilde{b}_{j,1}\big)\cdot x\cdot c_+\cdot c_-$, where $c_+$, $c_-$ come from the zero terms of $d_{j,12}$, $d_{j,21}$ 
(see (\ref{d_exp1})) and $x$ is an element of the row $(j,22)$ and so
does not depend on $\tilde{x}_{j,1}$, $\tilde{b}_{j,1}$ (or similar terms with $\big(\tilde x_{j,2}/a_--\tilde{b}_{j,2}\big)$);

2. terms of $\big(\tilde x_{j,1}/a_+-\tilde{b}_{j,1}\big)\big(\tilde x_{j,2}/a_--\tilde{b}_{j,2}\big) \big(\tilde x_{j,1}/a_+-\tilde{b}_{j,2}\big) \cdot c_-$
with $\tilde x_{j,1}^2$ or $\tilde b_{j,2}^2$ (or similar terms with $c_+$ coming  from $d_{j,12}$) ;

But it is easy to see that
\[
\int \big(\tilde x^4_{j,1}/(3a_+^4)-\tilde{b}_{j,1}^4/3\big) \cdot e^{-\frac{c_+\tilde x^2_{j,1}}{2}-\frac{a_+^2c_+\tilde b^2_{j,1}}{2}}\,d\tilde x_{j,1}\, d\tilde b_{j,1}= \dfrac{2\pi}{a_+c_+}\Big(\dfrac{1}{a_+^4c_+^2}-\dfrac{1}{a_+^4c_+^2}\Big)=0,
\]
and so the contribution of (1) is zero. Similarly the contribution (2) is zero.

Therefore, the contribution of the third order in the expansions (\ref{f_exp})  is small,  and  
using (\ref{det'}) and also
\begin{align*}
&\exp\Big\{\dfrac{i}{|\Lambda|}\sum\limits_{j\in\Lambda}\Big(\Tr U_j^*L_\pm U_j\big(i\varepsilon L+\hat{\xi}_1/\rho(E)\big)+\Tr S_j^{-1} L_\pm S_j\big(i\varepsilon L+
\hat{\xi}_2/\rho(E)\big)\Big)\Big\}\\
=&\exp\big\{-E(\xi_1+\xi_2+\xi_1'+\xi_2')/2\rho(E)\big\}\\
&\times \exp\Big\{\dfrac{ic_0}{2|\Lambda|}\sum\limits_{j\in\Lambda}\Big(\Tr U_j^*L U_j\Big(i\varepsilon L+
\frac{\hat{\xi}}{\rho(E)}\Big)+\Tr S_j^{-1} 
LS_j\Big(i\varepsilon L+
\frac{\hat{\xi}'}{\rho(E)}\Big)\Big)\Big\}
\end{align*}
for $L_\pm$, $L$ defined in (\ref{L_pm}), we get (\ref{G_main}). 

\end{proof}
Denoting the exponent in the second line of (\ref{G_main}) by $\mathcal{E}(z)$ and taking the Gaussian integral over $d z$ with $z$ of (\ref{rt_tilde}), we get
\begin{align}\label{G_main1}
\int_{\mathbb{R}^{4|\Lambda|}} \mathcal{E}(z)dz&=(2\pi)^{2|\Lambda|}\mdet^{-1/2} M\\
&\exp\Big\{\frac{1}{2}(M^{-1}(W^{1/2}h^0+W^{-1/2}(h+\zeta/\Lambda)), W^{1/2}h^0
+W^{-1/2}(h+\zeta/|\Lambda|))\Big\} .
\notag\end{align}
It is easy to see from (\ref{M}) -- (\ref{tilde_M}) that
\begin{align*}%\label{det}
\det \,M&= \det\,M_0(1+O(W^{-1}))=(c_+^2c_-^2a_+^2a_-^2)^{|\Lambda|} (1+O(W^{-1}))=c_0^{4|\Lambda|}(1+O(W^{-1}))
\end{align*}
with $c_\pm$ of (\ref{c_pm}). Note now that
\[
M^{-1}=\big(M_0+\dfrac{1}{W}\tilde{M}\big)^{-1}=M_0^{-1}-\dfrac{1}{W}M_0^{-1}\tilde{M}M_0^{-1}+O(W^{-2}).
\]
Since $M_0$ is diagonal and $h^0_{j,ls}$ is proportional to $n_{j,1}$ or $n_{j,2}$ and $n_{j,l}^2=0$, we have
\[(M_0^{-1}h^0,h^0)=0.\]
Hence, the exponent in the r.h.s. of (\ref{G_main1}) takes the form
\begin{multline*}
\dfrac{1}{2}\Big((M_0^{-1}h^0,h+\zeta/\Lambda)+(M_0^{-1}(h+\zeta/\Lambda),h^0)\\-(M_0^{-1}\tilde M M_0^{-1}h^0,h^0)\Big)+o(1)
=
I_1+I_2-I_3+o(1).
\end{multline*}
Then we can rewrite (recall (\ref{h}) and (\ref{capm}))
\begin{align}\label{M_0h,h}
I_1&+I_2=\sum \Big(\dfrac{(h_{j,11}+\zeta_{j,11}/|\Lambda| )n_{j,1}}{a_+c_+}+\dfrac{(h_{j,22}+\zeta_{j,22}/\Lambda )n_{j,2}}{a_-c_-}\\
&-\dfrac{(h_{j,12}+a_+\zeta_{j,12}/|\Lambda| )n_{j,1}}{a_+^2c_+}
-\dfrac{(h_{j,21}+a_-\zeta_{j,21}/|\Lambda| )n_{j,2}}{a_-^2c_-}\Big)\notag\\ 
\notag &=\sum n_{j,1}\Big(\dfrac{2}{a_+c_0}+\beta\big(t_j^2+t_{j+1}^2-v_j^2-v_{j+1}^2\big)+\dfrac{a_-n_{j,12}}{a_+^2c_0}+\dfrac{n_{j,21}}{a_-c_0}
+\dfrac{\zeta_{j,11}-\zeta_{j,12}}{c_0|\Lambda|}\Big)\\ \notag
&+\sum n_{j,2}\Big(-\dfrac{2}{a_-c_0}+\beta\big(t_j^2+t_{j+1}^2-v_j^2-v_{j+1}^2\big)-\dfrac{a_+n_{j,21}}{a_-^2c_0}-\dfrac{n_{j,12}}{a_+c_0}
-\dfrac{\zeta_{j,22}-\zeta_{j,21}}{c_0|\Lambda|}\Big)+O(W^{-1}),
\end{align}
\begin{align}\notag
&I_3=\dfrac{4}{c_0^4}\sum n_{j,1}n_{j,2}-\dfrac{1}{a_+^2c_0^2}\sum n_{j,12}n_{j,1}n_{j,2}
-\dfrac{1}{a_-^2c_0^2}\sum n_{j,21}n_{j,1}n_{j,2}\\
 \label{tilMh,h} &+\sum \dfrac{\beta(v_j^2+t_j^2)}{c_0^2}\big(n_{j,1}n_{j+1,1}+n_{j,1}n_{j+1,2}
+n_{j,2}n_{j+1,1}+n_{j,2}n_{j+1,2}\big)+O(W^{-1}).
\end{align}
Moreover,
\begin{align}\label{gr_lapl}
&\exp\Big\{\beta\sum\Tr\Big(U_j^*\tilde\rho_jS_j-U_{j-1}^*\tilde\rho_{j-1}S_{j-1}\Big) \Big(S_j^{-1}\tilde\tau_jU_j-
S_{j-1}^{-1}\tilde\tau_{j-1}U_{j-1}\Big)\Big\}\\ \notag
&=\exp\Big\{\dfrac{\beta}{W}\sum\Tr\Big(U_j^*\hat\rho_jS_j-U_{j-1}^*\hat\rho_{j-1}S_{j-1}\Big) \Big(S_j^{-1}\hat\tau_jU_j-
S_{j-1}^{-1}\hat\tau_{j-1}U_{j-1}\Big)\Big\}+O(W^{-1/2}),
\end{align}
where
\begin{align*}%\label{rt_hat}
\hat\rho_j=\hbox{diag}\{\rho_{j,11},\rho_{j,22}\},\quad
%\left(\begin{array}{cc}
%\rho_{j,11}&0\\
%0&\rho_{j,22}
%\end{array}\right), \quad
 \hat\tau_j=\hbox{diag}\{\tau_{j,11},\tau_{j,22}\}.
% \left(\begin{array}{cc}
%\tau_{j,11}&0\\
%0&\tau_{j,22}
%\end{array}\right)
\end{align*}
Combining (\ref{M_0h,h}) -- (\ref{gr_lapl}) we can integrate the main term of (\ref{G_main1}) with respect to $\rho_{j,12}$, $\tau_{j,12}$, $\rho_{j,21}$, $\tau_{j,21}$
according to (\ref{G_Gr}).
This integration gives
\begin{align*}%\label{int_gr12}
&\prod\limits_{j\in\Lambda} \Big(c_++\dfrac{a_-n_{j,1}}{a_+^2c_0}-\dfrac{n_{j,2}}{a_+c_0}+\dfrac{n_{j,1}n_{j,2}}{a_+^2c_0^2}\Big)
\Big(c_-+\dfrac{n_{j,1}}{a_-c_0}-\dfrac{a_+n_{j,2}}{a_-^2c_0}+\dfrac{n_{j,1}n_{j,2}}{a_-^2c_0^2}\Big)\\ \notag
&=c_0^2+\dfrac{c_0n_{j,2}}{a_-}-\dfrac{c_0n_{j,1}}{a_+} +\big(1+2/c_0^2\big)n_{j,1}n_{j,2}=
c_0^2\cdot \exp\Big\{-\dfrac{n_{j,1}}{a_+c_0}+\dfrac{n_{j,2}}{a_-c_0}\Big\}\cdot \Big(1+\dfrac{2}{c_0^4}n_{j,1}n_{j,2}\Big),
\end{align*}
which together with (\ref{M_0h,h}) -- (\ref{gr_lapl}) yields
\begin{align*}
\mathcal{R}_{Wn\beta}^{+-}(E,\varepsilon,\xi)=& c_0^{4|\Lambda|}C_{E,\varepsilon}\int  d\hat\rho\, d\hat \tau \,dU\, d S \prod_{j\in \Lambda}
\Big(1-\dfrac{2}{c_0^4}n_{j,1}n_{j,2}\Big)\exp\Big\{-\beta c_0^2\sum (v_j^2+t_j^2)\Big\} \\ \notag
&\times\exp\Big\{\beta\sum\Tr\Big(U_j^*\hat\rho_jS_j-U_{j-1}^*\hat\rho_{j-1}S_{j-1}\Big) \Big(S_j^{-1}\hat\tau_jU_j-
S_{j-1}^{-1}\hat\tau_{j-1}U_{j-1}\Big)\Big\}\\ \notag
&\times \exp\Big\{\sum n_{j,1}\Big(\beta\big(t_j^2+t_{j+1}^2-v_j^2-v_{j+1}^2\big)+\dfrac{\zeta_{j,11}-\zeta_{j,12}}{c_0|\Lambda|}|\big)\Big)\Big\}\\
\notag&\times \exp\Big\{\sum n_{j,2}\Big(\beta\big(t_j^2+t_{j+1}^2-v_j^2-v_{j+1}^2\big)-\dfrac{\zeta_{j,22}-\zeta_{j,21}}{c_0|\Lambda|}\big)\Big)\Big\}+o(1),
\end{align*}
where we have used
\[
(1+2n_{j,1}n_{j,2}/c_0^4)\cdot e^{-4n_{j,1}n_{j,2}/c_0^4}=1-2n_{j,1}n_{j,2}/c_0^4.
\]
%where
%\[
%Q_{|\Lambda|}^{(4)}=c_0^{4|\Lambda|}C_{E,\varepsilon}
%\]
Now changing
\[
\rho_{j,11}\to c_0 \rho_{j,1}, \quad \tau_{j,11}\to c_0 \tau_{j,1},\quad \rho_{j,22}\to c_0 \rho_{j,2},\quad \tau_{j,22}\to c_0 \rho_{j,2}
\]
with an appropriate change in $n_{j,1}$, $n_{j,2}$, $\hat\rho_j$, $\hat\tau_j$, and recalling (\ref{beta_til}), we get
\begin{align*}
\mathcal{R}_{Wn\beta}^{+-}(E,\varepsilon,&\xi)=C_{E,\varepsilon}  \int  d\hat\rho\, d\hat \tau \,dU\, d S \prod_{j\in \Lambda}
\Big(1-2n_{j,1}n_{j,2}\Big)  \exp\Big\{-\tilde\beta\sum (v_j^2+t_j^2)\Big\}\\ 
\notag
&\times\exp\Big\{\tilde\beta\sum\Tr\Big(U_j^*\hat\rho_jS_j-U_{j-1}^*\hat\rho_{j-1}S_{j-1}\Big) \Big(S_j^{-1}\hat\tau_jU_j-
S_{j-1}^{-1}\hat\tau_{j-1}U_{j-1}\Big)\Big\}\\ \notag
&\times \exp\Big\{\sum n_{j,1}\Big(\tilde\beta\big(t_j^2+t_{j+1}^2-v_j^2-v_{j+1}^2\big)+c_0(\zeta_{j,11}-\zeta_{j,12})/|\Lambda|\big)\Big)\Big\}\\
\notag&\times \exp\Big\{\sum n_{j,2}\Big(\tilde\beta\big(t_j^2+t_{j+1}^2-v_j^2-v_{j+1}^2\big)-c_0(\zeta_{j,22}-\zeta_{j,21})/|\Lambda|\big)\Big)\Big\}\\ \notag
&\times\exp\Big\{\dfrac{ic_0}{2|\Lambda|}\sum\limits_{j\in\Lambda}\Big(\Tr U_j^*L U_j\big(i\varepsilon L+\hat{\xi}/\rho(E)\big)+\Tr S_j^{-1} LS_j\big(i\varepsilon L+
\hat{\xi}'/\rho(E)\big)\Big)\Big\},
%\\ \notag
%&\times\prod\limits_{j\in \Lambda}
%\Big(1-2n_{j,1}n_{j,2}\Big)+o(1),
\end{align*}
which can be rewritten as (\ref{sigma-mod}). The second
relation of  (\ref{sigma-mod})  follows from the uniform in $\xi$ convergence  of $\mathcal{R}_{Wn\beta}^{+-}(E,\varepsilon,\xi)$, as $W\to\infty$.

$\square$
%\begin{align}
%\int \exp\Big\{\dfrac{\tilde\beta}{4}\sum\Str Q_jQ_{j-1}+\dfrac{1}{4|\Lambda|}\sum \Str Q_j\Lambda_{\xi, \varepsilon}\Big\} d Q,
%\end{align}
%where
%\begin{align*}
%d Q=\prod d Q_j,\quad d Q_j=(1-2n_{j,1}n_{j,2})\, d \rho_{j,1}d \tau_{j,1}\,d \rho_{j,2}d \tau_{j,2}\, d U_j\,d S_j.
%\end{align*}

\subsection{Proof of  Theorem \ref{t:2}.}
 Theorem \ref{t:2} can be proved in a similar way. First of all we can write an analogue of (\ref{G_last}):
\begin{align}\label{G+_last}
&\mathcal{R}_{Wn\beta}^{++}(E,\varepsilon,\xi)=Q^{(2)}_{W, |\Lambda|}\displaystyle
\int  dVdU
\int  d x\int_{\mathbb{R}_+^{2|\Lambda|}}  d b\prod_{j\in\Lambda}\dfrac{(x_{j,1}-x_{j,2})^2
(b_{j,1}-b_{j,2})^2}{b_{j,1}^2b_{j,2}^2}\\ \notag
&\times 
\exp\Big\{-W\sum\limits_{j\in\Lambda}\sum\limits_{\sigma=1}^2\left(f(x_{j,\sigma})+f(b_{j,\sigma})\right)\Big\}
\cdot \mdet \,\mathcal{D}(\hat{X}, \hat{B}, U, V) \\ \notag
&\times \exp\Big\{
\dfrac{\beta}{2}\sum\limits_{j\sim k}\Big(\Tr (V_j^*\hat{B}_jV_j-V_k^*\hat{B}_kV_k)^2-\Tr (U_j^*\hat{X}_jU_j-U_k^*\hat{X}_kU_k)^2\Big)\Big\}\\ \notag
&\times\exp\Big\{\dfrac{1}{2W}\sum\limits_{j, k}
R_{jk}\,\Tr (U_j^*\hat{X}_jU_j-iZ_1)(U_k^*\hat{X}_kU_k-iZ_1)\Big\} \\ \notag
&\times\exp\Big\{\dfrac{i}{|\Lambda|}\sum\limits_{j\in\Lambda}\Big(\Tr U_j^*\hat{X}_jU_j\big(i\varepsilon\cdot I+\hat{\xi}_1/\rho(E)\big)+\Tr V_j^*\hat{B}_jV_j\big(i\varepsilon\cdot I+\hat{\xi}_2/\rho(E)\big)\Big)\Big\}.
\end{align}
Note that (\ref{G+_last}) has unitary $V_j$ instead of hyperbolic $S_j$ and $i\varepsilon\cdot I$ instead of $i\varepsilon\cdot L$.
Then we deform the contours of integration as
\begin{itemize}
\item for $x_{j,1}$, $x_{j,2}$ to $iE/2+\mathbb{R}$;
\item for $b_{j,1}, b_{j,2}$ to $\mathcal{L}_+(E)$ of (\ref{L_cal})
\end{itemize}
and prove the following lemma in the same way as Lemma \ref{l:s_point}):
\begin{lemma}\label{l:s_point+}
The integral (\ref{G+_last}) over $\{x_{j,l}\}, \{b_{j,l}\}$, $l=1,2$, $j\in \Lambda$ can be restricted to the integral over the neighbourhood of the points
\begin{itemize}
\item[I.] $x_{j,1}=a_+$, $x_{j,2}=a_-$ or $x_{j,1}=a_-$, $x_{j,2}=a_+$, $b_{j,1}=b_{j,2}=a_+$ for any $j\in\Lambda$;
\item[II.] $x_{j,1}=x_{j,2}=a_+$, $b_{j,1}=b_{j,2}=a_+$ for any $j\in\Lambda$;
\item[III.] $x_{j,1}=x_{j,2}=a_-$, $b_{j,1}=b_{j,2}=a_+$ for any $j\in\Lambda$.
\end{itemize}
Moreover, the contributions of the points I and II are $o(1)$, as $W\to\infty$.
\end{lemma}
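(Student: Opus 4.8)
The plan is to repeat the argument of Lemma~\ref{l:s_point} almost verbatim; the only structural change is that both $b$-contours now lie on $\mathcal{L}_+(E)$ of (\ref{L_cal}) (rather than on $\mathcal{L}_+(E)$ and $\mathcal{L}_-(E)$ separately), which pins $b_{j,1}$ and $b_{j,2}$ to the \emph{same} saddle $a_+$ and thereby relabels the dominant point. I would first establish the localization (the first assertion), which rests only on the shape of $\Re f$: on the line $iE/2+\mathbb{R}$ the function $\Re f$ has its two global minima at $a_+$ and $a_-$, while on the ray $\mathcal{L}_+(E)$ it has a single global minimum at $a_+$. Hence each $x_{j,l}$ is pinned to $a_+$ or $a_-$, whereas both $b_{j,1}$ and $b_{j,2}$ are now pinned to $a_+$. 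The three possibilities for $(x_{j,1},x_{j,2})$ — namely $(a_+,a_-)$ or $(a_-,a_+)$, then $(a_+,a_+)$, then $(a_-,a_-)$ — are exactly points I, II and III. (Point III is the only non-degenerate one, since there $d_{j,ls}=1-x_{j,l}^{-1}b_{j,s}\to a_+^2c_+\neq0$; this is why it survives.)

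Next I would dispose of point II by power counting. Zooming in through the analogue of (\ref{change1}) with all four variables near $a_+$ (so $b_{j,2}=a_+(1+\tilde b_{j,2}/\sqrt W)$ as well) produces the prefactor $W^{2|\Lambda|}$ from the Jacobian together with $Q^{(2)}_{W,|\Lambda|}$. Here both Vandermonde factors $(x_{j,1}-x_{j,2})^2$ and $(b_{j,1}-b_{j,2})^2$ are $O(W^{-1})$ per site, so the measure contributes $W^{-2|\Lambda|}$; moreover all four diagonal entries $d_{j,ls}$ are $O(W^{-1/2})$, so the leading (diagonal) term of $\mdet\,\mathcal{D}$ of (\ref{D}) is $O(W^{-2})$ per site, i.e. $O(W^{-2|\Lambda|})$, and the off-diagonal nearest-neighbour entries are $O(W^{-1})$ and can only lower the order. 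Collecting the powers gives $O(W^{-2|\Lambda|})=o(1)$, so point II contributes nothing; unlike the $+-$ case, here the double Vandermonde suppression already makes it subleading and no delicate cancellation is needed.

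The substance of the proof is point I, which—exactly like the dominant point in Lemma~\ref{l:s_point}—is a priori of order one and must be shown to vanish by an explicit Gaussian cancellation. I would use $x_{j,1}=a_++\tilde x_{j,1}/\sqrt W$, $x_{j,2}=a_-+\tilde x_{j,2}/\sqrt W$, $b_{j,1}=a_+(1+\tilde b_{j,1}/\sqrt W)$, $b_{j,2}=a_+(1+\tilde b_{j,2}/\sqrt W)$. Now only the two entries of the row $l=1$ degenerate, $d_{j,11}=(\tilde x_{j,1}/a_+-\tilde b_{j,1})/\sqrt W+\ldots$ and $d_{j,12}=(\tilde x_{j,1}/a_+-\tilde b_{j,2})/\sqrt W+\ldots$, while $d_{j,21},d_{j,22}\to a_+^2c_+$; together with the $b$-Vandermonde $(b_{j,1}-b_{j,2})^2\propto(\tilde b_{j,1}-\tilde b_{j,2})^2/W$ the leading integrand factorizes over sites with single-site factor proportional to
\[
\big(\tilde b_{j,1}-\tilde b_{j,2}\big)^2\,\big(\tilde x_{j,1}/a_+-\tilde b_{j,1}\big)\,\big(\tilde x_{j,1}/a_+-\tilde b_{j,2}\big).
\]
Its Gaussian average against $\exp\{-\tfrac12(c_+\tilde x_{j,1}^2+a_+^2c_+\tilde b_{j,1}^2+a_+^2c_+\tilde b_{j,2}^2)\}$ reduces to $2\langle\tilde b^2\rangle\big(\langle\tilde x_{j,1}^2\rangle/a_+^2-\langle\tilde b^2\rangle\big)$, which vanishes because $\langle\tilde x_{j,1}^2\rangle/a_+^2=c_+^{-1}/a_+^2=(a_+^2c_+)^{-1}=\langle\tilde b^2\rangle$. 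This is the exact analogue of the cancellation (\ref{++}), with the roles of the $x$- and $b$-Vandermonde interchanged.

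I expect the main obstacle to be precisely this last cancellation at point I: one must identify correctly which entries of $\mathcal{D}$ degenerate (the row $l=1$, rather than the diagonal $l=s$ as in the $+-$ case), retain the accompanying $b$-Vandermonde, and verify that the matching of variances $\langle\tilde x_{j,1}^2\rangle/a_+^2=\langle\tilde b^2\rangle$ kills the leading term. Everything else — the control of the cubic terms of (\ref{f_exp})–(\ref{f_exp2}) and of the $O(W^{-1})$ couplings, which contribute only $o(1)$ — is routine and identical to the corresponding steps of Lemma~\ref{l:s_point}.
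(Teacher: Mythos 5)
Your proposal is correct and follows essentially the same route as the paper: localization via the shape of $\Re f$ on the deformed contours, elimination of point II by pure power counting ($W^{4|\Lambda|}\cdot W^{-2|\Lambda|}\cdot W^{-2|\Lambda|}\cdot W^{-2|\Lambda|}=W^{-2|\Lambda|}$), and elimination of point I by the vanishing of the Gaussian average of $(\tilde x_{j,1}/a_+-\tilde b_{j,1})(\tilde x_{j,1}/a_+-\tilde b_{j,2})(\tilde b_{j,1}-\tilde b_{j,2})^2$, which is exactly the paper's formula (\ref{b++}). Your explicit reduction of that average to $2\langle\tilde b^2\rangle\big(\langle\tilde x_{j,1}^2\rangle/a_+^2-\langle\tilde b^2\rangle\big)=0$ correctly fills in the step the paper leaves as ``easy to see.''
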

Indeed, the contribution of the point II is small, since after an appropriate change of variables similar to (\ref{change1}) (which gives $W^{-2|\Lambda|}$)
the expression 
\[
(x_{j,1}-x_{j,2})^2(b_{j,1}-b_{j,2})^2
\]
gives $W^{-2|\Lambda|}$, and the expansion of $\mdet\,\mathcal{D}(\hat{X}, \hat{B}, U, V)$ starts from  $W^{-2|\Lambda|}$ 
(see (\ref{d_exp1})). 

For the points I the expression for $\mdet\,\mathcal{D}(\hat{X}, \hat{B}, U, V)$ starts from  $W^{-|\Lambda|}$, and
another $W^{-|\Lambda|}$ comes from $(b_{j,1}-b_{j,2})^2$. Therefore similarly to (\ref{++}) we get that the main contribution
around these saddle-points is given by
\begin{align}\label{b++}
C\cdot \Big\langle\prod\limits_{j\in\Lambda}\big(\tilde{x}_{j,1}/a_+-\tilde{b}_{j,1}\big)\cdot \big(\tilde{x}_{j,1}/a_+-\tilde{b}_{j,2}\big) \cdot (\tilde{b}_{j,1}
-\tilde{b}_{j,2})^2\Big\rangle+o(1),
\end{align}
where
\begin{align*}
\Big\langle \cdot \Big\rangle=\int \Big(\cdot\Big) \exp\Big\{-\frac{1}{2}\sum_{j\in\Lambda}\big({c_+\tilde x_{j,1}^2+c_-\tilde x_{j,2}^2
+a_+^2c_+ (\tilde b_{j,1}^2+
\tilde b_{j,2}^2})\big)\Big\}  d\tilde{x}d\tilde{b}.
\end{align*}
But it is easy to see that  the Gaussian integral in (\ref{b++}) is zero.

Thus we are left to compute the contribution of the point III. Doing again an appropriate change of variables similar to (\ref{change1}) , we see that the expression
\[
(x_{j,1}-x_{j,2})^2(b_{j,1}-b_{j,2})^2
\]
already gives $W^{-2|\Lambda|}$, and hence to obtain a non-zero contribution we have to compute
\begin{align*}
&\int \prod\limits_{j\in\Lambda} (\tilde x_{j,1}-\tilde x_{j,2})^2(\tilde b_{j,1}-\tilde b_{j,2})^2 \exp\Big\{-\frac{1}{2}\sum_{j\in\Lambda}\big({c_-(\tilde x_{j,1}^2+-\tilde x_{j,2}^2)
+a_+^2c_+ (\tilde b_{j,1}^2+
\tilde b_{j,2}^2})\big)\Big\}  d\tilde{x}d\tilde b\\
&=\Big((2\pi)^{2}\cdot 4 (c_+c_-a_+)^{-2}\Big)^{|\Lambda|}
\end{align*}
and take only zero terms in  the expansions of all other functions in (\ref{G+_last}). That gives the first relation of (\ref{G++_lim}). The second
relation of (\ref{G++_lim}) follows from the uniform in $\xi$ convergence  of $\mathcal{R}_{Wn\beta}^{++}(E,\varepsilon,\xi)$ as $W\to\infty$.
$\Box$

\section{Proof of Theorem \ref{thm:cor}}\label{s:4}
According (\ref{cor=det}), (\ref{G_2}),  (\ref{sigma-mod}), and (\ref{G++_lim}), to prove Theorem \ref{thm:cor}, it is sufficient to show that
\begin{align}\label{main}
(2\pi)^{-2}\lim_{\varepsilon\to 0}\lim_{\beta,n\to \infty}  \dfrac{\partial^2}{\partial \xi_1' \partial \xi_2'}&\Big(
\mathcal{R}_{n\beta}^{+-}(E,\varepsilon,\xi)+
\overline{\mathcal{R}_{n\beta}^{+-}}(E,\varepsilon,\xi)\\ \notag
&-\mathcal{R}_{n\beta}^{++}(E,\varepsilon,\xi)-\overline{\mathcal{R}_{n\beta}^{++}}(E,\varepsilon,\xi)\Big)\Big|_{\xi^\prime=\xi}=1-\dfrac{\sin^2 (\pi(\xi_1-\xi_2))}
{\pi^2(\xi_1-\xi_2)^2}.
\end{align}
Using (\ref{G++_lim})  we get
\begin{align}\label{lim_++}
\lim_{\varepsilon\to 0}\,\lim_{\beta,n\to \infty}\dfrac{\partial^2}
{\partial\xi_1^\prime\partial\xi_2^\prime}\left(\mathcal{R}_{n\beta}^{++}(E,\varepsilon,\xi)+\overline{\mathcal{R}_{n\beta}^{++}}(E,\varepsilon,\xi)\right)\Big|_{\xi^\prime=\xi}
=-\dfrac{a_+^2+a_-^2}{\rho^2(E)}.
\end{align}
In addition,   $\mathcal{R}_{n\beta}^{+-}(E,\varepsilon,\xi)$  are analytic functions in any of $\xi_1,\xi_2,\xi_1',\xi_2'$ for $\Im\xi_1',\Im\xi_1'>-\varepsilon$,
and they are uniformly bounded in $n,\beta$  for $\xi_1,\xi_2,\xi_1',\xi_2'$ varying  in any compacts   satisfying this condition. Hence, we can 
replace the order of the derivative and the limiting transition and by (\ref{t1.1}) obtain
\begin{align*}
&\lim_{\beta,n\to \infty}\dfrac{\partial^2}
{\partial\xi_1^\prime\partial\xi_2^\prime} \mathcal{R}_{n\beta}^{+-}(E,\varepsilon,\xi)\Big|_{\xi^\prime=\xi}\\=
&\dfrac{\partial^2}
{\partial\xi_1^\prime\partial\xi_2^\prime} C_{E,\varepsilon}
  e^{-c_0(\alpha_1+\alpha_2)}\Big(\delta_1\delta_2(e^{2c_0\alpha_1}-1)/\alpha_1\alpha_2
-(\delta_1+\delta_2)e^{2c_0\alpha_1}/\alpha_2+e^{2c_0\alpha_1}\alpha_1/\alpha_2\Big)\Big|_{\xi^\prime=\xi}.
%C_{E,\xi} \cdot \int (n^2 F_1F_2-2) F^{2n} dU\, dS\Big|_{\xi^\prime=\xi}.
%&=\int \Big(\dfrac{(a_++c_0u^2)(a_-+c_0s^2)}{\rho(E)^2}\big(t_\varepsilon^2(1-u^2+s^2)-2\big)+t_\varepsilon(1-u^2+s^2)\Big(\dfrac{c_0a_+}{\rho(E)}(2u^2-1)\\
%&-\dfrac{c_0a_-}{\rho(E)}(2s^2+1)-\dfrac{c_0^2}{\rho(E)^2}(u^2+s^2)\Big)+\dfrac{c_0^2}{\rho(E)^2}(s^2-u^2-2u^2s^2)\Big) \cdot e^{-t_\varepsilon(u^2+s^2)}dU dS,
\end{align*}
%where
%\[
%u=|U_{12}|,\quad s=|S_{12}|,\quad t_\varepsilon=c_0\Big(2\varepsilon-\dfrac{i(\xi_1-\xi_2)}{\rho(E)}\Big).
%\]
Computing the derivative,  we get
\begin{align*}
\lim_{\beta,n\to \infty} \dfrac{\partial^2}
{\partial\xi_1^\prime\partial\xi_2^\prime}\mathcal{R}_{n\beta}^{+-}(E,\varepsilon,\xi)\Big|_{\xi^\prime=\xi}
=\dfrac{1}{\rho^2(E)}-\dfrac{1-e^{2\pi i\theta_\varepsilon}}{\theta_\varepsilon^2},
\end{align*}
where
\[
\theta_\varepsilon=2i\alpha_1\rho(E)=2i\varepsilon\rho(E)+\xi_1-\xi_2.
\]
This yields 
\begin{align*}
\lim_{\beta,n\to \infty}\dfrac{\partial^2}
{\partial\xi_1^\prime\partial\xi_2^\prime}\left(\mathcal{R}_{n\beta}^{+-}(E,\varepsilon,\xi)+\overline{\mathcal{R}_{n\beta}^{+-}}(E,\varepsilon,\xi)\right)\Big|_{\xi^\prime=\xi}
=\dfrac{2}{\rho^2(E)}+\dfrac{(e^{i\pi\theta_\varepsilon}-e^{-i\pi \theta_\varepsilon})^2}{\theta_\varepsilon^2},
\end{align*}
and hence
\begin{align*}
\lim_{\varepsilon\to 0}\,\lim_{\beta,n\to \infty} \dfrac{\partial^2}
{\partial\xi_1^\prime\partial\xi_2^\prime}\left(\mathcal{R}_{n\beta}^{+-}(E,\varepsilon,\xi)+\overline{\mathcal{R}_{n\beta}^{+-}}(E,\varepsilon,\xi)\right)\Big|_{\xi^\prime=\xi}
=\dfrac{2}{\rho^2(E)}-\dfrac{4\sin^2(\pi(\xi_1-\xi_2))}{(\xi_1-\xi_2)^2},
\end{align*}
which combined with (\ref{lim_++}), and
\[
a_+^2+a_-^2+2=(a_+-a_-)^2=4\pi^2\rho(E)^2,
\]
gives (\ref{main}), thus Theorem \ref{thm:cor}.

\section{Proof of Theorem \ref{t:1}}\label{s:5}

Let us note that  to prove Theorem \ref{t:1}, it suffices to prove it only for $\xi$ such that 
\begin{align}\label{cond_xi}
&\Re\xi_1=\Re\xi_2,\quad\Re\xi_1'=\Re\xi_2',\quad \xi_1,\xi_2,\xi_1',\xi_2'\in\Omega_{c\varepsilon}\\
&\Omega_{c\varepsilon}=\{\xi:\Im\xi>-c\varepsilon\},\quad (0<c<1).
\notag\end{align}
 Indeed, assume that  $\{\mathcal{R}_{n\beta}^{+-}(E,\varepsilon,\xi)\}$ are uniformly bounded in  $n,\beta$  
for $\xi_1,\xi_2,\xi_1',\xi_2'\in\Omega_{c\varepsilon}$. Consider
$\{\mathcal{R}_{n\beta}^{+-}(E,\varepsilon,\xi)\}$  as  functions on $\xi_1$ with fixed $\xi_2,\xi_1',\xi_2'$ such that $\Re\xi_1'=\Re\xi_2'$.
Since these functions are analytic in $\Omega_{c\varepsilon}$,  the standard complex analysis argument yields that (\ref{t1.1})  on the segment
$\Re\xi_1=\Re\xi_2$ implies (\ref{t1.1})  for any $\xi_1\in\Omega_{c\varepsilon}$, hence for any $\xi_1,\xi_2\in \Omega_{c\varepsilon}$.
Then, fixing any $\xi_1,\xi_2,\xi_2'$, we can consider $\{\mathcal{R}_{n\beta}^{+-}(E,\varepsilon,\xi)\}$
as a sequence of analytic functions on $\xi_1'$. Since, by the above argument, (\ref{t1.1}) is valid on the segment $\Re\xi_1'=\Re\xi_2'$, 
the same argument yields that (\ref{t1.1}) is valid for any $\xi_1',\xi_2'$. Therefore, it is enough to prove Theorem \ref{t:1} for real $\alpha_1>\varepsilon/2$, 
$\alpha_2>\varepsilon/2$, which means that we take $c=\rho(E)$  (see the definition (\ref{alp})).

To check that $\{\mathcal{R}_{n\beta}^{+-}(E,\varepsilon,\xi)\}$ are uniformly bounded in  $n,\beta$  for $\xi_1,\xi_2,\xi_1',\xi_2'\in\Omega_{c\varepsilon}$, we apply the  Schwartz inequality to $\mathcal{R}_{Wn\beta}^{+-}(E,\varepsilon,\xi)$ in the form (\ref{G_2}). Then we get
\begin{align*}
&|\mathcal{R}_{Wn\beta}^{+-}(E,\varepsilon,\xi)|^2\le |\mathcal{R}_{Wn\beta}^{+-}(E,\varepsilon,\xi_1)|\,|\mathcal{R}_{Wn\beta}^{+-}(E,\varepsilon,\xi_2)|\\
\Rightarrow &\mathcal{R}_{n\beta}^{+-}(E,\varepsilon,\xi)|^2\le |\mathcal{R}_{n\beta}^{+-}(E,\varepsilon,\xi_1)|\,|\mathcal{R}_{n\beta}^{+-}(E,\varepsilon,\xi_2)|
\end{align*}
where $\xi_1=(\xi_1,\xi_1,\xi_1',\xi_1')$, $\xi_2=(\xi_2,\xi_2,\xi_2',\xi_2')$. Since $\xi_1,\xi_2$ satisfy (\ref{cond_xi}), the uniform boundedness of
the r.h.s. follows from the uniform convergence (in $\xi$, satisfying (\ref{cond_xi})) of (\ref{t1.1}) (see Section \ref{ss:t1}).

\subsection{Representation of $\mathcal{R}_{n\beta}^{+-}$ in the operator form}
Now we are going to represent $\mathcal{R}_{n\beta}^{+-}$ in 1d case in the operator form. Put $n=|\Lambda|$, and set
\begin{equation*}%\label{K_cal}
{\mathcal{M}}(Q,Q')=\mathcal{F}(Q)H(Q,Q') \mathcal{F}(Q'),
\end{equation*}
where 
\begin{align}\label{H_op}
&H(Q,Q')=\exp\Big\{\dfrac{\tilde\beta}{4}\Str QQ'\Big\} (1-n_1n_2) (1-n_1'n_2')
\\ \notag
&\mathcal{F}(Q)=\exp\Big\{-\dfrac{c_0}{4n}\Str Q \Lambda_{\xi, \varepsilon}\Big\}=F(U,S)\cdot \exp\Big\{n_1\cdot F_1(U,S)+n_2\cdot F_2(U,S)\Big\} 
\end{align}
with $Q$, $Q'$ of the form (\ref{Q}) and
\begin{align}\label{F}
&F(U,S)=\exp\big\{-\dfrac{c_0}{n}\big(\alpha_1(1-|U_{12}|^2)
+\alpha_2\cdot |S_{12}|^2\big)\big\},\\ \notag
&F_1(U,S)=-{c_0}\big(\delta_1-\alpha_1 \cdot |U_{12}|^2-\alpha_2\cdot |S_{12}|^2\big)/n,\\ \notag
&F_2(U,S)=-{c_0}\big(\delta_2-\alpha_1 \cdot |U_{12}|^2-\alpha_2\cdot |S_{12}|^2\big)/n,\\
\notag 
&n_l=\rho_l\tau_l,\quad n_l'=\rho'_l\tau'_l,\quad l=1,2,
%& \alpha_1'=i(\xi_1'-\xi_1)/2\rho(E),\quad \alpha_2'=i(\xi_2-\xi_2')/2\rho(E).
\end{align}
and $\alpha_{1,2},\delta_{1,2}$  defined in (\ref{alp}).  Hence, by (\ref{sigma-mod})
\begin{align}\label{trans}
\mathcal{R}_{n\beta}^{+-}(E,\varepsilon,\xi)=&C_{E,\varepsilon}e^{c_0(\alpha_1-\alpha_2)}
\int  (1-n_1n_2) \mathcal{F}(Q){\mathcal{M}}^{n-1}(Q,Q')\mathcal{F}(Q') (1-n_1'n_2') dQ dQ'
\end{align}
with
\[
dQ=d\rho_1d\tau_1d\rho_2d\tau_2 dU dS.
\]
Note that $\mathcal{M}$, $H$, $\mathcal{F}$  can be considered as operators acting on the space of polynomials 
of Grassmann variables $\rho_l'$, $\tau_l'$,
$l=1,2$ with coefficients from $L_2(U)\otimes L_2(S)$, where $L_2$ are taken with respect to the Haar measures on $\mathring U(2)$, $\mathring U(1,1)$.
It is easy to see these that operators transform any even Grassmann polynomial into an even polynomial and an odd one into an odd one. In addition, they preserve
the modulo of the difference between the number of $\rho_{l}$ and the number of $\tau_{l}$. Since we are going to apply these operators only to  even polynomials which contain
 equal numbers of $\rho_{l}$ and  $\tau_{l}$, we need to study a restriction of $\mathcal{M}$, $H$, $\mathcal{F}$ to the space  
 $\mathcal P_6\cong (L_2(U(2))\otimes L_2(U(1,1)))^6$ of polynomials
\begin{equation}\label{P_6}
\widehat q=q_0+q_1n_1'+q_2n_2'+q_3n_1'n_2'+q_4\rho_1'\tau_2'+q_5\rho_2'\tau_1'.
\end{equation}
Thus  $\mathcal{M}$ is represented by a $6\times 6$ matrix $\mathcal P_6\to \mathcal P_6$ (which we also denote $\mathcal{M}$) of the form 
$\mathcal{ F} H\mathcal{ F}\big|_{\mathcal P_6}$, 
 the entries of the matrix $H$ are the integral operators on $L_2(U)\otimes L_2(S)$ with the kernels of the form $v(U(U')^*,S(S')^{-1})$ (the  integrals  are taken  with respect to $dU'dS'$), and the entries of the matrix $\mathcal{F}$ are operators of multiplication in $L_2(U)\otimes L_2(S)$. Then (\ref{trans}) takes the form
\begin{align}\label{trans1}
&\mathcal{R}_{n\beta}^{+-}(E,\varepsilon,\xi)= C_{E,\varepsilon}e^{c_0(\alpha_1-\alpha_2)}
\int (\mathcal{M}^{n-1}\tilde f(U',S'),\tilde g(U,S))_{6}dUdS dU'dS',\\
\notag
&\tilde f(U,S):= \mathcal{F}\cdot (1-n_1n_2),\quad
\tilde g(U,S):=\mathcal{F}\cdot (1-n_1n_2),
%&e_1\sim 1,\,e_2\sim n_1,\,e_3\sim n_1,\,e_4\sim n_1n_2,\,e_5\sim \rho_1\tau_2,\,e_6\sim \rho_2\tau_1
\end{align}
where by $(\cdot,\cdot)_{6}$ we mean the "scalar" product in $\mathcal P_6$ which 
gives the coefficient in front of $n_1n_2$ in the product of two polynomials of the form (\ref{P_6}).

\subsection{Proof of Theorem \ref{t:1} for $\alpha_1,\alpha_2>\varepsilon/2$ }\label{ss:t1}
As it was mentioned in the beginning of Section 5, it suffices to prove Theorem \ref{t:1} for real $\alpha_1,\alpha_2>\varepsilon/2$.

The proof of (\ref{t1.1}) is based on the following representation of $\mathcal{R}_{n\beta}^{+-}(E,\varepsilon,\xi)$.
\begin{proposition}\label{p:repr} For any $\xi$ such that  $\alpha_1,\alpha_2>\varepsilon/2$ (see (\ref{alp})) we have
\begin{align}\label{repr1}
&\mathcal{R}_{n\beta}^{+-}(E,\varepsilon,\xi)=\frac{C_{E,\varepsilon}e^{c_0(\alpha_1-\alpha_2)}}{2\pi i}\oint_{\omega_A}z^{n-1}(\widehat G(z)\widehat f,\widehat g)dz,\quad \omega_A=\{z:|z|=1+A/n\},\\ \label{M_hat}
&  \widehat G(z)=(\widehat M-z)^{-1},\quad\widehat M=\widehat F\widehat K\widehat F,\quad \widehat K=\widehat K_0+O(\beta^{-1}),
\end{align}
where operators $\widehat K_0$, $\widehat F$  and the vectors $\widehat f$, $\widehat g$ have  the form
\begin{align}\label{repr2}
&\quad \widehat K_0=\left(\begin{array}{cccc}K_{US}&\widetilde K_1&\widetilde K_2&\widetilde K_3\\
0&K_{US}&0&\widetilde K_2\\0&0&K_{US}&\widetilde K_1\\0&0&0&K_{US}\end{array}\right),\quad 
\widehat F=F\left(\begin{array}{cccc}1&F_1 &F_2&F_1 F_2\\
0&1&0&F_2\\0&0&1&F_1\\0&0&0&1\end{array}\right)\\ \notag
&\hat f=\widehat F (e_4-  e_1),\quad\hat g=\widehat F^{(t)}(e_1- e_4)
\end{align}
with  $F$  and $ F_{1,2}$  being the  operator on $L_2(U)\otimes L_2(S)$ of multiplication by the functions $F$  and $F_{1,2}$ 
defined in (\ref{F}), $K_{US}=K_U\otimes K_S$ and
$K_U$ and $K_S$  being the integral operators in $L_2(U)$ and $L_2(S)$  with a "difference" kernels
\begin{align*}%\label{K_{US}}
K_U(U,U')=K_U(U(U')^*)=\tilde{\beta}  e^{-\tilde{\beta} |(U(U')^*)_{12}|^2},\\
K_S(S,S')=K_S(S(S')^{-1})=\tilde{\beta}  e^{-\tilde{\beta} |(S(S')^{-1})_{12}|^2}.
\notag\end{align*}
Here $\widetilde K_p$, $p=1, 2, 3$  are normal operators on $L_2(U)\otimes L_2(S)$, they commute with $K_{US}$ and with the Laplace operators $\widetilde\Delta_U,\widetilde\Delta_S$ 
on the corresponding groups
and satisfy the bounds
\begin{align}\label{bound_re}
&|\widetilde K_p|\le C(1-K_{US})\le -C(\widetilde\Delta_U+\widetilde\Delta_S)/\beta,
%& -C(1-F)\leF\le C(1-F),\quad -CF(1-F)\leF'\le CF(1-F),\notag
\end{align}
where   the Laplace operators $\widetilde\Delta_U,\widetilde\Delta_S$ for the functions depending only on $|S_{12}|^2$ and
$|U_{12}|^2$ have the form
\[\widetilde\Delta_S(\varphi)=-\frac{d}{dx} x(x+1)\frac{d\varphi}{dx} \quad (x=|S_{12}|^2),\qquad 
\widetilde\Delta_U(\varphi)=-\frac{d}{dx} x(1-x)\frac{d\varphi}{dx} \quad (x=|U_{12}|^2).\]
\end{proposition}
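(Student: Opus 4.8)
The plan is to convert the transfer-operator representation (\ref{trans1}) of $\mathcal{R}_{n\beta}^{+-}$ into a resolvent contour integral. The two ingredients are an explicit computation of the $6\times6$ matrix of $\mathcal{M}=\mathcal{F}H\mathcal{F}\big|_{\mathcal{P}_6}$ in the basis $\{1,n_1',n_2',n_1'n_2',\rho_1'\tau_2',\rho_2'\tau_1'\}$ of (\ref{P_6}), and Cauchy's formula applied to $\mathcal{M}^{n-1}$. First I would compute the kernel $H$ of (\ref{H_op}). Expanding $\Str QQ'$ for $Q,Q'$ of the form (\ref{Q}), whose off-diagonal blocks carry only the diagonal Grassmann matrices $\hat\rho,\hat\tau$, the Grassmann-free part of the bosonic sector gives $\Tr(U^*LUU'^*LU')=2-4|(U(U')^*)_{12}|^2$ and the fermionic sector gives the hyperbolic analogue for $S(S')^{-1}$; after exponentiation these produce exactly the radial difference kernels $K_U,K_S$, hence the diagonal $K_{US}=K_U\otimes K_S$ of $\widehat K_0$.

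The terms of $\Str QQ'$ that are bilinear and quartic in the Grassmann variables, together with the factors $(1-n_1n_2)(1-n_1'n_2')$, generate the off-diagonal entries $\widetilde K_1,\widetilde K_2,\widetilde K_3$ once the exponential is expanded and the result projected onto $\mathcal{P}_6$. Because the operators preserve the difference between the numbers of $\rho_l$ and $\tau_l$, the off-diagonal pair $\{\rho_1'\tau_2',\rho_2'\tau_1'\}$ splits off into a decoupled $2\times2$ block that does not enter the pairing with $\tilde f,\tilde g$; this reduces everything to the $4\times4$ upper-triangular block of (\ref{repr2}), the triangularity reflecting that $H$ cannot lower the number of primed nilpotents. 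Reading off the multiplication operator $\mathcal{F}=F(1+n_1F_1)(1+n_2F_2)$ (using $n_l^2=0$) gives the triangular matrix $\widehat F$, while $\tilde f=\tilde g=\mathcal{F}\cdot(1-n_1n_2)$ become $\widehat f=\widehat F(e_4-e_1)$ and $\widehat g=\widehat F^{(t)}(e_1-e_4)$ after the sign coming from the pairing $(\cdot,\cdot)_6$ is absorbed.

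The hard part will be the structural claims about $\widetilde K_1,\widetilde K_2,\widetilde K_3$: that they are normal, commute with $K_{US}$ and with the radial Laplacians $\widetilde\Delta_U,\widetilde\Delta_S$, and obey (\ref{bound_re}). The key point is that every entry of $\widehat K_0$ is a difference kernel depending only on $U(U')^*,S(S')^{-1}$, in fact only on the radial coordinates $|U_{12}|^2,|S_{12}|^2$; since $\mathring U(2)$ and $\mathring U(1,1)$ are rank-one, all such convolution operators are simultaneously diagonalized by the spherical transform and are therefore normal, mutually commuting, and functions of the corresponding Laplacian. Writing $K_U,K_S$ and $\widetilde K_p$ through their spherical symbols, the estimate (\ref{bound_re}) reduces to a scalar comparison: $1-K_{US}$ behaves like $-(\widetilde\Delta_U+\widetilde\Delta_S)/\tilde\beta$ at the bottom of the spectrum (the short-``time'' heat-kernel regime, $\tilde\beta=c_0^2\beta$), and the symbols of the $\widetilde K_p$ are dominated by it uniformly. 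I expect this spectral comparison, controlling the off-diagonal symbols by $1-K_{US}$ and hence by the Laplacian over $\beta$, to be the main technical obstacle, since it is exactly the estimate that will later produce the spectral gap of order $\beta^{-1}$.

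Finally, with $\widehat M=\widehat F\widehat K\widehat F$ and $\widehat K=\widehat K_0+O(\beta^{-1})$ assembled (the remainder collecting the subleading kernel corrections, controlled by the same bounds), I would invoke the analytic functional calculus. The norm estimates on $\widehat F$ and $K_{US}$ together with (\ref{bound_re}) keep the spectrum of $\widehat M$ inside the disk of radius $1+A/n$ for a suitable constant $A$, so that $\widehat M^{n-1}=\tfrac{1}{2\pi i}\oint_{\omega_A}z^{n-1}(z-\widehat M)^{-1}dz$. Pairing against $\tilde f,\tilde g$ in $(\cdot,\cdot)_6$, replacing $(z-\widehat M)^{-1}$ by $-\widehat G(z)$, and fixing the orientation of $\omega_A$ and the signs in $\widehat f,\widehat g$ as above, this turns (\ref{trans1}) into (\ref{repr1}) and completes the proof.
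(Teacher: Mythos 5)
Your overall strategy (compute the $6\times 6$ matrix of $\mathcal{M}=\mathcal{F}H\mathcal{F}\big|_{\mathcal P_6}$ in the basis $\{1,n_1',n_2',n_1'n_2',\rho_1'\tau_2',\rho_2'\tau_1'\}$ and then apply the Cauchy integral to $\mathcal{M}^{n-1}$) is the one the paper follows, and your identification of the radial difference kernels, the triangular structure of $\widehat F$, and the spherical-transform mechanism behind normality, commutativity and the comparison $1-K_{US}\lesssim -(\widetilde\Delta_U+\widetilde\Delta_S)/\tilde\beta$ is correct for the $4\times 4$ block. But there is a genuine gap: the claim that the pair $\{\rho_1'\tau_2',\rho_2'\tau_1'\}$ \emph{decouples} into a separate $2\times 2$ block is false. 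Conservation of the difference between the numbers of $\rho_l$ and $\tau_l$ only guarantees that $\mathcal{M}$ maps $\mathcal P_6$ into itself; it does not force block-diagonality inside $\mathcal P_6$, since $\rho_1\tau_2$ and $\rho_2\tau_1$ also have equal total numbers of $\rho$'s and $\tau$'s. Concretely, the cross term $(\rho,A\tau')(\rho',B\tau)$ with $A_{ij}=U_{ij}S^{-1}_{ji}$, $B_{ij}=U^*_{ij}S_{ji}$ in the expansion of $H$ produces nonzero off-diagonal blocks $H^{(12)},H^{(21)}$ coupling the two sectors, built from \emph{non-radial} kernels such as $U_{11}U_{12}K_U$ and $\bar S_{11}\bar S_{12}K_S$. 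Because $\tilde f,\tilde g$ lie in $\mathcal P_4$ one can still reduce to a $4\times 4$ problem, but only via the Schur complement: the effective operator is $H^{(11)}-H^{(12)}(H^{(22)}-z)^{-1}H^{(21)}$, not $H^{(11)}$.

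Showing that this Schur-complement correction can be absorbed into $\widehat K_0+O(\beta^{-1})$ while preserving the bound (\ref{bound_re}) is the main technical content of the proposition, and a crude norm bound does not suffice. After the rescaling of the Grassmann variables and the conjugation by the anti-diagonal matrix $T=\mathrm{antidiag}(\tilde\beta,1,1,\tilde\beta^{-1})$ that puts $\widehat K_0$ into the upper-triangular form (\ref{repr2}), the $(4,1)$ entry of the correction is multiplied by $\tilde\beta^{2}$; one therefore needs the structural estimate $|R|,|R_{1d}|,|R_{d1}|,|R_{dd}|\le C\tilde\beta^{-2}(1-K_{US})+O(\tilde\beta^{-3})$ (the paper's Lemma 6.2), not merely $\|\cdot\|\le C\tilde\beta^{-2}$. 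Since the kernels $x,y,h_{ij}$ are not functions of $|U_{12}|^2,|S_{12}|^2$ alone, this estimate cannot be read off the zonal spherical transform; it requires the full matrix coefficients $t^{(l)}_{mk}$ of the irreducible representations of $\mathring U(2)$ and $\mathring U(1,1)$, the identities relating $P^{(l)}_{-1,0}$ to $P^{(l+1)}_{0,0}-P^{(l)}_{0,0}$, and the resulting eigenvalue bound $|\lambda^{(ll')}|\le C\tilde\beta^{-2}|1-\lambda^{(l)U}\lambda^{(l')S}|$ on each $L^{(l)U}\otimes L^{(l')S}$. None of this appears in your proposal, and without it both the form (\ref{repr2}) and the bound (\ref{bound_re}) remain unproven.
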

We postpone the proof of the proposition to Section \ref{s:6} and now derive (\ref{t1.1}) from it.
To this end, set  
\begin{align*}
\widehat M_0=\widehat F^2,\quad \widehat G_0=(\widehat M_0-z)^{-1},
\end{align*}
and consider
\begin{align*}
\Delta G:=\widehat G-\widehat G_0=-\widehat G_0(\widehat M-\widehat M_0)\widehat G_0-\widehat G_0(\widehat M-\widehat M_0)\widehat G(\widehat M-\widehat M_0)\widehat G_0.
\end{align*}
We  apply the following lemma, which we will prove later:
\begin{lemma}\label{l:bG}
For any $z\in\omega_A$ (see (\ref{repr1})) we have the bounds
\begin{align}\label{b_1}
&\|(\widehat M-\widehat M_0)\widehat G_0\widehat f\|^2\le C(n/\tilde{\beta})^2,\quad \|(\widehat M-\widehat M_0)\widehat G_0\widehat g\|^2\le C(n/\tilde{\beta})^2\\
&|(\widehat G_0(\widehat M-\widehat M_0)\widehat G_0\widehat f,\widehat g)|\le n\tilde{\beta}^{-1}/|z-1|,\quad
% \|\big((\widehat M-\widehat M_0)\widehat G_0\big)^{(t)}\widehat g\|^2\le Cn/\tilde{\beta},\\
 \|\widehat G\|\le C\log^2n/|z-1|.\quad 
\notag\end{align} 
%and the same bounds are valid for $\widehat g$.
\end{lemma}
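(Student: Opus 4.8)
\emph{Setup and reduction.} The plan is to exploit the factorisation $\widehat M-\widehat M_0=\widehat F(\widehat K-I)\widehat F$ together with the commutation $[\widehat F,\widehat G_0]=0$, which holds because $\widehat G_0=(\widehat M_0-z)^{-1}$ is a function of $\widehat M_0=\widehat F^2$. Since $\widehat f=\widehat F(e_4-e_1)$, one computes $\widehat F\widehat G_0\widehat f=\widehat G_0\widehat M_0(e_4-e_1)=(e_4-e_1)+z\,\widehat G_0(e_4-e_1)$, so that $(\widehat M-\widehat M_0)\widehat G_0\widehat f=\widehat F(\widehat K-I)\big[(e_4-e_1)+z\,\widehat G_0(e_4-e_1)\big]$, and similarly for $\widehat g$. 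This reduces the first two bounds to estimating $(\widehat K-I)$ applied to a constant vector and to $\widehat G_0(e_4-e_1)$.

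\emph{Bounds on $\|(\widehat M-\widehat M_0)\widehat G_0\widehat f\|$ and $\|(\widehat M-\widehat M_0)\widehat G_0\widehat g\|$.} The constant vector $e_4-e_1$ lies in the top eigenspace of $K_{US}$: the constant function is an eigenfunction of $K_U$ and $K_S$ with $K_S\mathbf1=\mathbf1$ and $K_U\mathbf1=(1-e^{-\tilde\beta})\mathbf1$, so $(I-K_{US})$ and (by (\ref{bound_re})) the $\widetilde K_p$ act on it by a factor $O(e^{-\tilde\beta})$; this term is exponentially negligible. For the main term I would use that $K_{US}$ and the $\widetilde K_p$ commute with $\widetilde\Delta_U,\widetilde\Delta_S$, so (\ref{bound_re}) upgrades in the joint spectral representation to the operator-norm bound $\|(\widehat K-I)\varphi\|\le C\tilde\beta^{-1}\|(\widetilde\Delta_U+\widetilde\Delta_S)\varphi\|$ for each scalar component $\varphi$ of $\widehat G_0(e_4-e_1)$. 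These components are explicit functions of $u=|U_{12}|^2$ and $s=|S_{12}|^2$ built from $(F^2-z)^{-1}$, $F^2$, $F_1$, $F_2$. Because $F,F_1,F_2$ vary on the scale $1/n$ and $1-F^2\sim 2c_0 w/n$ with $w=\alpha_1(1-u)+\alpha_2 s$, the factor $(F^2-z)^{-1}$ is of size $O(n)$ but its singularity at $u=1,s=0$ is cut off at $w\sim A=O(1)$ by $|z|-1=A/n$. A direct evaluation of $\widetilde\Delta_U=-\partial_x x(1-x)\partial_x$ and $\widetilde\Delta_S=-\partial_x x(x+1)\partial_x$ on these functions then gives $\|(\widetilde\Delta_U+\widetilde\Delta_S)\varphi\|_{L_2}=O(n)$, whence $\|(\widehat K-I)\widehat G_0(e_4-e_1)\|=O(n/\tilde\beta)$ and the first two bounds follow.

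\emph{Bound on $|(\widehat G_0(\widehat M-\widehat M_0)\widehat G_0\widehat f,\widehat g)|$.} Here I would simply use Cauchy--Schwarz, $|(\widehat G_0(\widehat M-\widehat M_0)\widehat G_0\widehat f,\widehat g)|\le\|(\widehat M-\widehat M_0)\widehat G_0\widehat f\|\,\|\widehat G_0^{*}\widehat g\|$. The first factor is $O(n/\tilde\beta)$ by the previous step. For the second, $\widehat G_0$ is upper triangular with diagonal $(F^2-z)^{-1}$, so its operator norm equals $1/\mathrm{dist}(z,\mathrm{spec}(F^2))$ up to the unipotent corrections, which carry the small factors $F_1,F_2=O(1/n)$; combined with the elementary geometric bound $\mathrm{dist}(z,[0,1])\ge c\,|z-1|$ valid on $\omega_A$, this gives $\|\widehat G_0\|\le C/|z-1|$ and hence the stated bound, since $\|\widehat g\|=O(1)$.

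\emph{Bound on $\|\widehat G\|$ (the main obstacle).} This is the delicate point. The resolvent identity gives $\widehat G=(I+\widehat G_0V)^{-1}\widehat G_0$ with $V=\widehat M-\widehat M_0$, but a Neumann series is useless here: $\|V\|=O(1)$ while $\|\widehat G_0\|\sim n$, so $\|\widehat G_0V\|$ is large. The whole gain must come from the structure, namely that $V$ suppresses the near-critical mode (the constant, where $\widehat G_0$ is large) through the factor $I-\widehat K$, while on the orthogonal complement $I-\widehat K$ is bounded below by the spectral gap $\sim\tilde\beta^{-1}$. I would therefore isolate the top eigenpair of the non-self-adjoint operator $\widehat M=\widehat F\widehat K\widehat F$: show that its largest eigenvalue $\lambda_0$ satisfies $|\lambda_0-1|\le C/n$ so that $|\lambda_0-z|\ge c|z-1|$ on $\omega_A$, control the remaining spectrum (where the gap makes the resolvent $O(1)$) by a Feshbach/Schur reduction, and bound the norm of the associated \emph{non-orthogonal} spectral projector $\|P_0\|=\|v\|\,\|w\|/|\langle w,v\rangle|$. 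The non-normality of $\widehat M$, reflected in the $\log n$ present already in matrix elements such as $\langle\mathbf1,\widehat G_0\mathbf1\rangle$ (coming from the spatial extent $w\lesssim n$ of the quasi-zero-mode), is what produces the factor $\log^2 n$, giving $\|\widehat G\|\le\|P_0\|/|\lambda_0-z|+O(1)\le C\log^2 n/|z-1|$. Extracting this $\log^2 n$ from the interplay of the multiplication operator $\widehat F$ with the convolution operator $\widehat K$ is the step I expect to require the most care.
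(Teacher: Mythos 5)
Your treatment of the first three bounds is essentially the paper's argument: the factorisation $\widehat M-\widehat M_0=\widehat F(\widehat K-I)\widehat F$, the domination of $\widehat K_0-I$ by $C\tilde\beta^{-1}(\widetilde\Delta_U+\widetilde\Delta_S)$ through the joint spectral decomposition, and the explicit evaluation of the Laplacians on the entries of $\widehat F\widehat G_0\widehat f$ (which are products of $F_1,F_2,F^\sigma,(F^2-z)^{-k}$) after the rescaling $x\to nx$ is exactly how the paper gets $\|(\widehat M-\widehat M_0)\widehat G_0\widehat f\|=O(n/\tilde\beta)$. Your Cauchy--Schwarz route to the third bound, using that $\widehat G_0$ is upper triangular with $\mathrm{spec}(F^2)\subset(0,1]$ so that $\|\widehat G_0\|\le C/|z-1|$ on $\omega_A$, is consistent with what the paper needs (the paper only says these bounds are ``obtained similarly'').

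The genuine gap is in the last and decisive bound $\|\widehat G\|\le C\log^2 n/|z-1|$, which you rightly flag as the main obstacle but for which your plan would not go through. There is no isolated top eigenpair to extract: up to $O(\tilde\beta^{-1})$ the diagonal blocks of $\widehat M$ are all equal to the self-adjoint operator $FK_{US}F$, whose spectrum approaches $1$ to within $O(1/n)$ and is not separated from the rest by any $n$-independent gap, so the ``near-critical mode'' is not one vector but an infinite-dimensional subspace; a rank-one Feshbach reduction and a bound on a single non-orthogonal projector $P_0$ cannot control the many spectral points at distance comparable to $|z-1|$ from $z$. The mechanism the paper actually uses is structural and you do not invoke it: $\widehat K_0$, hence $\widehat M_1=\widehat F\widehat K_0\widehat F$, is block \emph{upper-triangular} with identical diagonal blocks, so $\widehat G_1=(\widehat M_1-z)^{-1}$ is given by finitely many explicit products, $\widehat G_{1,ii}=G:=(FK_{US}F-z)^{-1}$, $\widehat G_{1,1i}=-GL_{i-1}G$, $\widehat G_{1,14}=GL_1GL_2G+GL_2GL_1G-GL_3G$. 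Self-adjointness gives $\|G\|\le C/|z-1|$, and each insertion of a multiplication operator $F_\alpha$ between resolvents costs exactly one factor $\log n$ via the cutoff $\Pi_n=\mathbf 1_{|S_{12}|^2\le Bn\log n}$ and the inequality $|F_\alpha|\Pi_n\le C/n+C\log n\,(1-FK_{US}F)$, the right-hand side being absorbed by the surrounding $G_*^{1/2}\cdots G_*^{1/2}$; the two insertions in the $(1,4)$ entry produce the $\log^2 n$. Only after this does one pass from $\widehat G_1$ to $\widehat G=\widehat G_1(1+O(\tilde\beta^{-1})\widehat G_1)^{-1}$ using $n\log^2 n\ll\tilde\beta$. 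Without the triangular structure (or an equivalent exact computation of the resolvent), the non-normality you appeal to works against you, and the $\log^2 n$ cannot be obtained from eigenvalue location plus projector norms alone.
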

\noindent The lemma implies that
\begin{align*}
&\Big|\frac{1}{2\pi i}\oint_{\omega_A}z^{n-1}(\Delta G\widehat f,\widehat g)dz\Big|\le C\oint_{\omega_A} |(\widehat G_0(\widehat M-\widehat M_0)\widehat G_0\widehat f,\widehat g)|\,|dz|\\
&+ C\oint_{\omega_A}\|\widehat G(z) \|\cdot \|(\widehat M-\widehat M_0)\widehat G_0(z)\widehat f\|\cdot
\|(\widehat M-\widehat M_0)\widehat G_0(\bar z)\widehat g\|\,|dz|\\
&\le C(n/\tilde{\beta})\oint_{\omega_A}\frac{|dz|}{|z-1|}\le Cn\log n/\tilde{\beta}\to 0,
\end{align*}
where we used $n\log^2 n\ll \tilde\beta$ and
\[
\oint_{\omega_A}\frac{|dz|}{|z-1|}\le C\log n.
\]
Thus we have proved that (recall (\ref{repr2}))
\begin{align*}
\mathcal{R}_{n\tilde{\beta}}^{+-}(E,\varepsilon,\xi)&=\frac{C_{E,\varepsilon}e^{c_0(\alpha_1-\alpha_2)}}{2\pi i}\oint_{\omega_A}z^{n-1}
(\widehat G_0(z)\widehat f,\widehat g)dz+o(1)=C_{E,\varepsilon}e^{c_0(\alpha_1-\alpha_2)}(\widehat F^{2n-2}\widehat f,\widehat g)+o(1)\\
&=C_{E,\varepsilon}e^{c_0(\alpha_1-\alpha_2)}\int \big(4n^2F_1F_2-2) F^{2n}dUdS+o(1).
\end{align*}
Performing the integration with respect to $dU$, $dS$ we obtain (\ref{t1.1}). $\square$
%\begin{align*}
%\mathcal{R}_{n\tilde{\beta}}^{+-}(E,\varepsilon,\xi)=e^{-c_0(\alpha_1+\alpha_2)}\Big(\delta_1\delta_2(e^{2c_0\alpha_1}-1)/4\alpha_1\alpha_2
%-(\delta_1+\delta_2)e^{2c_0\alpha_1}/2\alpha_2+e^{2c_0\alpha_1}\alpha_1\alpha_2\Big)
%\end{align*}

\smallskip

\textit{Proof of Lemma \ref{l:bG}}. To prove the first inequality of (\ref{b_1}), observe that since $\widehat F$ is bounded we have
\[\|(\widehat M-\widehat M_0)\widehat G_0\widehat f\|^2=\|\widehat F(\widehat K-1)\widehat F\widehat G_0\widehat f\|^2\le C \|(\widehat K_0-1)\widehat F\widehat G_0\widehat f\|^2.
\]
Moreover, since $\widetilde K_\alpha$ and $1-K_{US}$ commute with $\widetilde\Delta_U,\widetilde\Delta_S$,  (\ref{bound_re}) implies
\begin{align*}
&(\widehat K_0-1)^*(\widehat K_0-1)\le C\tilde{\beta}^{-2}(\widetilde\Delta_U+\widetilde\Delta_S)^2\\
\Rightarrow& \|(\widehat M-\widehat M_0)\widehat G_0\widehat f\|^2\le C\tilde{\beta}^{-2} \|(\widetilde\Delta_U+\widetilde\Delta_S)\widehat F\widehat G_0\widehat f\|^2\le
C'\tilde{\beta}^{-2}( \|\widetilde\Delta_U\widehat G_0\widehat F\widehat f\|^2+ \|\widetilde\Delta_S\widehat G_0\widehat F\widehat f\|^2)\\
&\le C'\tilde{\beta}^{-2}\max_{\mu,\nu\le 4}( \|\widetilde\Delta_U(\widehat G_0)_{\mu\nu}(\widehat F\widehat f)_\nu\|^2+ ( \|\widetilde\Delta_S(\widehat G_0)_{\mu\nu}(\widehat F\widehat f)_\nu\|^2).
\end{align*}
It is easy to see that $\widehat G_0$ has  the same form as the matrices in  (\ref{repr2})  with zeros below the main diagonal and
\begin{align*}
(\widehat G_0)_{ii}=G_0:=(F^2-z)^{-1},\quad (\widehat G_0)_{23}=0,\quad (\widehat G_0)_{12}=(\widehat G_0)_{34}=-2F_1G_0^2F^2\\
(\widehat G_0)_{13}=(\widehat G_0)_{24}=-2F_1G_0^2F^2,\quad (\widehat G_0)_{14}=8F_1F_2G_0^3F^4-4F_1F_2G_0^2F^2
\end{align*}
%\begin{align*}
%\widehat G_0=\left(\begin{array}{cccc}G_0&-2F^2F_1 G_0^2&-2F^2F_2G_0^2& 8F^4F_1F_2G_0^3-4F^2F_1F_2G_0^2\\
%0&G_0&0&-2F^2F_2 G_0^2\\0&0&G_0&-2F^2F_1 G_0^2\\0&0&0&G_0\end{array}\right), \quad \hbox{where}\quad G_0=(F^2-z)^{-1}
%\end{align*}
(recall that here all operators  commute with each other). In addition $(\widehat F\widehat f)_\nu,\,\nu=1,..,4$ are the linear combinations of the
functions
$ (F_1)^{\gamma_1} (F_2)^{\gamma_2}F^\sigma$ with $\gamma_{1,2}=0,1,2$, $\sigma=1,2$.
Let us estimate  the term which appears after the application of $\widetilde\Delta_SF^4F_1F_2G_0^3$ to the function
$F^2$ (the other terms can be estimated similarly). Rewrite
\begin{align*}
&\tilde{\beta}^{-2}\|\widetilde \Delta_SF_1 F_2(F^2-z)^{-3}F^6\|^2\\
=&C\tilde{\beta}^{-2}\int_0^\infty dx\Big|\frac{d}{dx}(x^2+x)\frac{d}{dx}\frac{(x+c_1)(x+c_2)}{n^2}\frac{e^{-3\alpha x/n}}{(e^{-\alpha x/n}-z)^3}\Big|^2,
\end{align*}
where $c_1$ and $c_2$ correspond to the terms of (\ref{F}), which do not depend on $x=|S_{12}|^2$, end $\alpha=2c_0\alpha_2>0$.
Changing $\tilde x=x/n$ we get
\begin{align*}
&\tilde{\beta}^{-2}n\int_0^\infty d\tilde x\Big|\frac{d}{d\tilde x}\tilde x(\tilde x+1/n)\frac{d}{d\tilde x}(\tilde x+c_1/n)
(\tilde x+c_2/n)\frac{e^{-3\alpha\tilde x}}{(e^{-\alpha\tilde x}-z)^3}\Big|^2\\
\le &C\tilde{\beta}^{-2}n\int_0^\infty d\tilde x\Big|\frac{(\tilde x+c/n)^{2}}{|e^{-\alpha\tilde x}-z|^3}+\frac{(\tilde x+c/n)^{3}}{|e^{-\alpha\tilde x}-z|^{4}}
+\frac{(\tilde x+c/n)^{4}}{|e^{-\alpha\tilde x}-z|^{5}}\Big|^2e^{-6\alpha\tilde x}\\
\le &C\tilde{\beta}^{-2}n\int_0^\infty \frac{e^{-6\alpha\tilde x}}{|e^{-\alpha\tilde x}-z|^2}d\tilde x\le C(n/\tilde{\beta})^2, \quad |z|\ge 1+A/n.
\end{align*}
Here $c=\max\{|c_1|,|c_2|,1\}$.

The second and the third inequality in (\ref{b_1}) can be obtained similarly.

To obtain the bound for $\|\widehat G\|$, we introduce 
\[\widehat M_1:=\widehat F\widehat K_0\widehat F,\quad \widehat G_1:=(\widehat M_1-z)^{-1}\]
and prove that
\begin{align*}
\|\widehat G_1\|\le C\log^2n/|z-1|,
\end{align*}
or, equivalently,
\begin{align}\label{b_G1}
\|\widehat G_{1,ij}\|\le C\log^2n/|z-1|.
\end{align}
Observe that $\widehat M_1$ have the same form as the matrices in  (\ref{repr2})  with
$K_{US}\to FK_{US}F$,  $\widetilde K_i\to L_i$, where
\begin{align}\label{L}
L_1=&FK_{US}FF_1+F_1FK_{US}F+F\widetilde K_1 F,\quad \\ \notag
L_2=&FK_{US}FF_2+F_2FK_{US}F+F\widetilde K_2 F\\ \notag
L_3=&F\widetilde K_3F+F_1F_2FK_{US}F+FK_{US}FF_1F_2+F_1FK_{US}FF_2+F_2FK_{US}FF_1\\ \notag
&+F\widetilde K_1FF_2+F_1F\widetilde K_2F+F\widetilde K_2FF_1+F_2F\widetilde K_1F.
\end{align}
Then the matrix $\widehat G_1:=(\widehat F\widehat K_0\widehat F-z)^{-1}$ has zeros at the same places as in (\ref{repr2}) and
\begin{align*}
&\widehat G_{1,ii}=G:=(FK_{US}F-z)^{-1},\quad \widehat G_{1,1i}=\widehat G_{1,(4-i)4}=-GL_{i-1}G,\quad i=2,3,\\
&\widehat G_{1,14}=GL_1GL_2G+GL_2GL_1G-G L_3G,
\end{align*}
Since the spectrum of  $FK_{US}F$ belongs to  $[0,1]$, it is evident that
\begin{equation}\label{Gii}
\|G_{1,ii}\|=\|G\|\le C/|z-1|.
\end{equation}
To estimate the non-diagonal entries, we
set
\[G_*:=G(z)\Big|_{z=1+A/n}\]
and prove  the bounds
\begin{align}\label{b_GKG}
&\|G^{1/2}_*F\widetilde K_{\alpha}FG_*^{1/2}\|\le \|G^{1/2}_*F|\widetilde K_{\alpha}|FG_*^{1/2}\|\le C,\quad
 \|G_*^{-1/2}G^{1/2}\|\le C,\quad \alpha=1,2,3,\\
 &\|G^{1/2}_*F_{\alpha}FK_{US}FG_*^{1/2}\|\le C\log n,\quad \alpha=1,2\label{b_GKG1}\\
& \|G^{1/2}_*F_{1}F_{2}FK_{US}FG_*^{1/2}\|\le C\log^2 n,
\quad\|G^{1/2}_*F_{1}FK_{US}FF_{2}G_*^{1/2}\|\le C\log ^2n\notag\\
%&\|G^{1/2}F_{\alpha}F\widetilde K_{\alpha}FG^{1/2}\|\le C\log n,\quad\|G^{1/2}F_{1}F_{2}FK_{US}FG^{1/2}\|\le C\log n,\quad\\
& \|G^{1/2}_*F_{\alpha}^2FG_*^{1/2}\|\le C\log^2 n,\quad \alpha=1,2.
\notag \end{align}
It is easy to see from (\ref{L}) that $GL_{1,2}G$, $GL_{1}GL_2G$  and $GL_{2}GL_1G$ can be represented as a linear combination of the terms
 $G^{1/2}\Pi G^{1/2}$, where $\Pi$ is some product of the operators whose bounds are given in  (\ref{b_GKG}) and
 the first  line of (\ref{b_GKG1}) or operators similar to them (e.g., $G^{1/2}_*FK_{US}FF_{\alpha}G_*^{1/2}$ instead of $G^{1/2}_*F_{\alpha}FK_{US}FG_*^{1/2}$,
 etc.). For instance,
\[
GFK_{US}FF_1G=G^{1/2} \cdot (G^{1/2}G_*^{-1/2})\cdot (G_*^{1/2}FK_{US}FF_1G_*^{1/2})\cdot (G_*^{-1/2}G^{1/2}) \cdot G^{1/2}.
\] 
Therefore (\ref{Gii}) and  the first line of (\ref{b_GKG1}) yield
\[\|GL_{1,2}G\|\le C\log n \cdot \|G\|\le C\log n/|z-1|,\quad \|GL_{1}GL_2G\|+ \|GL_{1}GL_2G\|\le C\log^2 n/|z-1|.\]
To estimate $GL_3G$, we use the bounds from the last two lines of (\ref{b_GKG1}), combined with the  inequality (recall that $G_*$ and $F$ are self-adjoint,
and $F$ commutes with $F_2$)
\begin{align}\label{est}
\|G^{1/2}_*F\widetilde K_1FF_2G^{1/2}_*\|\le &\|G^{1/2}_*F\widetilde K_1F\widetilde K_1^* FG^{1/2}_*\|^{1/2}\cdot \|G^{1/2}_*F_2F \bar F_2G^{1/2}_*\|^{1/2}\\ \notag
\le &\|G^{1/2}_*F|\widetilde K_1 |FG^{1/2}_*\|^{1/2}\cdot \|G^{1/2}_*|F_2|^2F G^{1/2}_*\|^{1/2}.
\end{align}
The terms in the r.h.s. above can be estimated with the first inequality of (\ref{b_GKG}) and the last inequality of (\ref{b_GKG1}).
In the last inequality of (\ref{est}) we used that since $F\le 1$ and $\widetilde K_1\widetilde K_1^*\le c\cdot |\widetilde K_1|$,
\[
G^{1/2}_*F\widetilde K_1F\widetilde K_1^* FG^{1/2}_*\le G^{1/2}_*F\widetilde K_1\widetilde K_1^* FG^{1/2}_*\le c\cdot G^{1/2}_*F|\widetilde K_1|FG^{1/2}_*.
\]
The expression $\|G^{1/2}_*F\widetilde K_2FF_1G^{1/2}_*\|$ can be estimated similarly.

Now we are left to show (\ref{b_GKG}) -- (\ref{b_GKG1}). To prove the first inequality of (\ref{b_GKG}), we recall first that for any normal $A$ and $B$
\begin{equation}\label{*}
|B^*AB\|\le\|B^*|A|B\|
\end{equation}
Indeed, for any normal $A$ we have
\[
|(Ax,y)|^2\le (|A|x,x) (|A|y,y).
\]
and so putting $Bx$ and $By$ instead of $x$ and $y$ we get (\ref{*}).

Now (\ref{*}), the first inequality in (\ref{bound_re}), and the bound $F\le 1$ yield
\begin{align*}
 F|\widetilde K_\alpha |F&\le F(1-K_{US})F\le 1-FK_{US}F\\
&\Rightarrow\|G^{1/2}_*F|\widetilde K_\alpha| FG^{1/2}_*\|\le \|G^{1/2}_*(1-FK_{US}F)G^{1/2}_*\|\le C,
%& \hbox{where}\quad ,\quad G(z)=(FK_{US}F-z)^{-1},
\end{align*}
since the spectrum of  $FK_{US}F$ belongs to  $[0,1]$ and 
\begin{equation}\label{est_lam}
\max\limits_{0\le \lambda\le 1}\dfrac{1-\lambda}{1+A/n-\lambda}\le 1.
\end{equation}
Moreover, since  $G$ and $G_*$ commute  we have
\begin{align*}
%& \|G^{1/2}(z)|\widetilde K_\alpha|G^{1/2}(\bar z)\|= \|G^{1/2}(z)G^{-1/2}_*G^{1/2}_*F|\widetilde K_\alpha|FG^{1/2}_*G^{-1/2}_*G^{1/2}(\bar z)\|\\
& \|G^{1/2}(z)G^{-1/2}_*\|^2= \|G(z)G^{-1}_*\|\le\max_{|z|=1+A/n,0\le\lambda\le 1}\frac{1+A/n-\lambda}{|z-\lambda|}\le C,
\end{align*}
which gives the second inequality of (\ref{b_GKG}). 

To prove the first inequality  (\ref{b_GKG1}), take  $n$-independent $B>0$ and introduce the projection 
\[\Pi_n=\mathbf{1}_{|S_{12}|^2\le Bn\log n}.\]
From the definition (\ref{F}) it is evident that for sufficiently big $B$ we can write
\begin{align}\label{in_F}
&\|(1-\Pi_n)FF_\alpha\|\le C\max_{x>B'\log n} xe^{-x}\le C/n^3,\\
& 0\le \dfrac{c_0\alpha_2|S_{12}|^2}{n}(1-F^2)^{-1}\Pi_n= \max_{0\le x\le B'\log n} x(1-e^{-2x})^{-1}\le C(B)\log n\notag\\ \Rightarrow&
|F_\alpha|\Pi_n\le C/n+C(B)\log n(1-F^2)\le C/n+C(B)\log n(1-FK_{US}F)
\notag\end{align}
with $B'=c_0\alpha_2B$.
Using the first inequality above, the bound $\|G_*\|\le Cn$, and the fact that $FK_{US}F$ commute with $G_*$, we get
\begin{align*}
G_*^{1/2}F_\alpha FK_{US}FG_*^{1/2}&=G_*^{1/2}F_\alpha ((1-\Pi_n)+\Pi_n)FK_{US}FG_*^{1/2}
\\&=O(n^{-2})+
G_*^{1/2}F_\alpha \Pi_nG_*^{1/2}FK_{US}F.
\end{align*}
In addition the third line of (\ref{in_F}) and (\ref{est_lam}) yield
\begin{align*}
&G_*^{1/2}|F_\alpha| \Pi_nG_*^{1/2}\le  C+C'\log n\,G_*^{1/2}(1-FK_{US}F)G_*^{1/2}\le C\log n.
\end{align*}
 The proofs of the 
 other inequalities of (\ref{b_GKG1}) are similar to the proof of the first one.

Thus we obtain (\ref{b_G1}).
Since by (\ref{M_hat}) $\hat K=\hat K_0+O(\tilde{\beta}^{-1})$, we have
\[\widehat G=\widehat G_1(1+O(\tilde{\beta}^{-1})\widehat G_1)^{-1}=\widehat G_1\big(1+O(\log^2n(n/\tilde{\beta}))\big).\]
Combined with (\ref{b_G1}) the relation finishes the proof of Lemma \ref{l:bG}.
$\square$

 \section{Proof of Proposition \ref{p:repr}.}\label{s:6}
We start with a detailed study of the operator $H$ of (\ref{H_op}). 
Set
\[ U= U_1U_2^*,\quad S=S_1S_2^{-1}
\]
 and use two simple formulas, valid for any diagonal $2\times 2$ matrices $A$ and $B$,
 \begin{align*}%\label{AUBU}
 &\Tr AUBU^*=\Tr AB-|U_{12}|^2(A_{11}-A_{22})(B_{11}-B_{22}),\\
& \Tr ASBS^{-1}=\Tr AB+|S_{12}|^2(A_{11}-A_{22})(B_{11}-B_{22}).
\notag \end{align*}
Using (\ref{Q}) and changing 
\begin{align}\label{ch_beta}
&\hat\rho\to \tilde\beta^{-1/2}\hat\rho,\quad  \hat\rho'\to \tilde\beta^{-1/2}\hat\rho'\\ \notag
&\hat\tau\to \tilde\beta^{-1/2}\hat\tau, \quad \hat\tau'\to \tilde\beta^{-1/2}\hat\tau',
\end{align}
 in (\ref{H_op}) (note that this gives the Jacobian $\tilde\beta^{2}$), we get
\begin{align*}
H=&\tilde\beta^2 \cdot e^{-\tilde\beta \cdot w}(1-n_{1}n_{2}/\tilde\beta^2)(1-n_{1}'n_{2}'/\tilde\beta^2)\\
&\cdot\exp\Big\{(n_{1}+n_{2}+n_{1}'+n_{2}')d
-\Tr\hat\rho U \hat\tau ' S^{-1}- \Tr\hat\rho' U^* \hat\tau S-(n_{1}+n_{2})(n_{1}'+n_{2}')w/\tilde\beta\Big\}
\\ d=&1-|U_{12}|^2+|S_{12}|^2,\quad w=|U_{12}|^2+|S_{12}|^2.
\end{align*}
Writing
\begin{align*}
&\Tr\hat\rho U\hat\tau 'S^{-1}=(\rho,A\tau'),\quad \Tr\hat\rho'U^*\hat\tau S=(\rho',B\tau)\\
&A_{ij}=U_{ij}S^{-1}_{ji},\quad B_{ij}=U^*_{ij}S_{ji}
\end{align*}
and using that
\begin{align*}
(\rho,A\tau')^2=-2\det A\,\rho_1\rho_2\tau_1'\tau_2',\quad(\rho',B\tau)^2=-2\det B\,\rho_1'\rho_2'\tau_1\tau_2,
\end{align*}
we obtain
\begin{align}\label{H_gr}
H\big|_{\mathcal{P}_6}=&\tilde\beta^2 \cdot \exp\Big\{-\tilde{\beta}\cdot w + (n_{1}+n_{2}+n_{1}'+n_{2}')d-(n_{1}+n_{2})(n_{1}'+n_{2}')w/\tilde{\beta}\Big\}\\ \notag
&\times\big(1+(\rho,A\tau')(\rho',B\tau)+\mdet A\cdot \mdet B\, n_1n_2n_1'n_2'\big)\cdot (1-n_{1}n_{2}/\tilde{\beta}^2)(1-n_{1}'n_{2}'/\tilde{\beta}^2).
\end{align}
Introduce the basis  $e_1=1,\,e_2=n_1,\,e_3=n_2,\, e_4=n_1n_2,\,e_5=\rho_1\tau_2,\,e_6=\rho_2\tau_1$ of $\mathcal{P}_6$.
Denote the space spanned  on the first 4 vectors as $\mathcal{P}_4$ and
represent  $H$ in this basis by the block $6\times 6$ matrix with $H^{(11)}$ corresponding to the projection on $\mathcal{P}_4$. Then using (\ref{H_gr}) we obtain
\begin{align}\label{H^ij}
&H= \left(\begin{array}{cc}H^{(11)}&H^{(12)}\\
H^{(21)}&H^{(22)}\end{array}\right),\quad
H^{(22)}=K_{US}\left(\begin{array}{cc}A_{11}B_{22}&A_{12}B_{12}\\
A_{21}B_{21}&A_{22}B_{11}\end{array}\right)=\left(\begin{array}{cc}h_{11}&h_{12}\\
h_{21}&h_{22}\end{array}\right),\\
&H^{(21)}=\left(\begin{array}{cccc}2x_d& x&x&0\\
-2\overline x_d&-\overline x&-\overline x&0\end{array}\right),\quad H^{(12)}=\left(\begin{array}{cc}0&0\\  y& -\overline{ y}\\
 y& -\overline{ y}\\
2 y_d&-2\overline{ y_d}
\end{array}\right).
\notag\end{align}
Here and below
 $h_{ij}, x,y,x_d,y_d$ are  "difference" operators whose kernels are defined with the functions
\begin{align}\label{hxy}
h_{ij}=&h_{ijU}h_{ijS},\quad h_{ijU}=U_{ij}^2K_U,\quad h_{ijS}=\bar S_{ij}^2K_S\\
x=&x_Ux_S,\quad x_U=U_{11}U_{12}K_U,\quad x_S=\bar S_{11}\bar  S_{12}K_{S},\quad x_d=x\cdot d,\notag\\
 y=&y_Uy_S,\quad y_U=U_{11}\bar  U_{12}K_U\quad y_S=\bar S_{11}S_{12}K_{S},\quad
y_d=y\cdot d,
\notag\end{align}
and $\bar x,\bar y,\bar x_d,\bar y_d$ mean the complex conjugate kernels.
Now let us study the structure of $H^{(11)}$. Using (\ref{H_gr}) and the relations
\[\det A=\det B=d,\quad (A\rho,\tau')(B\rho',\tau)\big|_{\mathcal{P}_4}=-d(n_1n_1'+n_2n_2')+ |U_{12}|^2|S_{12}|^2(n_1+n_2)(n_1'+n_2')\]
we continue to transform $H$ as 
\begin{align*}
H\big|_{\mathcal{P}_4}=K_{US}&\cdot e^{d(n_1+n_2+n_1'+n_2')} \big(1-w(n_1+n_2)(n_1'+n_2')/\tilde{\beta}+2w^2n_1n_2n_1'n_2'/\tilde{\beta}^2\big)\\
&\times\big(1-(n_1n_2+n_1'n_2')/\tilde{\beta}^2+n_1n_2n_1'n_2'/\tilde{\beta}^4\big)\\
&\times \big(1-d(n_1n_1'+n_2n_2')+|U_{12}|^2|S_{12}|^2(n_1+n_2)(n_1'+n_2')+d^2n_1n_2n_1'n_2'\big)\\
=&K_{US}\cdot e^{d(n_1+n_2+n_1'+n_2')}\Big(1-w(n_1+n_2)(n_1'+n_2')/\tilde{\beta}-(n_1n_2+n_1'n_2')/\tilde{\beta}^2\\
&-d(n_1n_1'+n_2n_2')
+|U_{12}|^2|S_{12}|^2(n_1+n_2)(n_1'+n_2')
\\&+(d^2+2w^2/\tilde{\beta}^2+2dw/\tilde{\beta}+1/\tilde{\beta}^4-4w|U_{12}|^2|S_{12}|^2/\tilde{\beta}) n_1n_2n_1'n_2'\Big).
\end{align*}
Represent  $H^{(11)}=K_{US}\cdot K$ and  observe that to find the coefficients of $K$ we can represent $H$ as a polynomials
with respect to $n_1,n_2,n_1',n_2'$ and the coefficients of this polynomials gives the coefficients of $K$. In particular,
\begin{align*}
&K_{11}\sim n_1'n_2',\quad K_{21}\sim n_1n_1'n_2',\quad K_{31}\sim n_2n_1'n_2',\quad K_{41}\sim n_1n_2n_1'n_2',\quad\\
&K_{42}\sim n_1n_2n_2',\quad K_{43}\sim n_1n_2n_1',\quad K_{24}\sim n_1,\quad K_{34}\sim n_2,\quad K_{44}\sim n_1n_2,\quad
\end{align*}
Evidently these and the other coefficient of $K$ can be found as the respective derivatives, taken at the point $(n_1,n_2,n_1',n_2')=(0,0,0,0)$. 

Now we return to the proof of Proposition \ref{p:repr}. In order to transform (\ref{trans1}) to (\ref{repr1}) with an appropriate $\widehat M$ and $\widehat{K}$ satisfying (\ref{M_hat}) -- (\ref{repr2})
 we are going to consider the matrix $K$ after the transformation
\[K_T=TKT,\quad T=\left(\begin{array}{cccc}
0&0&0&\tilde{\beta}\\0&0&1&0\\0&1&0&0\\\tilde{\beta}^{-1}&0&0&0
\end{array}\right)
\]
It is easy to see that 
\begin{align*}
&K_{T12}=\tilde{\beta} K_{43},\,K_{T13}=\tilde{\beta} K_{42},\,K_{T24}=\tilde{\beta} K_{31},\,K_{T34}=\tilde{\beta} K_{21},\,K_{T14}=\tilde{\beta}^2K_{41}.
%&K_{T21}=\tilde{\beta}^{-1} K_{43},\,K_{T31}=\tilde{\beta}^{-1} K_{24},\,K_{T42}=\tilde{\beta}^{-1} K_{31},\,K_{T34}=\tilde{\beta}^{-1} K_{42},\,K_{T41}=\tilde{\beta}^{-2}K_{14}.
\end{align*}
All the rest coefficients $K$ change the places  or are multiplied by $1$, $\tilde{\beta}^{-1}$ or even $\tilde{\beta}^{-2}$.
Thus, to obtain representation (\ref{repr1})- (\ref{repr2}), we need to control the elements of $\widehat K$ written above. The following lemma allows
to understand the order of the operators, which will appear in the coefficients of $K$.
\begin{lemma}\label{l:U,V} 
\begin{align}\label{U,V.1}
&K_{US}|U_{12}|^2=\tilde{\beta}^{-1}+O((1-K_{US})\tilde{\beta}^{-1}),\quad K_{US}|S_{12}|^2=\tilde{\beta}^{-1}+O((1-K_{US})\tilde{\beta}^{-1}),\\
&K_{US}|U_{12}|^4=2\tilde{\beta}^{-2}+O((1-K_{US})\tilde{\beta}^{-2}),\quad K_{US}|S_{12}|^4=2\tilde{\beta}^{-2}+O((1-K_{US})\tilde{\beta}^{-2}),\notag\\
& K_{US}|U_{12}|^2|S_{12}|^2=\tilde{\beta}^{-2}+O((1-K_{US})\tilde{\beta}^{-2}).\notag
\end{align}
 We recall  that all operators here are self adjoint and commute with each other,
hence the relations mean the ones for the corresponding eigenvalues.
\end{lemma}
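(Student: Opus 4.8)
The plan is to exploit the fact that every operator in the statement is a \emph{zonal} (difference) convolution operator. Recall from the setup of Section \ref{s:6} that the relevant variables are the differences $U=U_1U_2^*$ and $S=S_1S_2^{-1}$, so that $K_{US}|U_{12}|^2$, $K_{US}|U_{12}|^4$, $K_{US}|S_{12}|^2$, $K_{US}|U_{12}|^2|S_{12}|^2$, etc.\ are convolution operators whose kernels are obtained from the kernel of $K_{US}$ by multiplication by the indicated function of the single radial variable $x=|U_{12}|^2\in[0,1]$ on $\mathring U(2)\cong\mathbb{CP}^1$ and $y=|S_{12}|^2\in[0,\infty)$ on the hyperboloid $\mathring U(1,1)$. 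Such kernels lie in the commutative convolution algebras generated by $K_U$ and $K_S$, hence are simultaneously diagonalised by the spherical functions $\phi^U_\lambda,\phi^S_\nu$, which are at the same time the eigenfunctions of the radial Laplacians $\widetilde\Delta_U,\widetilde\Delta_S$ of the statement. This is precisely what justifies reading the identities as eigenvalue identities. Consequently it suffices to evaluate, on each spherical function, the corresponding spherical transform, i.e.\ the radial integral of the kernel against $\phi^U_\lambda$ (resp.\ $\phi^S_\nu$). The first step is therefore to record that the pushforward of the invariant measure to the radial coordinate is $dx$ on $[0,1]$ (resp.\ $dy$ on $[0,\infty)$), the overall constant being pinned by the normalisations $K_U1=1+O(e^{-\tilde\beta})$ and $K_S1=1$.

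With this reduction the leading terms come from the trivial spherical function $\phi\equiv1$: the eigenvalue of $K_{US}\,|U_{12}|^{2m}$ on constants is the total mass (the $S$-integration contributing the factor $K_S1=1$)
\[
\int_0^1 x^{m}\,\tilde\beta e^{-\tilde\beta x}\,dx=\tilde\beta^{-m}\int_0^{\tilde\beta}u^{m}e^{-u}\,du=m!\,\tilde\beta^{-m}\bigl(1+O(e^{-\tilde\beta})\bigr),
\]
and likewise on the $S$-side with the integral over $[0,\infty)$, where the exponential correction is absent. Taking $m=1,2$ and the analogous mixed integral $\int x\,k_U\,dx\int y\,k_S\,dy$ reproduces exactly the constants $\tilde\beta^{-1}$, $2\tilde\beta^{-2}$ and $\tilde\beta^{-2}$ asserted in (\ref{U,V.1}). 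Thus beyond these elementary moment computations the whole content of the lemma is the size of the correction as one leaves the constant mode.

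To control that correction I would write, for a spherical function $\phi_\lambda$ with $\widetilde\Delta_U\phi_\lambda=\mu_\lambda\phi_\lambda$, the deviation of the eigenvalue from its leading value as $\int_0^1 x^{m}k_U(x)\bigl(\phi_\lambda(x)-1\bigr)\,dx$ and insert the local expansion $\phi_\lambda(x)=1-\mu_\lambda x+O(\mu_\lambda x^2)$, which follows from $\widetilde\Delta_U=-\tfrac{d}{dx}x(1-x)\tfrac{d}{dx}$ together with $\phi_\lambda(0)=1$. Since $k_U$ concentrates at the scale $x\sim\tilde\beta^{-1}$, the leading contribution is $-\mu_\lambda\int x^{m+1}k_U\,dx=O(\mu_\lambda\tilde\beta^{-m-1})$, whereas the very same expansion gives $1-\kappa_\lambda=\int k_U(1-\phi_\lambda)\,dx=\mu_\lambda\tilde\beta^{-1}\bigl(1+o(1)\bigr)$ for the eigenvalue $\kappa_\lambda$ of $K_{US}$. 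Comparing the two shows the deviation is $O\bigl((1-\kappa_\lambda)\tilde\beta^{-m}\bigr)$, i.e.\ exactly the operator bound $O((1-K_{US})\tilde\beta^{-m})$ claimed; the identical computation with $\widetilde\Delta_S=-\tfrac{d}{dy}y(y+1)\tfrac{d}{dy}$ handles the hyperbolic factor, and the mixed relation follows from the product measure together with the tensor bookkeeping $1-K_{US}=(1-K_U)+(1-K_S)-(1-K_U)(1-K_S)$.

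The step I expect to be the main obstacle is making this error estimate uniform in the spectral parameter, and not only in the regime where the local expansion of $\phi_\lambda$ is literally valid. One must ensure that the tails of $\phi_\lambda$ away from the identity coset, integrated against the rapidly decaying but noncompactly supported kernel, do not spoil the bound, and that the inequality $1-\kappa_\lambda\gtrsim\min(1,\mu_\lambda/\tilde\beta)$ genuinely dominates the full deviation for all $\lambda$ simultaneously. This is most delicate on the noncompact side $\mathring U(1,1)$, where the spectrum is continuous and the conical (Mehler--Fock type) spherical functions are unbounded in $y$; there I would split the radial integral at $y\sim\tilde\beta^{-1}\log\tilde\beta$, estimate the inner part by the local expansion and the outer part by the exponential smallness of $k_S$ against polynomial bounds on $\phi^S_\nu$, while the compact factor $\mathring U(2)$ is treated by the analogous but easier argument with Legendre-type spherical functions.
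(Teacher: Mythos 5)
Your proposal is correct and follows essentially the same route as the paper: the paper also diagonalises these radial (``difference'') kernels on the irreducible components $L^{(l)U}\otimes L^{(l')S}$, computes the leading constants as moments $\int x^m\,\tilde\beta e^{-\tilde\beta x}\,dx$ against the zonal spherical functions $P^{(l)}_{00}$ (resp.\ their hyperbolic analogues), and uses the expansion $P^{(l)}_{00}(1-2x)=1-l(l+1)x+O(l^4x^2)$ together with $1-\lambda^{(l)U}=l(l+1)/\tilde\beta+O(l^4/\tilde\beta^2)$ to convert the correction into $O((1-K_{US})\tilde\beta^{-m})$. The only cosmetic difference is that you obtain the local expansion of the spherical function from the radial Laplacian ODE rather than from the explicit integral representation of the matrix coefficients, and you are more explicit than the paper about the uniformity issue for large spectral parameter (which is handled by the trivial bound $|\phi_\lambda|\le1$ once $1-\kappa_\lambda$ is of order one).
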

The proof or the lemma will be given at the end of  the proof of Lemma \ref{l:R} (see  the argument above (\ref{mu^l})).

Coming back to the coefficients of $K$, compute first
\begin{align*}
K_{US}\cdot K_{41}&=K_{US}\cdot\frac{\partial^4K}{\partial n_1\partial n_2\partial n_1'\partial n_2'}\Big|_{(0,0,0,0)}=K_{US}\big(d^4-2d^3+d^2+4d^2|U_{12}|^2|S_{12}|^2\\
&-4d^2w\tilde{\beta}^{-1}-2d^2\tilde{\beta}^{-2}+2dw\tilde{\beta}^{-1}+2w^2\tilde{\beta}^{-2}-4w|U_{12}|^2|S_{12}|^2/\tilde{\beta}+\tilde{\beta}^{-4}\big)\\
&=K_{US}\Big(d^2w^2-2d^2\tilde{\beta}^{-2}-2 dw\tilde{\beta}^{-1}\Big)+O(\tilde{\beta}^{-3})\\
&=\tilde{\beta}^{-2}\widetilde K',\quad \widetilde K'=O(1-K_{US}).
\end{align*}
Here we have used the relation (which follows from the definition of $d$ and $w$)
\[
d^4-2d^3+d^2+4d^2|U_{12}|^2|S_{12}|^2=d^2w^2,\quad 4d^2 w-4dw=4dw (|S_{12}|^2-|U_{12}|^2),
\]
and the  lemma above.

Similarly
\begin{align*}
&K_{US}\cdot K_{21}=K_{US}\cdot K_{31}=K_{US}\cdot K_{42}=K_{US}\cdot K_{43}\\
&=K_{US}(d^3-d^2-2dw/\tilde{\beta}-d/\tilde{\beta}^2+2d|U_{12}|^2|S_{12}|^2)=\tilde{\beta}^{-1}\widetilde K\\ 
\end{align*}
with
\[
\widetilde K=O(1-K_{US}).
\]
In addition,
\begin{align*}
&K_{ii}=1+O(\tilde{\beta}^{-1}),\quad  i=1,\dots 4,\\
&K_{ij}=O(\tilde{\beta}^{-1}), \quad ( i,j)=(2,3)\,\,or\,\,(3,2).
\end{align*}
Observe now that the operator $\tilde{\mathcal{F}}=\mathcal{F}\big|_{\mathcal{P}_6}$ in (\ref{trans1}) after the change (\ref{ch_beta}) in our basis have the block 
diagonal form, where
a $4\times 4$ upper left block has the form $T\widehat FT$, where $\widehat F$ is given by (\ref{repr2}), and a $2\times 2$ bottom left block is $I$.
In addition, $\tilde f$ and $\tilde g$ are  spanned on $e_1,e_2,e_3,e_4$ and after the change (\ref{ch_beta}) their restriction on $\mathcal{P}_4$ have the form
$\tilde f=\beta^{-1}T\hat f$, $\tilde g=\beta^{-1}T\hat g$. Thus  we are interested in
the upper left block $G^{(11)}$ of the resolvent $G=(\tilde {\mathcal F} H\tilde {\mathcal{F}}-z)^{-1}$, and so (\ref{trans1}) yields
\begin{align*}
\mathcal{R}_{n\beta}^{+-}(E,\varepsilon,\xi)=\frac{C_{E,\varepsilon}e^{c_0(\alpha_1-\alpha_2)}}{2\pi i}\oint_{\omega_A}z^{n-1}(T\widehat G^{(11)}(z)T\widehat f,\widehat g)dz.
\end{align*}
But by the Schur compliment formula
\[
TG^{(11)}(z)T=\Big(\widehat FT\big(H^{(11)}- H^{(12)}(H^{(22)}-z)^{-1}H^{(21)}\big)T\widehat F-z\Big)^{-1},
\]
and so we are left to prove that 
\begin{equation}\label{M_T}
\widehat M= \widehat FT(H^{(11)}- H^{(12)}(H^{(22)}-z)^{-1}H^{(21)})T\widehat F
\end{equation}
satisfies (\ref{M_hat}) -- (\ref{repr2}).

According to the consideration above, $TH^{(11)}T$ has the form (\ref{M_hat}) -- (\ref{repr2}).  The estimate on $H^{(12)}(H^{(22)}-z)^{-1}H^{(21)}$ is given in the following lemma
\begin{lemma}\label{l:R}
Set  $G^{(2)}(z):=(H^{(22)}-z)^{-1}$.  Then
for any $z:|z|=1+A/n$ the operator $H^{(12)}G^{(2)}(z)H^{(21)}$ has the form
\begin{align}\notag
&H^{(12)}G^{(2)}H^{(21)}=
\left(\begin{array}{cccc}0&0&0&0\\
2R_{1d}&R&R&0\\
2R_{1d}&R&R&0\\
4R_{dd}&2R_{d1}&2R_{d1}&0
\end{array}\right), 
\end{align}
where
\begin{align}
R=& yG^{(2)}_{11} x+
 {\overline y}G^{(2)}_{22}{\overline x}
-yG^{(2)}_{12}{\overline x}
-{\overline y}G^{(2)}_{21} x,
\label{R}\end{align}
 $R_{1d}$ can be obtained from $R$, if we replace  $x$ with $x_d$, to obtain
$R_{d1}$, one should replace $y$ with $y_d$,  to obtain $R_{dd}$, one should replace $x$, $y$ with $x_d$
 $y_d$, and the operators $x, y, x_d, y_d$ are the same as in (\ref{H^ij}). 

The operators $R,R_{1d},R_{d1},R_{dd}$ are normal and satisfy the bound
\begin{align}\label{lR.1}
|R|+|R_{1d}|+|R_{d1}|+|R_{dd}|\le C\tilde{\beta}^{-2}(1-K_{US})+O(\tilde\beta^{-3}),\quad 1-K_{US}\le C(\widetilde\Delta_U+\widetilde\Delta_S)/\tilde\beta
\end{align}
\end{lemma}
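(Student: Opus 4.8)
The plan is to separate the purely algebraic identification of the block structure from the harder spectral estimate (\ref{lR.1}). For the first part I would simply carry out the $4\times 2$, $2\times 2$, $2\times 4$ matrix product $H^{(12)}G^{(2)}H^{(21)}$ using the explicit shapes of $H^{(12)},H^{(21)}$ in (\ref{H^ij}). Since the first row of $H^{(12)}$ and the last column of $H^{(21)}$ vanish, the product automatically has zero first row and zero fourth column; the remaining entries are bilinear combinations of $G^{(2)}_{kl}$ with the operators $x,y$ and their $d$-twisted versions $x_d,y_d$. Collecting terms one reads off exactly the scalar operator $R$ of (\ref{R}) in the four interior slots, together with $R_{1d},R_{d1},R_{dd}$ obtained by the stated substitutions $x\to x_d$, $y\to y_d$. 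It is convenient to record the compact form $R=(y,-\bar y)\,G^{(2)}\,(x,-\bar x)^{t}$, which makes the subsequent estimates transparent and shows that the twisted operators differ only through the prefactor $d=1-|U_{12}|^2+|S_{12}|^2$.

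For normality and the structural bound I would use that every operator in sight is a ``difference kernel'' on $\mathring U(2)\times\mathring U(1,1)$, i.e. a convolution built from functions of $U(U')^{*}$ and $S(S')^{-1}$ together with the matrix-coefficient prefactors in (\ref{hxy}). Such operators commute with the group Laplacians $\widetilde\Delta_U,\widetilde\Delta_S$ and with $K_{US}=K_U\otimes K_S$, so the whole family is simultaneously diagonalizable in the spherical (Legendre/conical-function) basis; on each joint spectral component every operator acts by a scalar, which gives normality of $R,R_{1d},R_{d1},R_{dd}$ and reduces (\ref{lR.1}) to eigenvalue inequalities. Two quantitative inputs drive the estimate: first, each of $x,y$ carries a single off-diagonal factor $U_{12}$ and a single $S_{12}$, which by the concentration of $K_U,K_S$ at scale $|U_{12}|^2,|S_{12}|^2\sim\tilde\beta^{-1}$ contributes a factor $\tilde\beta^{-1}$ apiece (exactly the content of Lemma \ref{l:U,V}), producing the overall $\tilde\beta^{-2}$; second, the two halves $yG^{(2)}_{1l}$ and $\bar yG^{(2)}_{2l}$ enter $R$ with opposite signs, so the contribution of the top ground-state spherical mode cancels and what survives is proportional to $1-K_{US}$. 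The same bookkeeping applies verbatim to $R_{1d},R_{d1},R_{dd}$, as the twist by $d$ only changes lower-order terms.

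To make the cancellation rigorous I would expand $G^{(2)}=(H^{(22)}-z)^{-1}$ via the resolvent identity $G^{(2)}=-z^{-1}+z^{-1}H^{(22)}G^{(2)}$, using that the off-diagonal entries $h_{12},h_{21}$ of $H^{(22)}$ already carry $|U_{12}|^2$ (hence $O(\tilde\beta^{-1})$ by Lemma \ref{l:U,V}) and that the diagonal entries $h_{11},h_{22}$ differ from $K_{US}$ only by phase/diagonal twists whose leading symbols cancel against $\tilde\beta^{-1}$ corrections. The crucial point is that the prefactors $U_{12}\bar S_{12}$ in $x,y$ move the relevant vectors out of the resonant ground-state sector of $H^{(22)}$, so that $G^{(2)}$ acts there with uniformly bounded norm for $|z|=1+A/n$; this is what prevents the a priori $O(n)$ resolvent blow-up from spoiling the $\tilde\beta^{-2}$ gain. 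This coupling of the two effects is the main obstacle: one must simultaneously extract $\tilde\beta^{-2}$ smallness from the two off-diagonal factors and the extra factor $1-K_{US}$ from the sign cancellation, while keeping $\|G^{(2)}\|$ controlled although $z$ lies only $A/n$ from the spectrum of $H^{(22)}$.

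Finally I would prove the auxiliary Lemma \ref{l:U,V} and the comparison $1-K_{US}\le C(\widetilde\Delta_U+\widetilde\Delta_S)/\tilde\beta$ by explicit spherical-transform computations. In the Legendre basis on $\mathring U(2)$ the radial Gaussian kernel $K_U$ has transform $\widehat K_U(\ell)=1-c\,\ell(\ell+1)/\tilde\beta+O(\tilde\beta^{-2})$ for low $\ell$ and stays in $[0,1]$ for all $\ell$, whence $1-\widehat K_U(\ell)\le C\ell(\ell+1)/\tilde\beta$; the non-compact factor $K_S$ is handled identically with the conical functions, and tensoring the two via $1-K_{US}\le (1-K_U)+(1-K_S)$ gives the stated comparison. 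The moment identities (\ref{U,V.1}) then follow from the same transforms applied to $|U_{12}|^{2},|U_{12}|^{4},|S_{12}|^{2},\dots$, i.e. from the Gaussian moments $\int x^{m}\tilde\beta e^{-\tilde\beta x}\,dx$, which produce the leading $\tilde\beta^{-1},2\tilde\beta^{-2},\dots$ together with the $O((1-K_{US})\tilde\beta^{-1})$ remainders.
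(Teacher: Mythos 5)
Your algebraic identification of the block structure, the resolvent expansion that disposes of the cross terms involving $h_{12},h_{21}$ and of the remainder, and the spherical-transform computations behind Lemma \ref{l:U,V} and the comparison $1-K_{US}\le C(\widetilde\Delta_U+\widetilde\Delta_S)/\tilde\beta$ all match the paper. But the central step --- where the factor $1-K_{US}$ in (\ref{lR.1}) comes from --- is wrong as you describe it. You attribute it to a sign cancellation between ``the two halves'' of $R$; in fact the two diagonal terms $yG^{(2)}_{11}x$ and $\bar yG^{(2)}_{22}\bar x$ enter with the \emph{same} sign (only the cross terms carry minus signs, and those are separately $O(\tilde\beta^{-3})$ because $\|h_{12}\|\le\tilde\beta^{-2}$), so there is no cancellation to exploit, and the paper bounds each diagonal term individually. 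The actual mechanism is representation-theoretic: on each isotypic component $L^{(l)U}\otimes L^{(l')S}$ the operators $x_U,y_U$ act as the matrix unit $E_{-1,0}$ times a coefficient $\lambda^{(l)U}_{-1,0}$ (they are \emph{not} diagonal in the spherical basis, contrary to your simultaneous-diagonalizability claim --- only the composite $y(h_{11})^px$ is proportional to $E_{00}$), and the identity expressing $\sin(\theta/2)\cos(\theta/2)P^{(l)}_{-1,0}$ through $P^{(l+1)}_{00}-P^{(l)}_{00}(1-2\sin^2(\theta/2))$ gives the key bound $|\lambda^{(l)U}_{-1,0}|^2\le C(1-\lambda^{(l)U})/\tilde\beta$ for \emph{every} $l$, not just a vanishing at $l=0$.

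Without that bound your estimate cannot close. Lemma \ref{l:U,V} only gives $\|x\|,\|y\|\le\tilde\beta^{-1}$, and your claim that the prefactors push vectors out of the resonant sector so that ``$G^{(2)}$ acts there with uniformly bounded norm'' fails quantitatively: on the lowest nontrivial modes one has $1-\lambda^{(l)U}\lambda^{(l')S}\sim\tilde\beta^{-1}\ll A/n$, so $\|G^{(2)}\|$ there is of order $n$, and the naive product bound gives only $n\tilde\beta^{-2}$, whereas (\ref{lR.1}) demands $\tilde\beta^{-2}(1-K_{US})\sim\tilde\beta^{-3}$ on those modes --- a discrepancy of order $n\tilde\beta$. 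The paper closes this by combining $|\lambda^{(l)U}_{-1,0}|^2|\lambda^{(l')S}_{-1,0}|^2\le C\tilde\beta^{-2}(1-\lambda^{(l)U})(1-\lambda^{(l')S})$ with the elementary inequality $ab\le a^2+b^2-a^2b^2$ (i.e.\ $(1-\lambda^{(l)U})(1-\lambda^{(l')S})\le(1-\lambda^{(l)U}\lambda^{(l')S})^2$), which exactly absorbs the near-singular denominator $|z|-\lambda^{(l)U}\lambda^{(l')S}\ge 1-\lambda^{(l)U}\lambda^{(l')S}$ and leaves one power of $1-K_{US}$. You need to supply both of these inputs; as written, the proposal has a genuine gap at the heart of the estimate.
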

The lemma gives that (\ref{M_T}) indeed satisfies (\ref{M_hat}) -- (\ref{repr2}), and (\ref{bound_re}),
which finishes the proof of Proposition \ref{p:repr}.

$\square$

\textit{Proof of Lemma \ref{l:R}.}
Let us prove  (\ref{lR.1}) for $R$ of (\ref{R}). For $R_{1d}$, $R_{d1}$, $R_{dd}$ the proof is the same.
To simplify notations set 
\[
H^{(22)}=h=\hat h+\tilde h,
\]
where $\hat h$ is the diagonal part of $H^{(22)}$, and $\tilde h$ is its off diagonal part, and denote
\[G^{(2)}_0:=(\hat h-z)^{-1}.\]
It is easy to see that
\begin{align}\label{tr_b}
\|h_{12}\|\le\int|U_{12}|^2|S_{12}|^2K_UK_SdUdS\le\tilde{\beta}^{-2},\quad \|x\|\le \tilde{\beta}^{-1},\quad \|y\|\le \tilde{\beta}^{-1},
\end{align}
(recall that by (\ref{H^ij}) $h_{ij}=U_{ij}^2\bar S_{ij}^2K_UK_S$).
%\[\|\tilde h\|\le \tilde{\beta}^{-2},\quad \|H^{(12)}\|\le C\tilde{\beta}^{-1},\quad  \|H^{(21)}\le C\tilde{\beta}^{-1},\quad \|G^{(2)}_0\|\le cn,\quad \|G^{(2)}\|\le cn\]
Hence, writing
\begin{align*}%\label{res_it}
G^{(2)}=G^{(2)}_0-G^{(2)}_0\tilde hG^{(2)}_0+r,\quad r:=G^{(2)}_0\tilde hG^{(2)}_0\tilde hG^{(2)},
\end{align*}
and using the bounds above combined with (\ref{H^ij}), we get
\[
\|r\|\le Cn^{3}\tilde{\beta}^{-4}\quad \Rightarrow\quad \|H^{(12)}rH^{(21)}\|\le n^3\tilde{\beta}^{-6}<\tilde{\beta}^{-3}.
\]
Consider $\widehat R$ which has the same form as (\ref{R}) but with $G^{(2)}$ replaced by $G^{(2)}_0$ . Then the second two terms become zeros
and
\begin{align*}
\widehat R=y(h_{11}-z)^{-1}x+\bar y(h_{22}-z)^{-1}\bar x=\widehat R_1+\widehat R_2.
\end{align*}
Let us study the operator
\[\widehat R_1=\sum_{p=0}^\infty \frac{y(h_{11})^px}{z^p}=\sum_{p=0}^\infty \frac{(y_U(h_{11U})^px_U)\otimes (y_S(h_{11S})^px_S)}{z^p},
\]
where $y_U,h_{11U},x_U$ (see (\ref{hxy})) are  integral operators on $L_2(U)$ with the "difference" kernels
of the form $v(U_1U^{-1}_2)$, and $y_S,h_{11S},x_S$  are the "difference" integral operators on $L_2(S)$.  
Here  $L_2(U)$ and $L_2(S)$ denote the subspaces of  even functions $\varphi(U)=\varphi(-U)$ (or $\varphi(S)=\varphi(-S)$). Since our operators
preserve the evenness, it suffices to study only these subspaces. It is known that
\[L_2(U)=\displaystyle\oplus_{l=0}^\infty L^{(l)U}, \quad L^{(l)U}=\mathrm{Lin}\,\{t^{(l)U}_{mk}\}_{m,k=-l}^l\]
where $\{t^{(l)U}_{mk}(U)\}_{m,k=-l}^l$ are the coefficients of the irreducible representation of the shift operator $T_U\widetilde U=U\widetilde U$. 
It follows from the properties of the unitary representation that
\begin{align*}%\label{prop_t}
t^{(l)U}_{mk}(U^{-1})=\overline{t^{(l)U}_{km}(U)},\quad t^{(l)U}_{mk}(U_1U_2)=\sum t^{(l)U}_{mj}(U_1)t^{(l)U}_{jk}(U_2).
\end{align*}
According to \cite{Vil:68}, Chapter III,
\[t^{(l)U}_{mk}(U)=e^{-i(m\phi+k\psi)/2}P^{(l)}_{mk}(\theta),\]
where
\begin{align}\notag
 &P^{(l)}_{mk}(\cos\theta)=\frac{c_{mk}}{2\pi}\int_0^{2\pi}d\varphi(\cos(\theta/2)+i\sin(\theta/2)e^{i\varphi})^{l+k}
(\cos(\theta/2)+i\sin(\theta/2)e^{-i\varphi})^{l-k}e^{i(m-k)\varphi},\\
&c_{mk}=\Big(\frac{(l-m)!(l+m)!}{(l-k)!(l+k)!}\Big)^{1/2},\quad U=\left(\begin{array}{rr}\cos(\theta/2)e^{i(\phi+\psi)/2}&i\sin(\theta/2)e^{i(\phi-\psi)/2}\\
i\sin(\theta/2)e^{-i(\phi-\psi)/2}&\cos(\theta/2)e^{-i\phi+\psi)/2}\end{array}\right).
\label{t_mk}\end{align}
In addition (see \cite{Vil:68}, Chapter III),
\begin{align}\label{P(1)}
P^{(l)}_{mm}(1-x)=1-x(l+m)(l+m+1)/2+O(x^2).
%\\
% P^{(l)}_{mk}(1-2\sin^2(\theta/2))=&\Big(\frac{(l-m)!(l-k)!}{(l+m)!(l+k)!}\Big)^{1/2}\frac{i^{2j-m-n}(l+j)!}{(l-j)!(j-m)!(j-k)!}\notag\\
%&\cdot \mathrm{ctg}^{2(m+n)}(\theta/2)(\sin(\theta/2))^{2j}(1+O(\sin^2(\theta/2))
%\notag
\end{align}
It is known also
that $\{t^{(l)U}_{mk}(U)\}_{m,k=-l}^l$  make an orthonormal basis in  $L^{(l)U}$.

For any function $v(U)$ consider the matrix $ v^{(l)U}=\{v^{(l)U}_{mk}\}$ defined as
\[ v^{(l)U}_{mk}:=\int v(U)\overline{ t^{(l)U}_{mk}(U)}dU.
\]
It is easy to see that if we consider an integral operator $\widehat v$ with the kernel $v(U_1U^{-1}_2)$, then for any $\varphi(U)\in L^{(l)U}$
\begin{align*}
(\widehat v\varphi)(U)=\int v(UU_1^{-1})\sum_{mk} \varphi_{mk} t^{(l)U}_{mk}(U_1)dU_1=
\int v(\widetilde U)\sum_{mk} \varphi_{mk} t^{(l)U}_{mk}(\widetilde U^{-1}U)d\widetilde U\\=
\int v(\widetilde U)\sum_{mkj} \varphi_{mk} t^{(l)U}_{mj}(\widetilde U^{-1}) t^{(l)U}_{jk}( U)d\widetilde U
=\sum v^{(l)U}_{jm}\varphi_{mk}t^{(l)U}_{jk}( U).
\end{align*}
Hence, denoting $\Pi_l$ the orthogonal projection on $L^{(l)U}$, one can see that $L^{(l)U}$ reduces $\widehat v$ and $\widehat v^{(l)U}=\Pi_l\widehat v \Pi_l$ is uniquely  defined by the matrix $v^{(l)U}$.  Moreover, for any functions $v_1$ and $v_2$ it is evident
that $\widehat v_1\widehat v_2$ is also a "difference" operator, hence it commutes with $\Pi_l$, and if the matrices $v_1^{(l)U}$ and $v_2^{(l)U}$ correspond
to $ \widehat v_1$ and $\widehat v_2$, then
\[(\widehat v_1\widehat v_2)^{(l)U}= v_1^{(l)U}v_2^{(l)U}.
\]
Let us find the matrices, corresponding to  $h_{11U},y_U,x_U$ (see (\ref{hxy}))  in $L^{(l)U}$. Using  (\ref{t_mk}) it is easy to see that
\begin{align*}
(x_U^{(l)U})_{km}=&\int U_{11}U_{12}K_U(|U_{12}|^2)\overline{ t^{(l)U}_{mk}(U)}dU
=\tilde\beta\int_0^\pi\sin\theta d\theta \int_0^{2\pi}d\phi d\psi\sin(\theta/2)\cos(\theta/2)\\&\times e^{-\tilde\beta\sin^2(\theta/2))}e^{i\phi}
e^{im\phi+in\psi}\overline{ P^{(l)}_{-1,0}(\cos\theta)}=\delta_{m,-1}\delta_{n,0}\lambda^{(l)U}_{-1,0},
\end{align*}
where we set
\begin{align}\label{la_-1,0}
\lambda^{(l)U}_{-1,0}=\frac{\tilde\beta}{2}\int_0^\pi  e^{-\tilde\beta\sin^2(\theta/2)}
P^{(l)}_{-1,0}(\cos\theta)\sin^2\theta d\theta.
\end{align}
Hence, denoting $E_{ij}$ the matrix which has only $ij$th  entry equal to 1, and all other entries equal to 0, we get
\[x_U^{(l)U}=E_{-1,0}\lambda^{(l)U}_{-1,0}.\]
Introduce also the eigenvalues $\lambda^{(l)U}$ of $K_U$. Repeating the argument above, we have
\begin{align*}%\label{matr_K_U}
K_U^{(l)U}=E_{0,0}\lambda^{(l)U},
\end{align*}
where, using (\ref{P(1)}), we obtain
\begin{align}\label{lambda_l}
\lambda^{(l)U}=&\tilde\beta\int_0^\pi  e^{-\tilde\beta\sin^2(\theta/2)}
P^{(l)}_{0,0}(\cos\theta)\sin\theta d\theta=\tilde{\beta}^{-1}\int_0^1e^{-\tilde{\beta} x}P^{(l)}_{00}(1-2x)dx\\
=&1-{l(l+1)}/{\tilde{\beta}}+O(l^4/\tilde{\beta}^2).
\notag\end{align}
To find an asymptotic behaviour of $\lambda^{(l)U}_{-1,0}$, observe that formulas  (\ref{t_mk}) and (\ref{P(1)} )  yield
\begin{align*}
&2i(1+1/l)^{1/2}\sin(\theta/2)\cos(\theta/2)P^{(l)}_{-1,0}(\cos\theta)\\
=&\int_0^{2\pi}\frac{d\varphi}{2\pi}(\cos(\theta/2)+i\sin(\theta/2)e^{i\varphi})^{l}
\cos(\theta/2)+i\sin(\theta/2)e^{-i\varphi})^{l}2i\cos\varphi\sin(\theta/2)\cos(\theta/2)\\=
&\int_0^{2\pi}\frac{d\varphi}{2\pi}(\cos(\theta/2)+i\sin(\theta/2)e^{i\varphi})^{l}
(\cos(\theta/2)+i\sin(\theta/2)e^{-i\varphi})^{l}\\
&\times\Big((\cos(\theta/2)+i\sin(\theta/2)e^{i\varphi})
(\cos(\theta/2)+i\sin(\theta/2)e^{-i\varphi})-\cos^2(\theta/2)+\sin^2(\theta/2)\Big)\\
=&P^{(l+1)}_{0,0}(\cos\theta)-P^{(l)}_{0,0}(\cos\theta)(1-2\sin^2(\theta/2)).
\end{align*}
Hence
\begin{align*}
(1+1/l)^{1/2}\lambda^{(l)U}_{-1,0}=&(\lambda^{(l+1)U}-\lambda^{(l)U})/2+O(\tilde{\beta}^{-1})=-(l+1)/\tilde{\beta}+O(l^2\tilde{\beta}^{-2})+O(\tilde{\beta}^{-1})\\
\Rightarrow& |\lambda^{(l)U}_{-1,0}|^2\le C_0(1-\lambda^{(l)U})/\tilde{\beta}.
\end{align*}
Similarly
\begin{align*}
&y_U^{(l)U}=E_{0,-1}\overline{\lambda^{(l)U}_{-1,0}},\quad h_{11U}^{(l)U}=E_{-1,-1}\lambda^{(l)U}_{-1,-1},\end{align*}
where we set
\begin{align}\label{la_-1,-1}
& \lambda^{(l)U}_{-1,-1}=\tilde\beta\int_0^\pi  e^{-\tilde\beta\sin^2(\theta/2)}\cos^2(\theta/2)P^l_{-1,-1}(\cos\theta)\sin\theta d\theta=\lambda^{(l)U}+O(\tilde{\beta}^{-1})
\end{align}
and in the last relation we used (\ref{P(1)}). Thus, for any $p$
\begin{align}\label{xy_U}
(y_U(h_{11U})^px_U)^{(l)U}=|\lambda^{(l)U}_{-1,0}|^2(\lambda^{(l)U}_{-1,-1})^pE_{00}.
\end{align}

The analysis of $(y_S(h_{11S})^px_S)$ is very similar, the difference is that for the hyperbolic group the irreducible representations
are labelled by the continuous parameter $l'=-\frac{1}{2}+i\rho$,  $\rho\in\mathbb{R}$,
\[
t^{(l')S}_{mk}=e^{i(m\phi+k\psi)}\mathcal{B}^{(l)}_{mk}(\theta),\quad m,k\in\mathbb{Z},
\]
and $\mathcal{B}^{(l)}_{mk}(\theta)$ has the form (\ref{t_mk}) with $\cos(\theta/2)$ replaced by $\cosh(\theta/2)$,
$i\sin(\theta/2)$ replaced by $\sinh(\theta/2)$ and $c_{mk}$ replaced by 1 (see \cite{Vil:68}, Chapter VI) .
Then the same argument yields that
\begin{align}\label{xy_V}
&(y_S(h_{11V})^px_S)^{(l')}=|\lambda^{(l)S}_{-1,0}|^2(\lambda^{(l')S}_{-1,-1})^pE_{00}\\
&|\lambda^{(l')S}_{-1,0}|^2\le C_0(1-\lambda^{(l')S})/\tilde{\beta},\quad \lambda^{(l')S}_{-1,-1}=\lambda^{(l')S}+O(\tilde{\beta}^{-1}),
\notag\end{align}
where $\lambda^{(l)S}_{-1,0}$, $\lambda^{(l')S}$, and $\lambda^{(l')S}_{-1,-1}$, are defined similarly to (\ref{la_-1,0}), (\ref{lambda_l}), and (\ref{la_-1,-1}).
Here the bound for  $|O(\tilde{\beta}^{-1})|<C_0\tilde{\beta}^{-1}$ is uniform in $l$.

This relation combined with (\ref{xy_U})  yields that  $\widehat R_1:L^{(l)U}\otimes L^{(l')S}\to L^{(l)U}\otimes L^{(l')S}$  and the 
only non zero eigenvalue of $\widehat R_1$ in this subspace 
has the form
\[ \lambda^{(ll')}=|\lambda^{(l)U}_{-1,0}|^2|\lambda^{(l')S}_{-1,0}|^2(z- \lambda^{(l)U}_{-1,-1} \lambda^{(l')S}_{-1,-1})^{-1}\]
The bounds (\ref{xy_U}) and (\ref{xy_V}) yield for $|z|>1+2C_0\tilde{\beta}^{-1}$
\[|\lambda^{(ll')}|\le C\tilde{\beta}^{-2}\frac{(1-\lambda^{(l)U})\cdot(1-\lambda^{(l')S})}{|z|-\lambda^{(l)U}\lambda^{(l')S}+O(\tilde{\beta}^{-1})}\le
C\tilde{\beta}^{-2}|1-\lambda^{(l)U}\lambda^{(l')S}|.
\]
Here we have used that for any $0<a,b<1$
\[ab\le a^2+b^2-ab<a^2+b^2-a^2b^2,\]
hence, taking $a^2=1-\lambda^{(l)U}$, $b^2=1-\lambda^{(l')S}$, we obtain the last inequality for $|\lambda^{(ll')}|$. 

Note that (\ref{lambda_l}) and a similar relation for $\lambda^{(l')S}$ combined with the facts that
\[\widetilde\Delta_UL^{(l)U}=l(l+1)L^{(l)U},\quad \widetilde\Delta_SL^{(l')S}=-l'(l'+1)L^{(l')S}\]
prove the second inequality in (\ref{lR.1}).

Assertions of Lemma \ref{l:U,V} can be obtained from the fact that the operators in the l.h.s. of (\ref{U,V.1}) are tensor products of the "difference"
operators on $L_2(U)$ and $L_2(S)$. Hence they are reduced by $L^{(l)U}\otimes L^{(l')S}$, and since the kernels depend on $|U_{12}|^2$ and
$|S_{12}|^2$, the corresponding matrices have the form $\mu^{(l)}\nu^{(l')}E_{00}\otimes E_{00}$, where $\mu^{(l)}, \nu^{(l')}$ -are corresponding eigenvalues.
For example, for the first operator in (\ref{U,V.1}) $\nu^{(l')}=\lambda^{(l')S}$ and
\begin{align}\label{mu^l}
\mu^{(l)}=&\tilde\beta\int_0^\pi \sin^2(\theta/2) e^{-\tilde\beta\sin^2(\theta/2)}
P^{(l)}_{0,0}(\cos\theta)\sin\theta d\theta=\tilde{\beta}^{-1}\int_0^1xe^{-\tilde{\beta} x}P^{(l)}_{00}(1-2x)dx\\
=&1/\tilde{\beta}-2{l(l+1)}/\tilde{\beta}^2+O(l^4/\tilde{\beta}^3).
\notag\end{align}
The first relation of (\ref{U,V.1}) follows from the above one combined with the analogue of (\ref{lambda_l}) for $\lambda^{(l')S}$. The other relations of (\ref{U,V.1}) can be obtained similarly.

To complete the proof of the lemma we are left to consider the part of $R$ which  can be obtained if we replace $G^{(2)}$ with
$G^{(2)}_0\tilde hG^{(2)}_0$. For this replacement the first two terms of (\ref{R}) are zero.  Set
\[R_3=y(h_{11}-z)^{-1}h_{12}(h_{22}-z)^{-1}\bar x.
\]
Repeating the above argument  we obtain that $R_3:L^{(l)U}\otimes L^{(l')S}\to L^{(l)U}\otimes L^{(l')S}$  and the 
only non zero eigenvalue $\widetilde\lambda^{(ll')}$ of $R_3$ in this subspace 
has the form
\[\widetilde\lambda^{(ll')}=\frac{|\lambda^{(l)U}_{-1,0}|^2|\lambda^{(l')S}_{-1,0}|^2\lambda^{(l)U}_{-1,1}\lambda^{(l')S}_{-1,1}}{(z- \lambda^{(l)U}_{-1,-1} \lambda^{(l')S}_{-1,-1})^2},
\]
where $\lambda^{(l)U}_{-1,1}$ and $\lambda^{(l')S}_{-1,1}$ by (\ref{tr_b}) satisfy the trivial bound
\[|\lambda^{(l)U}_{-1,1}\lambda^{(l)S}_{-1,1}|\le\|h_{12}\|\le \tilde{\beta}^{-2}.
\]
The bound, (\ref{xy_U}) and (\ref{xy_V}) yield
\[ |\widetilde\lambda^{(ll')}|\le C\tilde{\beta}^{-4}\frac{|1-\lambda^{(l)U}|\cdot|1-\lambda^{(l)V}|}{||z|-\lambda^{(l)U}\lambda^{(l')V}+O(\tilde{\beta}^{-1})|^2}\le Cn/\tilde{\beta}^4<\tilde{\beta}^{-3}.
\]
The same bound is valid for
\[R_4=\bar y(h_{22}-z)^{-1}h_{21}(h_{11}-z)^{-1}x.
\]
These bounds  complete the proof of the lemma for $R$. For $R_{1d},R_{d1},R_{dd}$ the proof is the same.

%\[\lambda^{(l)U}_{0,-1}=\overline{\lambda^{(l)U}_{-1,0}},\quad \lambda^{(l')V}_{0,-1}=\overline{\lambda^{(l)U}_{-1,0}},\quad
%\lambda^{(l)U}_{1,0}=\lambda^{(l)U}_{-1,0},\quad \lambda^{(l)V}_{1,0}=\lambda^{(l)V}_{-1,0},\quad
% \lambda^{(l)U}_{1,1}= \lambda^{(l)U}_{-1,-1}\]
$\square$
\section{Appendix}

\subsection{Grassmann integration}
Let us consider two sets of formal variables
$\{\psi_j\}_{j=1}^n,\{\overline{\psi}_j\}_{j=1}^n$, which satisfy the anticommutation
conditions
\begin{equation}\label{anticom}
\psi_j\psi_k+\psi_k\psi_j=\overline{\psi}_j\psi_k+\psi_k\overline{\psi}_j=\overline{\psi}_j\overline{\psi}_k+
\overline{\psi}_k\overline{\psi}_j=0,\quad j,k=1,\ldots,n.
\end{equation}
Note that this definition implies $\psi_j^2=\overline{\psi}_j^2=0$.
These two sets of variables $\{\psi_j\}_{j=1}^n$ and $\{\overline{\psi}_j\}_{j=1}^n$ generate the Grassmann
algebra $\mathfrak{A}$. Taking into account that $\psi_j^2=0$, we have that all elements of $\mathfrak{A}$
are polynomials of $\{\psi_j\}_{j=1}^n$ and $\{\overline{\psi}_j\}_{j=1}^n$ of degree at most one
in each variable. We can also define functions of
the Grassmann variables. Let $\chi$ be an element of $\mathfrak{A}$, i.e.
\begin{equation}\label{chi}
\chi=a+\sum\limits_{j=1}^n (a_j\psi_j+ b_j\overline{\psi}_j)+\sum\limits_{j\ne k}
(a_{j,k}\psi_j\psi_k+
b_{j,k}\psi_j\overline{\psi}_k+
c_{j,k}\overline{\psi}_j\overline{\psi}_k)+\ldots.
\end{equation}
For any
sufficiently smooth function $f$ we define by $f(\chi)$ the element of $\mathfrak{A}$ obtained by substituting $\chi-a$
in the Taylor series of $f$ at the point $a$. Since $\chi$ is a polynomial of $\{\psi_j\}_{j=1}^n$,
$\{\overline{\psi}_j\}_{j=1}^n$ of the form (\ref{chi}), according to (\ref{anticom}) there exists such
$l$ that $(\chi-a)^l=0$, and hence the series terminates after a finite number of terms and so $f(\chi)\in \mathfrak{A}$.

Following Berezin \cite{Ber}, we define the operation of
integration with respect to the anticommuting variables in a formal
way:
\begin{equation*}%\label{int_gr}
\intd d\,\psi_j=\intd d\,\overline{\psi}_j=0,\quad \intd
\psi_jd\,\psi_j=\intd \overline{\psi}_jd\,\overline{\psi}_j=1,
\end{equation*}
and then extend the definition to the general element of $\mathfrak{A}$ by
the linearity. A multiple integral is defined to be a repeated
integral. Assume also that the ``differentials'' $d\,\psi_j$ and
$d\,\overline{\psi}_k$ anticommute with each other and with the
variables $\psi_j$ and $\overline{\psi}_k$. Thus, according to the definition, if
$$
f(\psi_1,\ldots,\psi_k)=p_0+\sum\limits_{j_1=1}^k
p_{j_1}\psi_{j_1}+\sum\limits_{j_1<j_2}p_{j_1,j_2}\psi_{j_1}\psi_{j_2}+
\ldots+p_{1,2,\ldots,k}\psi_1\ldots\psi_k,
$$
then
\begin{equation*}%\label{int}
\intd f(\psi_1,\ldots,\psi_k)d\,\psi_k\ldots d\,\psi_1=p_{1,2,\ldots,k}.
\end{equation*}

   Let $A$ be an ordinary Hermitian matrix with positive real part. The following Gaussian
integral is well-known
\begin{equation}\label{G_C}
\intd \exp\Big\{-\sum\limits_{j,k=1}^nA_{jk}z_j\overline{z}_k\Big\} \prod\limits_{j=1}^n\dfrac{d\,\Re
z_jd\,\Im z_j}{\pi}=\dfrac{1}{\mdet A}.
\end{equation}
One of the important formulas of the Grassmann variables theory is the analog of this formula for the
Grassmann algebra (see \cite{Ber}):
\begin{equation}\label{G_Gr}
\int \exp\Big\{-\sum\limits_{j,k=1}^nA_{jk}\overline{\psi}_j\psi_k\Big\}
\prod\limits_{j=1}^nd\,\overline{\psi}_jd\,\psi_j=\mdet A,
\end{equation}
where $A$ now is any $n\times n$ matrix.

%For $n=1$ and $2$ this formula  follows immediately from (\ref{ex_12}) and (\ref{int}).
%
%
%Let
%\[
%F=\left(\begin{array}{cc}
%a&\rho\\
%\tau&b\\
%\end{array}\right),\quad
%\]
%where $a$ and $b>0$ are Hermitian complex $k\times k$ matrices and $\rho$, $\tau$ are $k\times k$ matrices
%of independent
%anticommuting Grassmann variables, and let
%\[
%\Phi=(\psi_{1},\ldots,\psi_{k},z_{1},\ldots,z_{k})^t,
%\]
%where $\{\psi_j\}_{j=1}^k$ are independent Grassmann variables and $\{z_j\}_{j=1}^k$ are complex variables.
%Combining (\ref{G_C}) -- (\ref{G_Gr}) we obtain (see \cite{Ber})
%\begin{equation}\label{G_comb}
%\intd \exp\{-\Phi^+F\Phi\}\prod\limits_{j=1}^kd\overline{\psi}_j\, d\psi_j \prod\limits_{j=1}^k\dfrac{\Re z_j\Im
%z_j}{\pi}=\hbox{sdet}\, F,
%\end{equation}
%where
%\begin{equation}\label{sdet}
% \hbox{sdet}\, F=\dfrac{\det\,(a-\rho\, b^{-1}\,\tau)}{\det\, b}.
%\end{equation}
We will also need the following bosonization formula
\begin{proposition}({\bf see \cite{F:02} })\label{p:supboz}\\
Let $F$ be some function that depends only on combinations 
\begin{align*}
%\bar{\psi}\psi&:=\Big\{\sum\limits_{\alpha=1}^n \bar{\psi}_{j\alpha}\psi_{k\alpha}\Big\}_{j,k=1}^p,\quad
%\bar{\psi}\phi:=\Big\{\sum\limits_{\alpha=1}^n \bar{\psi}_{j\alpha}\phi_{k\alpha}\Big\}_{j,k=1}^p,\\
%\bar{\phi}\psi&:=\Big\{\sum\limits_{\alpha=1}^n \bar{\phi}_{j\alpha}\psi_{k\alpha}\Big\}_{j,k=1}^p,\quad
\bar{\phi}\phi:=\Big\{\sum\limits_{\alpha=1}^W \bar{\phi}_{l\alpha}\phi_{s\alpha}\Big\}_{l,s=1}^2,
\end{align*}
and set
\[
%d\Psi=\prod\limits_{j=1}^p\prod\limits_{\alpha=1}^nd\bar{\psi}_{j\alpha} d\psi_{j\alpha},\quad
d\Phi=\prod\limits_{l=1}^2\prod\limits_{\alpha=1}^W d\Re \phi_{l\alpha} d\Im \phi_{l\alpha}.
\]
Assume also that $W\ge 2$. Then
\begin{equation*}
\int F\left(\bar{\phi}\phi \right)d\Phi=\dfrac{\pi^{2W-1}}{(W-1)!(W-2)!}\int F(B)\cdot \mdet^{W-2} B \,dB,
\end{equation*}
where
$B$ is a $2\times 2$ positive Hermitian matrix, and
\begin{align*}
dB&=\mathbf{1}_{B>0}dB_{11}dB_{22}d\Re B_{12} d\Im B_{12}.
\end{align*}
\end{proposition}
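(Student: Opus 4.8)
The final statement is Proposition \ref{p:supboz}, a bosonisation (complex Wishart) identity, for which only the citation \cite{F:02} is given. Its entire content is to identify the pushforward of flat Lebesgue measure $d\Phi$ on $\mathbb{C}^{2W}$ under the map sending the $W\times 2$ complex matrix $\Phi=(\phi_{l\alpha})$ to its $2\times 2$ Gram matrix $B=\bar\phi\phi=\Phi^*\Phi$. Since $F$ depends on $\Phi$ only through $B$, the plan is to write $\int F(\bar\phi\phi)\,d\Phi=\int_{B>0}F(B)\,\mu(B)\,dB$ and to show $\mu(B)=C_W\,\mdet(B)^{W-2}$ with the stated constant.

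First I would exploit symmetry. The matrix $B$ is invariant under the left action $\Phi\mapsto U\Phi$ of $U\in U(W)$ on the orbital index $\alpha$ (indeed $(U\Phi)^*(U\Phi)=\Phi^*\Phi$), and $d\Phi$ is invariant as well; for $W\ge 2$ a generic $B>0$ has rank $2$, and its fibre is a single orbit, the complex Stiefel manifold $V_2(\mathbb{C}^W)=U(W)/U(W-2)$. I would then introduce the polar decomposition $\Phi=V\,B^{1/2}$ with $V^*V=\mathbf{1}_2$ (existence: take $V=\Phi B^{-1/2}$) and change variables $\Phi\mapsto(V,B)$. The Jacobian factorises as $(\text{const})\cdot\mdet(B)^{W-2}\,dB\,d\sigma(V)$, where $d\sigma$ is the invariant measure on $V_2(\mathbb{C}^W)$; integrating out $V$ yields a $B$-independent constant and leaves $\mu(B)=C_W\,\mdet(B)^{W-2}$.

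There is a cleaner route that avoids re-deriving the Stiefel volume and fixes the constant for free: reduce to the complex Wishart theorem, which is precisely the above Jacobian computation performed once. That theorem reads $\int G(\Phi^*\Phi)\,\pi^{-2W}e^{-\Tr\Phi^*\Phi}\,d\Phi=\tilde\Gamma_2(W)^{-1}\int_{B>0}G(B)\,\mdet(B)^{W-2}e^{-\Tr B}\,dB$, with $\tilde\Gamma_2(W)=\pi\,(W-1)!\,(W-2)!$. Because $e^{-\Tr\Phi^*\Phi}=e^{-\Tr B}$ is itself a function of $B$ alone, I would apply this with $G(B)=F(B)e^{\Tr B}$ and cancel the exponentials, obtaining $\int F(\bar\phi\phi)\,d\Phi=\pi^{2W}\tilde\Gamma_2(W)^{-1}\int_{B>0}F(B)\,\mdet(B)^{W-2}\,dB$, which is exactly the asserted identity with $C_W=\pi^{2W-1}/\big((W-1)!\,(W-2)!\big)$.

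The main obstacle is the Jacobian of $\Phi\mapsto(V,B)$, i.e. proving the exponent is exactly $W-2$: this is the genuine content. A scaling check under $\Phi\mapsto\lambda\Phi$ only forces homogeneity $\mu(tB)=t^{2(W-2)}\mu(B)$, and $U(2)$-covariance $B\mapsto g^*Bg$ only forces $\mu$ to be a symmetric function of the eigenvalues of $B$; these are consistent with $\mdet(B)^{W-2}$ but do not single it out, so the differential-geometric Jacobian (equivalently, the Wishart theorem) is unavoidable. The constant is then free, e.g. by testing against $F(B)=e^{-\Tr B}$, whose left side is the elementary Gaussian integral $\pi^{2W}$. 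Finally, the hypothesis $W\ge 2$ enters exactly here: it guarantees $\mdet B>0$ generically, so the square-root parametrisation is non-degenerate and the admissible exponent $W-2\ge 0$ is obtained.
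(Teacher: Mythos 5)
The paper does not prove this proposition at all --- it is quoted from \cite{F:02} as a known bosonization identity --- so there is no in-paper argument to compare against. Your proof is correct: the identity is exactly the statement that the pushforward of Lebesgue measure on $\mathbb{C}^{W\times 2}$ under $\Phi\mapsto\Phi^*\Phi$ has density $C_W\,\mdet(B)^{W-2}$ on positive Hermitian $2\times2$ matrices, your reduction to the complex Wishart density with $\tilde\Gamma_2(W)=\pi\,(W-1)!\,(W-2)!$ gives the right constant, and the test $F(B)=e^{-\Tr B}$ (left side $=\pi^{2W}$) confirms it. One refinement: you assert that the exponent $W-2$ cannot be extracted from symmetry alone and that the differential-geometric Jacobian is ``unavoidable,'' but in fact the full right action of $GL_2(\mathbb{C})$ pins it down. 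Writing $\int F(\Phi^*\Phi)\,d\Phi=\int F(B)\,\mu(B)\,dB$ and substituting $\Phi\mapsto\Phi g$, one has $d\Phi\mapsto|\det g|^{2W}d\Phi$, $B\mapsto g^*Bg$ and $dB\mapsto|\det g|^{4}dB$, whence $\mu(g^*Bg)=|\det g|^{2W-4}\mu(B)$; since every $B>0$ equals $g^*g$ for some $g$ with $|\det g|^2=\mdet B$, this forces $\mu(B)=\mu(I)\,\mdet(B)^{W-2}$ with no polar-decomposition or Stiefel-volume computation, and the constant again follows from the Gaussian test function. The hypothesis $W\ge2$ is needed, as you say, both for the pushforward to be absolutely continuous (the Gram map is a submersion on the rank-two locus) and for the exponent to be nonnegative so that the right-hand side is locally integrable near $\mdet B=0$.
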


\bigskip

\end{document}